\DeclareMathOperator{\tr}{Tr}
\DeclareMathOperator{\hilbert}{\mathcal{H}}
\renewcommand{\Re}{\operatorname{Re}}
\DeclareMathOperator{\sgn}{sgn}
\title{Open Quantum Systems. \\An Introduction}
\author{\'Angel Rivas$^{1,2}$ and Susana F. Huelga$^1$}
\date{$^1$ Institut f\"{u}r Theoretische Physik, Universit\"{a}t Ulm, Ulm D-89069, Germany.\\
$^2$ Departamento de F\'{\i}sica Te\'orica I, Universidad Complutense, 28040 Madrid, Spain.}
\begin{document}
\maketitle
\begin{abstract}
We revise fundamental concepts in the dynamics of open quantum systems in the light of modern developments in the field. Our aim
is to present a unified approach to the quantum evolution of open systems that incorporates the concepts and methods traditionally employed by different communities.
We present in some
detail the mathematical structure and the general properties of the dynamical
maps underlying open system dynamics.
We also discuss the microscopic derivation of dynamical equations, including both Markovian and non-Markovian evolutions.
\end{abstract}

\tableofcontents

\section{Introduction}

To write an introduction to the dynamics of open quantum systems may seem at first a complicated, albeit perhaps unnecessary, task. On the one hand, the field is quite broad and encompasses many different topics which are covered by several books and reviews \cite{Louisell73,Agarwal,Haake,Daviesbook76,RevKoss,Spohn80,Kraus83,Leggett87,AlickiLendi87,Cohen92,Carmichael93,Bohn96,Blum96,Gardiner97,Carmichael99,Zwanzig01,Puri01,AlickiFanes01,Lindblad01,Fain02,BrPe02,Giovanna02,Brandes,Benatti03,Decoherence03,GardinerZoller04,quimicos04,Benatti05,Franceses06,quimicos06,Tanimura,DecoherenceBrain07,Weiss08,ModernCohen08,Hornberger09,Rivas09,Hu12}. On the other hand, the approach taken to study the dynamics of an open quantum system can be quite different depending of the specific system being analyzed; for instance, references in quantum optics (e.g. \cite{Louisell73,Cohen92,Carmichael93,Carmichael99,Puri01,GardinerZoller04}) use tools and approximations which are usually quite different from the studies focused on condensed matter or open systems of relevance in chemical physics (e.g. \cite{Leggett87,Blum96,quimicos04,quimicos06,Weiss08}). Moreover, very rigorous studies by the mathematical physics community were carried out in the 1970's, 
usually from a statistical physics point of view. More recently, and mainly motivated by the development of quantum information science, there has been a strong revival in the study of open many body systems aimed at further understanding the impact of decoherence phenomena on quantum information protocols \cite{NC00}.

This plethora of results and approaches can be quite confusing to novices in the field. Even within the more experienced scientists, discussions on fundamental properties (e.g. the concept of Markovianity) have often proven not to be straightforward when researches from different communities are involved.

Our motivation in writing this work has been to try to put black on white the results of several thoughts and discussions on this topic with scientists of different backgrounds during the last three years, as well as putting several concepts in the context of modern developments in the field. Given the extension of the topic and the availability of many excellent reviews covering different aspects of the physics of open quantum systems, we will focus here on one specific issue and that is embedding the theory as usually explained in the quantum optical literature within the mathematical core developed in the mathematical physics community. Finally, we make a connection with the current view point of the dynamics of open quantum systems in the light of quantum information science, given a different, complementary, perspective to the meaning of fundamental concepts such that complete positivity, dynamical semigroups, Markovianity, master equations, etc.

In order to do this, we will restrict the proof of important mathematical results to the case of finite dimensional systems; otherwise this study would be much more extensive and potentially interested people who are not expert in functional analysis or operators algebras, could not easily benefit from it. However, since the most important application of the theory of open quantum system nowadays is arguably focused on the field of controlled quantum technologies, including quantum computation, quantum communication, quantum metrology, etc., where the most useful systems are finite (or special cases of infinite systems), the formal loss of generality is justified in this case. The necessary background for this work is therefore reduced to familiarity with the basics of quantum mechanics and quantum information theory.

Furthermore, we have tried to incorporate some novel concepts and techniques developed in the last few years to address old problems such that those exemplified by the use of dynamical maps, Markovian and non-Markovian evolutions or microscopic derivations of reduced dynamics. We have organized the presentation as follows. In section \ref{sectionMates} we introduce the mathematical tools required in order to ensure a minimum degree of self-consistency in the study; a mathematical experienced reader may skip this part. In section \ref{sectionClosedSystems} we review very succinctly the concept of ``closed quantum system'' and discuss its time evolution. We explain in section \ref{sectionOpenSystems} the general features of dynamics of open quantum systems and its mathematical properties. Section \ref{sectionMarkov} deals with the mathematical structure of a very special important dynamics in open quantum systems, which are the Markovian quantum processes. The microscopic derivation of Markovian quantum dynamics is explained in section \ref{sectionMicrosMarkov}, whereas some methods for obtaining non-Markovian dynamics are briefly introduced in section \ref{sectionMicrosnonMarkov}. Finally, we conclude the review by making a summary of the main ideas we have discussed, and the topics that we have left out, in the section \ref{conclusions}.

\section{Mathematical tools} \label{sectionMates}
\newtheorem{theorem}{Theorem}[section]
\newtheorem{proposition}{Proposition}[section]
\newtheorem{corollary}{Corollary}[section]
\theoremstyle{definition}
\newtheorem{definition}{Definition}[section]

\subsection{Banach spaces, norms and linear operators} \label{sectionBanach}

In different parts within this work, we will make use of some properties of Banach spaces. For our interest it is sufficient to restrict our analysis to finite dimensional spaces. In what follows we revise a few basic concepts \cite{Kreyszig78,reedsimon1,Boccara}.

\begin{definition} \label{defBanach} A Banach space $\mathfrak{B}$ is a complete linear space with a norm $\|\cdot\|$.
\end{definition}

Recall that a space is complete if every Cauchy sequence converges in it. An example of Banach space is provided by the set of self-adjoint trace-class linear operators on a Hilbert space $\mathcal{H}$. This is, the self-adjoint operators whose trace norm is finite $\|\cdot\|_1$,
\[
\|A\|_1=\tr\sqrt{A^\dagger A}=\tr\sqrt{A^2}, \quad A\in\mathfrak{B}.
\]

Now we focus our attention onto possible linear transformations on a Banach space, that we will denote by $T(\mathfrak{B}):\mathfrak{B}\mapsto\mathfrak{B}$. The set of all of them forms the dual space $\mathfrak{B}^\ast$.

\begin{proposition} The dual space of $\mathfrak{B}$, $\mathfrak{B}^\ast$ is a (finite) Banach space with the induced norm
\[
\|T\|=\sup_{x\in\mathfrak{B}, x\neq0}\frac{\|T(x)\|}{\|x\|} =\sup_{x\in\mathfrak{B}, \|x\|=1} \|T(x)\|, \quad T\in\mathfrak{B}^\ast.
\]
\end{proposition}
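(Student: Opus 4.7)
The plan is to verify that $\|\cdot\|$ satisfies the norm axioms, reconcile the two suprema, and then obtain completeness. Throughout I use heavily the fact that $\mathfrak{B}$ is finite-dimensional.

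First I would argue that the quantity $\|T\|$ is a finite nonnegative real number for every $T\in\mathfrak{B}^\ast$. The unit sphere $S=\{x\in\mathfrak{B}:\|x\|=1\}$ is compact because $\mathfrak{B}$ is finite-dimensional (Heine--Borel, using that all norms on a finite-dimensional space are equivalent), and the map $x\mapsto\|T(x)\|$ is continuous since every linear map on a finite-dimensional normed space is bounded. Hence the supremum is actually a maximum, so $\|T\|\in[0,\infty)$. The equivalence of the two defining expressions is just homogeneity: for any $x\neq0$, setting $\hat{x}=x/\|x\|$ gives $\|T(x)\|/\|x\|=\|T(\hat{x})\|$, and $\hat{x}$ ranges exactly over $S$ as $x$ ranges over $\mathfrak{B}\setminus\{0\}$.

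Next I would check the three norm axioms. Positive definiteness: if $\|T\|=0$ then $T(x)=0$ for every $x\in S$, and by linearity $T\equiv 0$. Absolute homogeneity $\|\alpha T\|=|\alpha|\,\|T\|$ follows from $\|\alpha T(x)\|=|\alpha|\,\|T(x)\|$ and pulling the constant outside the supremum. The triangle inequality follows from the pointwise estimate $\|(T+S)(x)\|\leq\|T(x)\|+\|S(x)\|$ together with $\sup(f+g)\leq\sup f+\sup g$.

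For completeness there are two clean routes. The abstract one is to note that $\dim\mathfrak{B}^\ast=(\dim\mathfrak{B})^2<\infty$, and any finite-dimensional normed space is automatically Banach. The direct one is: given $\{T_n\}$ Cauchy in the operator norm, the estimate $\|T_n(x)-T_m(x)\|\leq\|T_n-T_m\|\,\|x\|$ shows $\{T_n(x)\}$ is Cauchy in $\mathfrak{B}$, hence converges to some $T(x)$ by completeness of $\mathfrak{B}$; linearity passes to the limit, and a standard $\varepsilon/3$ argument uniform on $S$ shows $T\in\mathfrak{B}^\ast$ and $T_n\to T$ in operator norm.

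The potential obstacle is essentially nonexistent in this finite-dimensional setting; the only point that would become delicate in infinite dimensions is the automatic boundedness of linear maps and the preservation of boundedness under pointwise limits, both of which are free here. Hence the proof reduces to bookkeeping of the three norm axioms plus an appeal to finite-dimensionality for completeness.
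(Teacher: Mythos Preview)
Your proof is correct and is precisely the standard elementary argument. The paper in fact omits the proof entirely, stating only that ``the proof is basic''; your verification of the norm axioms, the homogeneity identification of the two suprema, and the appeal to finite-dimensionality for completeness is exactly the basic argument being alluded to.
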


\noindent\textit{Proof}: The proof is basic.

Apart from the triangle inequality, the induced norm fulfills some other useful inequalities. For example, immediately from its definition one obtains that
\begin{equation}\label{ineqinducednorm}
\|T(x)\|\leq\|T\|\|x\|, \quad \forall x\in \mathfrak{B}.
\end{equation}
Moreover,
\begin{proposition} Let $T_1$ and $T_2$ be two linear operators in $\mathfrak{B}^\ast$, then
\begin{equation}\label{submultiplicative}
\|T_1T_2\|\leq\|T_1\|\|T_2\|.
\end{equation}
\end{proposition}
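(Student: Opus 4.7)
The plan is to deduce submultiplicativity directly from the bound $\|T(x)\| \leq \|T\|\|x\|$ stated in (\ref{ineqinducednorm}), applied twice. This inequality is the workhorse: it says that the induced norm controls how much any operator in $\mathfrak{B}^\ast$ can stretch a vector, so composing two operators can stretch at most by the product of their individual stretch bounds.

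Concretely, I would fix an arbitrary $x \in \mathfrak{B}$ and consider $(T_1 T_2)(x) = T_1\bigl(T_2(x)\bigr)$. First I would apply (\ref{ineqinducednorm}) to $T_1$ with input vector $T_2(x) \in \mathfrak{B}$, yielding $\|T_1(T_2(x))\| \leq \|T_1\|\,\|T_2(x)\|$. Then I would apply (\ref{ineqinducednorm}) a second time, now to $T_2$ with input $x$, obtaining $\|T_2(x)\| \leq \|T_2\|\,\|x\|$. Chaining these two estimates gives $\|(T_1T_2)(x)\| \leq \|T_1\|\|T_2\|\|x\|$ for every $x \in \mathfrak{B}$.

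To finish, I would divide by $\|x\|$ for $x \neq 0$ and take the supremum according to the definition of the induced norm given in the previous proposition, which yields $\|T_1T_2\| \leq \|T_1\|\|T_2\|$. Equivalently one could restrict to unit vectors $\|x\|=1$ and take the supremum there. No step in this argument is delicate: the only thing to check is that $T_1T_2$ is again a linear operator on $\mathfrak{B}$, hence an element of $\mathfrak{B}^\ast$ whose induced norm is well defined, which is immediate since the composition of linear maps is linear. There is no real obstacle here; the proof is a two-line consequence of (\ref{ineqinducednorm}).
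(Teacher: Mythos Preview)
Your proof is correct and essentially identical to the paper's: both apply inequality~(\ref{ineqinducednorm}) to bound $\|T_1(T_2(x))\|\leq \|T_1\|\,\|T_2(x)\|$ and then take the supremum over unit vectors. The paper just compresses the two applications of~(\ref{ineqinducednorm}) into a single displayed chain.
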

\begin{proof} This is just a consequence of the inequality (\ref{ineqinducednorm}); indeed,
\[
\|T_1T_2\|=\sup_{x\in\mathfrak{B}, \|x\|=1} \|T_1 T_2(x)\|\leq\sup_{x\in\mathfrak{B}, \|x\|=1} \|T_1\| \|T_2(x)\|=\|T_1\| \|T_2\|.
\]
\end{proof}

The next concept will appear quite often in the forthcoming sections.
\begin{definition} A linear operator $T$ on a Banach space $\mathfrak{B}$, is said to be a contraction if
\[
\|T(x)\|\leq\|x\|, \quad \forall x\in \mathfrak{B},
\]
this is $\|T\|\leq1$.
\end{definition}

Finally, we briefly revise some additional basic concepts that will be featured in subsequent sections.

\begin{definition} For any linear operator $T$ on a (finite) Banach space $\mathfrak{B}$, its resolvent operator is defined as
\[
R_T(\lambda)=(\lambda\mathds{1}-T)^{-1}, \lambda\in\mathds{C}.
\]
It is said that $\lambda\in\mathds{C}$ belongs to the (point) spectrum of $T$, $\lambda\in\sigma(T)$, if $R_T(\lambda)$ does not exist; otherwise $\lambda\in\mathds{C}$ belongs to the resolvent set of $T$, $\lambda\in\varrho(T)$. As a result $\sigma(T)$ and $\varrho(T)$ are said to be complementary sets.

The numbers $\lambda\in\sigma(T)$ are called \emph{eigenvalues} of $T$, and each element in the kernel $x\in\ker(\lambda\mathds{1}-T)$, which is obviously different from $\{0\}$ as $(\lambda\mathds{1}-T)^{-1}$ does not exist, is called the \emph{eigenvector} associated with the eigenvalue $\lambda$.
\end{definition}

\subsection{Exponential of an operator}
The operation of exponentiation of an operator is of special importance in relation to linear differential equations.
\begin{definition} The exponential of a linear operator $L$ on a finite Banach space is defined as
\begin{equation}\label{exponential}
e^L=\sum_{n=0}^\infty\frac{L^n}{n!}.
\end{equation}
\end{definition}
The following proposition shows that this definition is indeed meaningful.
\begin{proposition} The series defining $e^L$ is absolutely convergent.
\end{proposition}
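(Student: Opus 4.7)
The plan is to reduce the statement to the convergence of a familiar real series via the submultiplicativity of the induced norm. Absolute convergence in this Banach space means that the series $\sum_{n=0}^\infty \|L^n/n!\|$ converges in $\mathds{R}$, so I would aim to bound this numerical series by something I already know converges.

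First I would apply the submultiplicative inequality (\ref{submultiplicative}) inductively to get $\|L^n\| \le \|L\|^n$ for every $n \ge 0$, with the convention $\|L^0\|=\|\mathds{1}\|=1$. Dividing by $n!$ yields the termwise bound
\[
\left\|\frac{L^n}{n!}\right\| \le \frac{\|L\|^n}{n!}.
\]
Next I would invoke the comparison test: the nonnegative real series $\sum_{n=0}^\infty \|L\|^n/n!$ converges (to $e^{\|L\|}$) because $\|L\|$ is a finite nonnegative number, so the series of norms $\sum_{n=0}^\infty \|L^n/n!\|$ is majorized by a convergent series and hence converges. This is precisely absolute convergence of (\ref{exponential}).

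Finally, to conclude that the partial sums of $e^L$ themselves converge in $\mathfrak{B}^\ast$, I would note that $\mathfrak{B}^\ast$ is a (finite-dimensional) Banach space, and in a Banach space absolute convergence implies convergence: the partial sums form a Cauchy sequence, since for $m > n$ one has $\|\sum_{k=n+1}^m L^k/k!\| \le \sum_{k=n+1}^m \|L\|^k/k!$, which is the tail of a convergent real series and hence arbitrarily small. Completeness then gives a limit in $\mathfrak{B}^\ast$, so the definition is meaningful.

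There is no real obstacle here; the only delicate point is making sure to use the submultiplicativity proved in the previous proposition rather than a naive termwise argument, and to remember that the completeness of $\mathfrak{B}^\ast$ (needed to pass from absolute convergence of norms to convergence of the operator series) is guaranteed by the first proposition of this subsection.
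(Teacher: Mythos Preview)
Your proof is correct and follows essentially the same route as the paper: bound $\|L^n/n!\|$ by $\|L\|^n/n!$ via submultiplicativity and compare with the convergent numerical series $\sum_n \|L\|^n/n! = e^{\|L\|}$. If anything, you are slightly more careful than the paper in separating the absolute-convergence step from the subsequent Cauchy/completeness argument that actually produces a limit in $\mathfrak{B}^\ast$.
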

\begin{proof} The following chain of inequalities holds
\[
\Vert e^L\Vert=\left\Vert \sum_{n=0}^\infty\frac{L^n}{n!}\right\Vert\leq\sum_{n=0}^\infty\frac{\Vert L^n\Vert}{n!}\leq\sum_{n=0}^\infty\frac{\Vert L\Vert^n}{n!}=e^{\Vert L\Vert}.
\]
Since $e^{\Vert L\Vert}$ is a numerical series which is convergent, the norm of $e^L$ is upper bounded and therefore the series converges absolutely.

\end{proof}

\begin{proposition} If $L_1$ and $L_2$ commute with each other, then $e^{(L_1+L_2)}=e^{L_1}e^{L_2}=e^{L_2}e^{L_1}$.
\end{proposition}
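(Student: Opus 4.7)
The plan is to expand both sides as power series and reduce the identity to the classical binomial theorem, which holds in any ring whose two elements commute. The key technical input is absolute convergence of the exponential series, established in the previous proposition, which will allow a free rearrangement of a double sum into a Cauchy product.

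First I would write
\[
e^{L_1}e^{L_2}=\left(\sum_{n=0}^\infty\frac{L_1^n}{n!}\right)\left(\sum_{m=0}^\infty\frac{L_2^m}{m!}\right)
\]
and, invoking absolute convergence of both factors (together with submultiplicativity of the operator norm, \eqref{submultiplicative}), reorganise the product as a Cauchy product:
\[
e^{L_1}e^{L_2}=\sum_{k=0}^\infty\sum_{n=0}^k\frac{L_1^n L_2^{k-n}}{n!\,(k-n)!}=\sum_{k=0}^\infty\frac{1}{k!}\sum_{n=0}^k\binom{k}{n}L_1^n L_2^{k-n}.
\]
At this stage the commutation hypothesis $[L_1,L_2]=0$ enters in a crucial way: it licenses the binomial formula $\sum_{n=0}^k\binom{k}{n}L_1^n L_2^{k-n}=(L_1+L_2)^k$, which would otherwise fail because one could not freely permute factors of $L_1$ and $L_2$ when expanding $(L_1+L_2)^k$.

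Substituting this identity back gives $e^{L_1}e^{L_2}=\sum_{k=0}^\infty\frac{(L_1+L_2)^k}{k!}=e^{L_1+L_2}$, which is the first equality. The second equality $e^{L_2}e^{L_1}=e^{L_1+L_2}$ follows by the identical argument with the roles of $L_1$ and $L_2$ interchanged, since $[L_2,L_1]=0$ as well; alternatively it is immediate from what has already been proved, because both products equal $e^{L_1+L_2}$.

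The only real obstacle is the rearrangement of the double series into the Cauchy product. In a finite-dimensional Banach space this is painless: the partial sums of $\sum\|L_1^n\|/n!$ and $\sum\|L_2^m\|/m!$ are bounded, and the double series $\sum_{n,m}\|L_1^n L_2^m\|/(n!\,m!)$ is dominated by $e^{\|L_1\|}e^{\|L_2\|}<\infty$ via \eqref{submultiplicative}, so standard results on absolutely convergent series in a Banach space justify any reordering, in particular the diagonal summation used above. Everything else is algebraic, and hinges entirely on the commutativity hypothesis.
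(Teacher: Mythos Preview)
Your proof is correct and is essentially the same as the paper's: both reduce the identity to the binomial formula (valid because $L_1$ and $L_2$ commute) together with a rearrangement of an absolutely convergent double series. The only cosmetic difference is the direction of the computation---you start from $e^{L_1}e^{L_2}$ and arrive at $e^{L_1+L_2}$ via the Cauchy product, whereas the paper starts from $e^{L_1+L_2}$, applies the binomial expansion, and then factors the resulting double sum into the product; your version is slightly more careful in spelling out why the rearrangement is legitimate.
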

\begin{proof} By using the binomial formula as in the case of a numerical series, we obtain
\begin{multline*}
e^{(L_1+L_2)}=\sum_{n=0}^\infty\frac{(L_1+L_2)^n}{n!}=\sum_{n=0}^\infty \frac{1}{n!} \sum_{k=0}^n\binom{n}{k}L_1^kL_2^{n-k}\\
=\sum_{n=0}^\infty \sum_{k=0}^n\frac{1}{k!(n-k)!}L_1^kL_2^{n-k}=\sum_{p=0}^\infty \sum_{q=0}^\infty\frac{1}{p!q!}L_1^pL_2^q=e^{L_1}e^{L_2}=e^{L_2}e^{L_1}.
\end{multline*}
\end{proof}

The above result does not hold when $L_1$ and $L_2$ are not commuting operators, in that case there is a formula which is sometimes useful.
\begin{theorem}[Lie-Trotter product formula] Let $L_1$ and $L_2$ be linear operators on a finite Banach space, then:
\[
e^{(L_1+L_2)}=\lim_{n\rightarrow\infty}\left(e^{\frac{L_1}{n}}e^{\frac{L_2}{n}}\right)^n.
\]
\end{theorem}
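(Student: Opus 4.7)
The plan is to compare the two sequences $S_n := e^{(L_1+L_2)/n}$ and $T_n := e^{L_1/n}e^{L_2/n}$ and show first that they agree to first order in $1/n$, then lift this single-step estimate to an $n$-fold product estimate via a telescoping identity. Note that since the operator $(L_1+L_2)/n$ commutes with itself, the previous proposition gives $S_n^n = e^{L_1+L_2}$, so it suffices to prove $\|S_n^n - T_n^n\|\to 0$.

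First I would expand both factors using the defining series (\ref{exponential}) up to first order:
\[
S_n = \mathds{1} + \frac{L_1+L_2}{n} + R_n^{(S)},\qquad T_n = \mathds{1} + \frac{L_1+L_2}{n} + R_n^{(T)},
\]
where $R_n^{(S)}$ and $R_n^{(T)}$ collect all terms of total degree $\geq 2$ in $L_1/n, L_2/n$. Using the triangle inequality together with the submultiplicativity (\ref{submultiplicative}), each such remainder is bounded by a convergent numerical tail, yielding estimates of the form $\|R_n^{(S)}\|, \|R_n^{(T)}\| \leq C/n^2$ for some constant $C$ depending only on $\|L_1\|$ and $\|L_2\|$ (and independent of $n$ for $n$ large). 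Subtracting gives the key one-step bound $\|S_n - T_n\| \leq 2C/n^2$.

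Next I would use the standard telescoping identity
\[
S_n^n - T_n^n = \sum_{k=0}^{n-1} S_n^{\,n-1-k}\,(S_n - T_n)\,T_n^{\,k},
\]
which is verified by induction on $n$. Applying (\ref{submultiplicative}) term by term and using the bound $\|e^L\|\leq e^{\|L\|}$ established in the proof of convergence of the exponential series, one gets $\|S_n\|\leq e^{M/n}$ and $\|T_n\|\leq e^{M/n}$ with $M := \|L_1\|+\|L_2\|$. Therefore
\[
\|S_n^n - T_n^n\| \;\leq\; \sum_{k=0}^{n-1} e^{M(n-1-k)/n}\,\frac{2C}{n^2}\,e^{Mk/n} \;=\; n\cdot e^{M(n-1)/n}\cdot\frac{2C}{n^2} \;\leq\; \frac{2C\,e^M}{n},
\]
which tends to zero. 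Combined with $S_n^n = e^{L_1+L_2}$ this proves the formula.

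The only genuinely delicate step is the first-order expansion bound: one must be careful that the remainder $R_n^{(T)}$, which mixes terms like $L_1^j L_2^k/n^{j+k}$ for all $j+k\geq 2$, is controlled uniformly in $n$. This is nothing more than a double geometric-type estimate, but it is where a sloppy argument can go wrong; everything afterwards is routine manipulation with the submultiplicative norm.
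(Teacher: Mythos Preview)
Your argument is correct and follows essentially the same route as the paper: the same telescoping identity $S_n^n-T_n^n=\sum_{k=0}^{n-1}S_n^{\,n-1-k}(S_n-T_n)T_n^{\,k}$ (the paper writes it as $A^n-B^n=\sum_{k=0}^{n-1}A^k(A-B)B^{n-k-1}$ and proves it by inserting and cancelling intermediate products), the same uniform exponential bound on the powers, and the same $O(1/n^2)$ estimate for the single-step difference obtained from the series expansions. The only cosmetic difference is that you package the one-step bound as $\|S_n-T_n\|\le 2C/n^2$ before summing, whereas the paper shows directly that $n\|S_n-T_n\|\to 0$; your bookkeeping of constants is also marginally sharper (yielding $e^{M}$ rather than $e^{2M}$), but the strategy is identical.
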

\begin{proof} We present a two-step proof inspired by \cite{Galindoproblemas}. First let us prove that for any two linear operators $A$ and $B$ the following identity holds:
\begin{equation}\label{trotter1}
A^n-B^n=\sum_{k=0}^{n-1}A^k(A-B)B^{n-k-1}.
\end{equation}
We observe that, in the end, we are only adding and subtracting the products of powers of $A$ and $B$ in the right hand side of (\ref{trotter1}), so that
\begin{multline*}
A^{n}-B^{n}=A^{n}+(A^{n-1}B-A^{n-1}B)-B^{n}\\
=A^{n}+(A^{n-1}B-A^{n-1}B+A^{n-2}B^2-A^{n-2}B^2)-B^{n}=\\
=A^{n}+(A^{n-1}B-A^{n-1}B+A^{n-2}B^2-A^{n-2}B^2+\ldots \\
\dots+AB^{n-1}-AB^{n-1})-B^{n}.
\end{multline*}
Now we can add terms with positive sign to find:
\[
A^{n}+A^{n-1}B+\ldots+AB^{n-1}=\sum_{k=0}^{n-1}A^{k+1}B^{n-k-1},
\]
and all terms with negative sign to obtain:
\[
-A^{n-1}B-\ldots -AB^{n-1}-B^{n}=-\sum_{k=0}^{n-1}A^{k}B^{n-k}.
\]
Finally, by adding both contributions one gets that
\[
A^{n}-B^{n}=\sum_{k=0}^{n-1}A^{k+1}B^{n-k-1}-A^{k}B^{n-k}=\sum_{k=0}^{n-1}A^k(A-B)B^{n-k-1},
\]
as we wanted to prove.

Now, we denote
\[
X_n=e^{(L_1+L_2)/n}, \quad Y_n=e^{L_1/n}e^{L_2/n},
\]
and take $A=X_n$ and $B=Y_n$ in formula (\ref{trotter1}),
\[
X_n^n-Y_n^n=\sum_{k=0}^{n-1}X_n^k(X_n-Y_n)Y_n^{n-k-1}.
\]
On the one hand,
\begin{multline*}
\Vert X_n^k\Vert=\Vert \left(e^{(L_1+L_2)/n}\right)^k\Vert \leq \Vert e^{(L_1+L_2)/n}\Vert^k \leq e^{\Vert L_1+L_2\Vert k/n} \\
\leq e^{(\Vert L_1\Vert+\Vert L_2\Vert)k/n}\leq e^{\Vert L_1\Vert+\Vert L_2\Vert}, \quad \forall k\leq n.
\end{multline*}
Similarly, one proves that
\[
\Vert Y_n^k\Vert\leq e^{\Vert L_1\Vert+\Vert L_2\Vert}, \quad \forall k\leq n.
\]
So we get
\begin{eqnarray*}
\Vert X_n^n-Y_n^n\Vert&\leq&\sum_{k=0}^{n-1}\Vert X_n^k(X_n-Y_n)Y_n^{n-k-1}\Vert\leq\sum_{k=0}^{n-1}\Vert X_n^k\Vert \Vert X_n-Y_n\Vert \Vert Y_n^{n-k-1}\Vert\\
&\leq&\sum_{k=0}^{n-1}\Vert X_n-Y_n\Vert e^{2(\Vert L_1\Vert+\Vert L_2\Vert)}=n\Vert X_n-Y_n\Vert e^{2(\Vert L_1\Vert+\Vert L_2\Vert)}.
\end{eqnarray*}
On the other hand, by expanding the exponential we find that
\begin{multline*}
\lim_{n\rightarrow\infty}n\Vert X_n-Y_n\Vert=\lim_{n\rightarrow\infty}n\left\Vert \sum_{k=0}^\infty\frac{1}{k!}\left(\frac{L_1+L_2}{n}\right)^k-\sum_{k,j=0}^\infty \frac{1}{k!j!}\left(\frac{L_1}{n}\right)^k\left(\frac{L_2}{n}\right)^j\right\Vert\\
\leq\lim_{n\rightarrow\infty}n\left\Vert\frac{1}{2n^2}[L_2,L_1]+\mathcal{O}\left(\frac{1}{n^3}\right)\right\Vert=0.
\end{multline*}
Therefore we finally conclude that:
\[
\lim_{n\rightarrow\infty}\left\Vert e^{L_1+L_2}-\left(e^{L_1/n}e^{L_2/n}\right)^n\right\Vert=\lim_{n\rightarrow\infty}\Vert X_n^n-Y_n^n\Vert=0.
\]
\end{proof}

\noindent Using a similar procedure one can also show that for a sum of $N$ operators, the following identity holds:
\begin{equation}\label{Lie-TrotterGeneral}
e^{\sum_{k=1}^NL_k}=\lim_{n\rightarrow\infty}\left(\prod_{k=1}^N e^{\frac{L_k}{n}}\right)^n.
\end{equation}

\subsection{Semigroups of operators}

\begin{definition}[Semigroup] A family of linear operators $T_t$ $(t\geq0)$ on a finite Banach Space forms a \emph{one-parameter semigroup} if
\begin{enumerate}
\item $T_tT_s=T_{t+s}$,$\quad \forall t,s$
\item $T_0=\mathds{1}$.
\end{enumerate}
\end{definition}

\begin{definition}[Uniformly Continuous Semigroup] A one-parameter semigroup $T_t$ is said to be \emph{uniformly continuous} if the map
\[
t\mapsto T_t
\]
is continuous, this is, $\lim_{t\rightarrow s}\|T_t-T_s\|=0$, $\forall s$. This kind of continuity is normally referred to as ``continuity in the uniform operator topology'' \cite{Kreyszig78}, and it is sufficient to analyze the finite dimensional case, which is the focus of this work.
\end{definition}

\begin{theorem} If $T_t$ forms a uniformly continuous one-parameter semigroup, then the map $t\mapsto T_t$
is differentiable, and the derivative of $T_t$ is given by
\[
\frac{dT_t}{dt}=LT_t,
\]
with $L=\frac{dT_t}{dt}|_{t=0}$.
\end{theorem}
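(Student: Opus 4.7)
The plan is to first establish that the derivative exists at $t=0$, thereby \emph{defining} $L$, and then to leverage the semigroup property to extend differentiability to every $t>0$ with the stated formula. Since we are in finite dimensions, the map $s\mapsto T_s$ is continuous in the operator norm by hypothesis, so Riemann integrals of operator-valued continuous functions are at our disposal.

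To construct $L$, I would introduce the antiderivative $V(t):=\int_0^t T_s\,ds$. Uniform continuity at $s=0$ combined with $T_0=\mathds{1}$ gives $\|V(t)/t-\mathds{1}\|\to 0$ as $t\to 0^+$, so a Neumann series argument shows $V(t)$ is invertible for all sufficiently small $t>0$. The key computation uses the semigroup law $T_{s+h}=T_hT_s$ and a change of variables:
\[
(T_h-\mathds{1})V(t)=\int_0^t (T_{s+h}-T_s)\,ds=\int_t^{t+h}T_s\,ds-\int_0^h T_s\,ds.
\]
Dividing by $h$ and letting $h\to 0^+$, continuity of the integrand yields $\tfrac{T_h-\mathds{1}}{h}V(t)\to T_t-\mathds{1}$. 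Fixing any small $t$ for which $V(t)$ is invertible, we conclude that
\[
L:=\lim_{h\to 0^+}\frac{T_h-\mathds{1}}{h}=(T_t-\mathds{1})V(t)^{-1}
\]
exists as an element of $\mathfrak{B}^\ast$, which is the promised $\frac{dT_t}{dt}\big|_{t=0}$.

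The extension to $t>0$ is then routine. For $h>0$, the semigroup law gives $\frac{T_{t+h}-T_t}{h}=\frac{T_h-\mathds{1}}{h}T_t\to LT_t$, handling the right derivative. For the left derivative, the factorization $T_t=T_{t-h}T_h$ (valid for $0<h<t$) yields $\frac{T_t-T_{t-h}}{h}=T_{t-h}\frac{T_h-\mathds{1}}{h}$; by uniform continuity $T_{t-h}\to T_t$, so this tends to $T_tL$. Since $T_h$ and $T_t$ commute by the semigroup property, passing to the limit in $\frac{T_h-\mathds{1}}{h}T_t=T_t\frac{T_h-\mathds{1}}{h}$ gives $LT_t=T_tL$, reconciling left and right derivatives.

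The main obstacle is the existence of $L$ itself: uniform continuity alone does not supply a bound on $\|T_h-\mathds{1}\|/h$, and a direct attempt to extract the limit may fail. The integral trick with $V(t)$ is the clever step, transferring the difficulty onto the product $(T_h-\mathds{1})V(t)$, whose limit is transparent by continuity of the integrand. Everything beyond this is standard semigroup algebra.
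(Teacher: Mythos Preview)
Your proof is correct and uses essentially the same integral device as the paper: both introduce $V(t)=\int_0^t T_s\,ds$, invoke its invertibility for small $t$ via $V(t)/t\to\mathds{1}$, and exploit the semigroup identity relating $T_hV(t)$ to shifts of $V$. The paper differs only in organization, writing $T_t=V(t_0)^{-1}[V(t+t_0)-V(t)]$ to obtain differentiability for all $t$ at once before identifying the derivative, whereas you first extract $L$ at $t=0$ and then propagate; your explicit treatment of the left derivative and of the commutation $LT_t=T_tL$ is in fact more careful than the paper's.
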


\begin{proof}
Since $T_t$ is uniformly continuous on $t$, the function $V(t)$ defined by
\[
V(t)=\int_0^tT_sds, \quad t\geq0,
\]
is differentiable with $\frac{dV(t)}{dt}=T_t$ (to justify this, it is enough to consider the well-known arguments used for the case of $\mathds{R}$, see for example \cite{Spivak}, but substituting the absolute value by the concept of norm). In particular,
\[
\lim_{t\rightarrow0}\frac{V(t)}{t}=\lim_{t\rightarrow0}\frac{V(t)-V(0)}{t}=\left.\frac{dV(t)}{dt}\right|_{t=0}=T_0=\mathds{1},
\]
this implies that there exist some $t_0>0$ small enough such that $V(t_0)$ is invertible. Then,
\begin{eqnarray*}
T_t &=& V^{-1}(t_0) V (t_0) T_t=V^{-1}(t_0)\int_0^{t_0} T_sT_tds=V^{-1}(t_0)\int_0^{t_0} T_{t+s}ds\\
&=&V^{-1}(t_0)\int_t^{t+t_0} T_{s}ds=V^{-1}(t_0)\left[V(t+t_0)-V(t)\right],
\end{eqnarray*}
since the difference of differentiable functions is differentiable, $T_t$ is differentiable. Moreover, its derivative is
\[
\frac{dT_t}{dt}=\lim_{h\rightarrow0}\frac{T_{t+h}-T_t}{h}=\lim_{h\rightarrow0}\frac{T_{h}-\mathds{1}}{h}T_t=LT_t.
\]
\end{proof}

\begin{theorem} \label{theoExp} The exponential operator $T(t)=e^{Lt}$ is the only solution of the differential problem
\begin{equation*}\left\{\begin{array}{l}
\frac{dT(t)}{dt}=LT(t),\quad t\in\mathds{R}^+,\\
T(0)=\mathds{1}.
\end{array}\right.
\end{equation*}
\end{theorem}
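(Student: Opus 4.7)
The plan is to prove two things: (i) that $T(t)=e^{Lt}$ is \emph{a} solution of the differential problem, and (ii) that it is the \emph{only} solution. Both parts lean on results already established earlier in the excerpt, in particular the absolute convergence of the exponential series, the submultiplicativity of the induced norm, and the identity $e^{A+B}=e^{A}e^{B}$ when $A$ and $B$ commute.

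For existence, I first check the initial condition: from the defining series, $T(0)=e^{0}=\mathds{1}$. To compute the derivative, I write, for $h\neq 0$,
\[
\frac{T(t+h)-T(t)}{h}=\frac{e^{L(t+h)}-e^{Lt}}{h}=\frac{e^{Lh}-\mathds{1}}{h}\,e^{Lt},
\]
where I used that $Lt$ and $Lh$ commute, so the preceding proposition gives $e^{L(t+h)}=e^{Lh}e^{Lt}$. Next I unpack the series:
\[
\frac{e^{Lh}-\mathds{1}}{h}-L=\sum_{n=2}^{\infty}\frac{h^{n-1}L^{n}}{n!},
\]
and bound its norm by $\sum_{n=2}^{\infty}|h|^{n-1}\|L\|^{n}/n! \le |h|\,\|L\|^{2}e^{|h|\|L\|}$, which tends to $0$ as $h\to 0$. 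Multiplying by the fixed operator $e^{Lt}$ (and using submultiplicativity of the induced norm) yields $\frac{dT}{dt}=L\,e^{Lt}=LT(t)$.

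For uniqueness, suppose $S(t)$ is any differentiable solution with $S(0)=\mathds{1}$ and $S'(t)=LS(t)$. Define $F(t)=e^{-Lt}S(t)$. Since $-Lt$ commutes with itself, the existence argument above also gives $\frac{d}{dt}e^{-Lt}=-L\,e^{-Lt}=e^{-Lt}(-L)$, so by the product rule
\[
F'(t)=-L\,e^{-Lt}S(t)+e^{-Lt}LS(t)=e^{-Lt}(-L+L)S(t)=0,
\]
where crucially $L$ commutes with $e^{-Lt}$ (every partial sum of the series for $e^{-Lt}$ is a polynomial in $L$). Integrating component-wise on the finite dimensional Banach space, $F$ is constant, equal to $F(0)=\mathds{1}\cdot\mathds{1}=\mathds{1}$. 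Hence $S(t)=e^{Lt}F(t)=e^{Lt}$, using that $e^{Lt}e^{-Lt}=e^{0}=\mathds{1}$ (again by the commutativity proposition).

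The only step that requires real care is swapping the limit with the infinite series when differentiating $e^{Lh}$ at $h=0$; this is where the explicit norm estimate above is needed, and it is the natural operator analogue of the scalar argument. Everything else is a routine application of the product rule together with the algebraic identities already proved for exponentials of commuting operators.
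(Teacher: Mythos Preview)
Your proof is correct and follows essentially the same approach as the paper. The existence argument is identical (factor out $e^{Lt}$ using commutativity, then expand $e^{Lh}$ to show the difference quotient converges to $L$), and your uniqueness argument via $F(t)=e^{-Lt}S(t)$ is the same Wronskian-type trick the paper uses with $Q(s)=T(s)S(t-s)$ for fixed $t$; both reduce to differentiating a product, invoking that $L$ commutes with its own exponential, and concluding the auxiliary function is constant.
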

\begin{proof} From the definition of exponential given by Eq. (\ref{exponential}), it is clear that $T(0)=\mathds{1}$, and because of the commutativity of the exponents we find
\[
\frac{de^{Lt}}{dt}=\lim_{h\rightarrow0}\frac{e^{L(t+h)}-e^{Lt}}{h}=\left(\lim_{h\rightarrow0}\frac{e^{Lh}-\mathds{1}}{h}\right)e^{Lt}.
\]
By expanding the exponential one obtains
\[
\lim_{h\rightarrow0}\frac{e^{Lh}-\mathds{1}}{h}=\lim_{h\rightarrow0}\frac{\mathds{1}+Lh+\mathcal{O}(h^2)-\mathds{1}}{h}=L.
\]
At this point, it just remains to prove uniqueness. Let us consider another function $S(t)$ and assume that it also satisfies the differential problem. Then we define
\[
Q(s)=T(s)S(t-s), \quad\text{ for }t\geq s\geq0,
\]
for some fixed $t>0$, so $Q(s)$ is differentiable with derivative
\[
\frac{dQ(s)}{ds}=LT(s)S(t-s)-T(t)LS(t-s)=LT(s)S(t-s)-LT(t)S(t-s)=0.
\]
$Q(s)$ is therefore a constant function. In particular $Q(s)=Q(0)$ for any $t\geq s\geq0$, which implies that
\[
T(t)=T(t)S(t-t)=Q(t)=Q(0)=S(t).
\]
\end{proof}

\begin{corollary} Any uniformly continuous semigroup can be written in the form $T_t=T(t)=e^{Lt}$, where $L$ is called the generator of the semigroup and it is the only solution to the differential problem
\begin{equation}\label{diffprobhomo}\left\{\begin{array}{l}
\frac{dT_t}{dt}=LT_t,\quad t\in\mathds{R}^+\\
T_0=\mathds{1}.
\end{array}\right.
\end{equation}
\end{corollary}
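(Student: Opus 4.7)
The corollary essentially combines the two preceding theorems, so the plan is to invoke them in sequence rather than start from scratch. Given a uniformly continuous one-parameter semigroup $T_t$, the first step is to apply the previous theorem, which tells us that the map $t \mapsto T_t$ is differentiable and that $\frac{dT_t}{dt} = LT_t$, where the generator $L := \frac{dT_t}{dt}\big|_{t=0}$ is a well-defined linear operator on the finite Banach space (and therefore bounded). Together with the semigroup axiom $T_0 = \mathds{1}$, this shows that $T_t$ itself is a solution of the differential problem (\ref{diffprobhomo}).

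The second step is to note that by Theorem \ref{theoExp}, the exponential $e^{Lt}$ is also a solution of exactly this differential problem, and in fact the unique one. Uniqueness immediately forces $T_t = e^{Lt}$ for all $t \geq 0$, which is the content of the corollary. The fact that $L$ itself is characterized as the unique solution of the auxiliary requirement is built into its very definition as the derivative of $T_t$ at $t=0$: any operator $L'$ such that $T_t = e^{L't}$ satisfies $L' = \frac{d}{dt}e^{L't}\big|_{t=0} = L$.

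I do not expect any real obstacle here, since all the technical work was done in the previous two theorems; the corollary is a bookkeeping statement that packages the existence part (from the differentiability theorem) and the uniqueness part (from Theorem \ref{theoExp}) into a single characterization of uniformly continuous semigroups. The only minor point worth writing explicitly is the observation that the generator $L$ obtained from the semigroup is the same operator that appears in the exponent of the solution $e^{Lt}$, which follows by differentiating both sides of $T_t = e^{Lt}$ at $t=0$.
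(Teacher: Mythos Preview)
Your proposal is correct and is exactly the intended argument: the paper does not give an explicit proof of the corollary because it is an immediate consequence of the two preceding theorems, and you have spelled out that deduction precisely. The only cosmetic remark is that the clause about the uniqueness of $L$ itself (via differentiating $T_t = e^{Lt}$ at $t=0$) is a nice addition that the paper leaves implicit.
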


Given the omnipresence of differential equations describing time evolution in physics, it is now evident why semigroups are important. If we can extend the domain of $t$ to negative values, then $T_t$ forms a \emph{one-parameter group} whose inverses are given by $T_{-t}$, so that $T_{t}T_{-t}=\mathds{1}$ for all $t\in\mathds{R}$.

There is a special class of semigroups which is important for our proposes, and that will be the subject of the next section.

\subsubsection{Contraction semigroups}

\begin{definition} A one-parameter semigroup satisfying $\|T_t\|\leq1$ for every $t\geq0$, is called a contraction semigroup.
\end{definition}

Now one question arises, namely, which properties should $L$ fulfill to generate a contraction semigroup? This is an intricate question in general that is treated with more detail in functional analysis textbooks such as \cite{Yosida,reedsimon2,EngelNagel}. For finite dimensional Banach spaces, the proofs of the main theorems can be drastically simplified, and they are presented in the following. The main condition is based in properties of the resolvent set and the resolvent operator.

\begin{theorem}[Hille-Yoshida]\label{theoHill-Yoshida}
A necessary and sufficient condition for a linear operator $L$ to generate a contraction semigroup is that
\begin{enumerate}
\item $\{\lambda, \Re\lambda>0\}\subset\varrho(L)$. \label{THiYo1}
\item $\Vert R_L(\lambda)\Vert\leq (\Re \lambda)^{-1}$. \label{THiYo2}
\end{enumerate}
\end{theorem}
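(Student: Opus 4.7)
The plan is to prove the two implications of the Hille--Yoshida theorem separately, using the Laplace transform for necessity and the Yosida approximation for sufficiency.

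For the necessity direction, assume $L$ generates a contraction semigroup $T_t = e^{Lt}$. Given $\lambda$ with $\Re\lambda > 0$, I would define the candidate resolvent as the Laplace transform
\[
R(\lambda) = \int_0^\infty e^{-\lambda t} T_t\, dt,
\]
which converges absolutely in the operator norm because $\|e^{-\lambda t} T_t\| \leq e^{-\Re\lambda\, t}$. Taking norms under the integral immediately yields $\|R(\lambda)\| \leq (\Re\lambda)^{-1}$, which will be condition \ref{THiYo2}. Next I would verify that $R(\lambda) = (\lambda\mathds{1} - L)^{-1}$ by computing $(\lambda\mathds{1} - L)R(\lambda)$: substitute $LT_t = \frac{dT_t}{dt}$ (the previous theorem) and integrate by parts, using $T_0 = \mathds{1}$ and $\|e^{-\lambda t} T_t\| \to 0$ as $t \to \infty$, to obtain $\mathds{1}$. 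The analogous computation on the other side gives $R(\lambda)(\lambda\mathds{1} - L) = \mathds{1}$, so $\lambda \in \varrho(L)$ and $R(\lambda) = R_L(\lambda)$, establishing condition \ref{THiYo1}.

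For the sufficiency direction, assume \ref{THiYo1} and \ref{THiYo2} and define, for real $\lambda > 0$, the Yosida approximation
\[
L_\lambda = \lambda^2 R_L(\lambda) - \lambda\mathds{1}.
\]
The identity $L R_L(\lambda) = \lambda R_L(\lambda) - \mathds{1}$ (which follows from $(\lambda\mathds{1} - L)R_L(\lambda) = \mathds{1}$) shows that $L_\lambda = \lambda L R_L(\lambda)$, and also that $\|\lambda R_L(\lambda) - \mathds{1}\| = \|L R_L(\lambda)\| \leq \|L\|/\lambda \to 0$, hence $\lambda R_L(\lambda) \to \mathds{1}$ and consequently $L_\lambda = L\,(\lambda R_L(\lambda)) \to L$ in norm as $\lambda \to \infty$. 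Because $\lambda\mathds{1}$ and $\lambda^2 R_L(\lambda)$ commute, I can factor
\[
e^{tL_\lambda} = e^{-\lambda t}\, e^{\lambda^2 t R_L(\lambda)},
\]
and using the submultiplicativity bound $\|e^M\| \leq e^{\|M\|}$ together with \ref{THiYo2} I get $\|e^{tL_\lambda}\| \leq e^{-\lambda t} e^{\lambda^2 t/\lambda} = 1$, so each $e^{tL_\lambda}$ is a contraction. Finally, continuity of the exponential map in norm (the series defining $e^M$ is locally uniformly convergent) gives $e^{tL_\lambda} \to e^{tL}$ as $\lambda \to \infty$, and the contraction bound passes to the limit. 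Together with the semigroup property of $e^{tL}$ proved in the previous section, this shows $\{e^{tL}\}_{t\geq 0}$ is a contraction semigroup generated by $L$.

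The main obstacle is the sufficiency direction: one cannot just exponentiate $L$ and estimate, because the bound $\|e^{tL}\| \leq e^{t\|L\|}$ is far too crude. The whole point of the Yosida trick is to replace $L$ by operators whose exponentials can be controlled directly, using only the resolvent bound \ref{THiYo2}, and then pass to the limit. The two subtle checks are (i) that $L_\lambda \to L$ in operator norm, which relies crucially on the finite-dimensional assumption so that $\|L\|$ is finite and the estimate $\|L R_L(\lambda)\| \leq \|L\|/\lambda$ is available, and (ii) the commutativity $[\lambda\mathds{1}, \lambda^2 R_L(\lambda)] = 0$ that allows the clean factorization of $e^{tL_\lambda}$ and yields the contraction estimate with exponents that exactly cancel.
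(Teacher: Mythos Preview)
Your proof is correct and follows essentially the same route as the paper: the Laplace-transform representation of the resolvent for necessity, and the Yosida approximation $L_\lambda=\lambda L R_L(\lambda)=\lambda^2 R_L(\lambda)-\lambda\mathds{1}$ for sufficiency, with the same contraction estimate via $e^{tL_\lambda}=e^{-\lambda t}e^{\lambda^2 t R_L(\lambda)}$ and the same convergence argument $\lambda R_L(\lambda)\to\mathds{1}$ using $\|L R_L(\lambda)\|\leq\|L\|/\lambda$. The only cosmetic difference is that you verify $R(\lambda)=(\lambda\mathds{1}-L)^{-1}$ by integration by parts via $LT_t=\tfrac{d}{dt}T_t$, whereas the paper does it by term-by-term integration of the exponential series on a finite interval and then passing to the limit; both arguments are equivalent.
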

\begin{proof}
Let $L$ be the generator of a contraction semigroup and $\lambda$ a complex number. For $a\geq b$ we define the operator
\[
\mathcal{L}(a,b)=\int_a^{b} e^{(-\lambda\mathds{1}+L)t}dt.
\]
Since the exponential series converges uniformly (to see this one uses again tools of the elementary analysis, specifically the M-test of Weierstrass \cite{Spivak}, on the Banach space $\mathfrak{B}$) it is possible to integrate term by term in the previous definition
\begin{eqnarray*}
\mathcal{L}(a,b)=\int_a^{b} \sum_{n=0}^\infty \frac{(-\lambda\mathds{1}+L)^n t^n}{n!}dt&=&\sum_{n=0}^\infty \frac{(-\lambda\mathds{1}+L)^n}{n!}\left(\int_a^b t^n dt\right)\\
&=&\sum_{n=0}^\infty \frac{(-\lambda\mathds{1}+L)^n}{n!}\left[\frac{b^{n+1}}{n+1}-\frac{a^{n+1}}{n+1}\right].
\end{eqnarray*}
Thus by multiplying $\mathcal{L}(a,b)$ by $(\lambda\mathds{1}-L)$ we get
\[
(\lambda\mathds{1}-L)\mathcal{L}(a,b)=\mathcal{L}(a,b)(\lambda\mathds{1}-L)=\left[e^{(-\lambda\mathds{1}+L)a}-e^{(-\lambda\mathds{1}+L)b}\right].
\]
Now assume $\Re\lambda>0$ and take the limit $a\rightarrow0$, $b\rightarrow\infty$, then as
\begin{eqnarray*}
\lim_{b\rightarrow\infty}\Vert e^{(-\lambda\mathds{1}+L)b}\Vert=\lim_{b\rightarrow\infty}\Vert e^{-\lambda\mathds{1}b}e^{Lb}\Vert&\leq&\lim_{b\rightarrow\infty}\Vert e^{-\lambda\mathds{1}b}\Vert\Vert e^{Lb}\Vert\\
&\leq&\lim_{b\rightarrow\infty}\Vert e^{-\lambda\mathds{1}b}\Vert=0,
\end{eqnarray*}
and
\[
\lim_{a\rightarrow0}e^{-\lambda\mathds{1}a}e^{La}=\mathds{1},
\]
we obtain
\[
(\lambda\mathds{1}-L)\left[\int_0^\infty e^{-\lambda\mathds{1} t} e^{L t}dt\right]=\left[\int_0^\infty e^{-\lambda\mathds{1} t} e^{L t}dt\right](\lambda\mathds{1}-L)=\mathds{1}.
\]
So if $\mathrm{Re}\lambda>0$ then $\lambda\in\varrho(L)$ and from the definition of the resolvent operator we conclude
\[
R_L(\lambda)=\int_0^\infty e^{-\lambda\mathds{1} t} e^{L t}dt.
\]
Moreover,
\begin{eqnarray*}
\Vert R_L(\lambda)\Vert&=&\left\Vert\int_0^\infty e^{-\lambda\mathds{1} t} e^{L t}dt\right\Vert\leq\int_0^\infty \Vert e^{-\lambda\mathds{1} t} e^{L t}\Vert dt\leq\int_0^\infty \Vert e^{-\lambda\mathds{1} t}\Vert \Vert e^{L t}\Vert dt\\
&\leq& \int_0^\infty  \Vert e^{-\lambda\mathds{1} t}\Vert dt=\int_0^\infty  |e^{-\lambda t}| dt=\int_0^\infty e^{-\Re(\lambda) t} dt=\frac{1}{\Re(\lambda)}.
\end{eqnarray*}

Conversely, suppose that $L$ satisfies the above conditions \textit{\ref{THiYo1}} and \textit{\ref{THiYo2}}. Let $\lambda$ be some real number in the resolvent set of $L$, which means it is also a real positive number. We define the operator
\[
L_\lambda=\lambda LR_L(\lambda),
\]
and it turns out that $L_\lambda\rightarrow L$ as $\lambda\rightarrow\infty$. Indeed, it is enough to show that $\lambda R_L(\lambda)\rightarrow\mathds{1}$. Since $(0,\infty)\subset\varrho(L)$ there is no problem with the resolvent operator taking the limit, so by writing $\lambda R_L(\lambda)=LR_L(\lambda)+\mathds{1}$ we obtain
\[
\lim_{\lambda\rightarrow\infty}\Vert \lambda R_L(\lambda)-\mathds{1}\Vert=\lim_{\lambda\rightarrow\infty}\Vert L R_L(\lambda)\Vert\leq\lim_{\lambda\rightarrow\infty}\Vert L\Vert \Vert R_L(\lambda)\Vert \leq \lim_{\lambda\rightarrow\infty}\frac{\Vert L\Vert}{\lambda}=0,
\]
where we have used \textit{\ref{THiYo2}}. Moreover,
\begin{multline*}
\Vert e^{L_\lambda t}\Vert=\Vert e^{\lambda LR_L(\lambda)t}\Vert=\Vert e^{[\lambda^2 R_L(\lambda)-\lambda\mathds{1}]t}\Vert\leq e^{-\lambda t}\Vert e^{\lambda^2 R_L(\lambda)t}\Vert\\
\leq e^{-\lambda t} e^{\lambda^2 \Vert R_L(\lambda)\Vert t}\leq e^{-\lambda t} e^{\lambda t}=1,
\end{multline*}
and therefore
\[
e^{L t}=\lim_{\lambda\rightarrow\infty} e^{L_\lambda t}
\]
is a contraction semigroup.

\end{proof}

Apart from imposing conditions such that $L$ generates a contraction semigroup, this theorem is of vital importance in general operator theory on arbitrary dimensional Banach spaces. If the conditions of the theorem \ref{theoHill-Yoshida} are fulfilled then $L$ generates a well-defined semigroup. Note that for general operators $L$ the exponential cannot be rigourously defined as a power series, and more complicated tools are needed. On the other hand, to apply this theorem directly can be complicated in many cases, so one needs more manageable equivalent conditions.

\begin{definition}
A \emph{semi-inner product} is an operation in which for each pair ${x_1,x_2}$ of elements of a Banach space $\mathfrak{B}$ there is an associated complex number $[x_1,x_2]$, such that
\begin{align}
[x_1,\lambda x_2+\mu x_3]&=\lambda[x_1,x_2]+\mu[x_1,x_3],\label{sip1}\\
[x_1,x_1]&=\Vert x_1\Vert^2,\label{sip2}\\
\vert[x_1,x_2]\vert&\leq\Vert x_1\Vert\Vert x_2\Vert \label{sip3}.
\end{align}
for any $x_1,x_2,x_3\in\mathfrak{B}$ and $\lambda,\mu\in\mathds{C}$.
\end{definition}

\begin{definition}
A linear operator $L$ on $\mathfrak{B}$ is called dissipative if
\begin{equation}
\Re[x,Lx]\leq0 \quad \forall x\in\mathfrak{B}.
\end{equation}
\end{definition}

\begin{theorem}[Lumer-Phillips] \label{Lumer-PhillipsTheo}
A necessary and sufficient condition for a linear operator $L$ on $\mathfrak{B}$ (of finite dimension) to be the generator of a contraction semigroup is that $L$ be dissipative.
\end{theorem}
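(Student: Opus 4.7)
\noindent\textit{Proof sketch.} The plan is to prove both implications separately, with the sufficiency direction reduced to Hille-Yoshida (Theorem \ref{theoHill-Yoshida}) via an a priori estimate built from the semi-inner product.

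For necessity, I would assume $T_t=e^{Lt}$ is a contraction semigroup and, for fixed $x\in\mathfrak{B}$, study the scalar function $\varphi(t)=[x,T_tx]$. Property (\ref{sip3}) together with linearity in the second slot shows that $\varphi$ is continuous in $t$, since $|\varphi(t)-\varphi(s)|=|[x,T_tx-T_sx]|\leq\|x\|\,\|T_tx-T_sx\|$ and $T_t$ is uniformly continuous. Contractivity plus (\ref{sip2}) and (\ref{sip3}) give
\[
\Re[x,T_tx]\leq|[x,T_tx]|\leq\|x\|\,\|T_tx\|\leq\|x\|^2=[x,x].
\]
Linearity (\ref{sip1}) then yields $\Re[x,T_tx-x]\leq0$, so dividing by $t>0$ and letting $t\downarrow0$, together with the continuity bound $|[x,\cdot]|\leq\|x\|\,\|\cdot\|$, lets me pass to the limit and conclude $\Re[x,Lx]\leq0$.

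For sufficiency, assume $L$ is dissipative; I would verify the two hypotheses of Hille-Yoshida. Take $\lambda\in\mathds{C}$ with $\Re\lambda>0$ and any $x\in\mathfrak{B}$, and set $y=(\lambda\mathds{1}-L)x$. Using (\ref{sip1}) and (\ref{sip2}),
\[
\Re[x,y]=\Re\lambda\,\|x\|^2-\Re[x,Lx]\geq(\Re\lambda)\|x\|^2,
\]
while by (\ref{sip3}) the left-hand side is at most $\|x\|\,\|y\|$. Combining these gives the key estimate
\begin{equation*}
\|(\lambda\mathds{1}-L)x\|\geq(\Re\lambda)\|x\|.
\end{equation*}
This shows $\lambda\mathds{1}-L$ is injective, hence (finite dimension!) invertible, so $\{\lambda:\Re\lambda>0\}\subset\varrho(L)$; and substituting $x=R_L(\lambda)y$ into the estimate yields $\|R_L(\lambda)y\|\leq(\Re\lambda)^{-1}\|y\|$, i.e. $\|R_L(\lambda)\|\leq(\Re\lambda)^{-1}$. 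Theorem \ref{theoHill-Yoshida} then delivers the contraction semigroup.

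The main subtle point I expect is the limit step in the necessity direction: one needs that, although the semi-inner product is not required to be jointly continuous, the one-sided continuity $[x,y_n]\to[x,y]$ whenever $y_n\to y$ is enough, which follows cleanly from (\ref{sip3}) and (\ref{sip1}). The rest of the argument is essentially algebraic, with the finite-dimensional hypothesis only used to upgrade injectivity of $\lambda\mathds{1}-L$ to invertibility, sidestepping the density/closedness issues that complicate the infinite-dimensional Lumer-Phillips theorem.
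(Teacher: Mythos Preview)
Your proposal is correct and follows essentially the same route as the paper: for necessity, both arguments bound $\Re[x,(T_t-\mathds{1})x]\leq 0$ via (\ref{sip2})--(\ref{sip3}) and pass to the limit in the difference quotient; for sufficiency, both derive the key a priori estimate $\|(\lambda\mathds{1}-L)x\|\geq(\Re\lambda)\|x\|$ from dissipativity and feed it into Hille--Yoshida. The only cosmetic difference is that you conclude $\{\Re\lambda>0\}\subset\varrho(L)$ by ``injective $\Rightarrow$ invertible'' in finite dimension, while the paper phrases the same fact as ``every eigenvalue has $\Re\lambda\leq 0$''.
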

\begin{proof}
Suppose that $\{T_t,t\geq0\}$ is a contraction semigroup, then just by using the properties of a semi-inner product we have that:
\begin{eqnarray*}
\Re[x,(T_t-\mathds{1})x]&=&\Re[x,T_tx]-\Vert x\Vert^2\leq |[x,T_tx]|-\Vert x\Vert^2 \\
&\leq&\Vert x\Vert \Vert T_tx\Vert-\Vert x\Vert^2\leq\Vert x\Vert^2 \Vert T_t\Vert-\Vert x\Vert^2\leq0,
\end{eqnarray*}
as $\Vert T_t\Vert\leq1$. So $L$ is dissipative
\[
\Re[x,Lx]=\lim_{t\rightarrow0}\frac{1}{t}\Re[x,(T_t-\mathds{1})x]\leq0.
\]
Conversely, let $L$ be dissipative. Considering a point within its spectrum $\lambda\in\sigma(L)$ and a corresponding eigenvector $x\in\ker(\lambda\mathds{1}-L)$. Since $[x,(\lambda\mathds{1}-L)x]=0$ we have
\[
\Re[x,(\lambda\mathds{1}-L)x]=\Re(\lambda)\Vert x\Vert^2-\Re[x,Lx]=0,
\]
given that $\Re[x,Lx]\leq0$, $\Re(\lambda)\leq0$. Thus any complex number $\Re(\lambda)>0$ is in the resolvent of $L$. Moreover, let $\lambda\in\varrho(L)$, again from the properties of a semi-inner product we have that
\begin{eqnarray*}
\Re(\lambda)\Vert x\Vert^2=\Re(\lambda)[x,x]&\leq&\Re(\lambda[x,x]-[x,Lx])=\Re[x,(\lambda\mathds{1}-L)x]\\
&\leq& |[x,(\lambda\mathds{1}-L)x]|\leq\Vert x\Vert\Vert (\lambda\mathds{1}-L)x\Vert,
\end{eqnarray*}
so $\Vert (\lambda\mathds{1}-L)x\Vert\geq\Re(\lambda)\Vert x\Vert$, $\forall x\neq0$. Then, by setting $x=(\lambda\mathds{1}-L)\tilde{x}$ in the expression for the norm of $R_L(\lambda)$:
\begin{eqnarray*}
\Vert R_L(\lambda)\Vert=\sup_{x\in\mathfrak{B}, x\neq0}\frac{\Vert R_L(\lambda)x\Vert}{\Vert x\Vert}&=&\sup_{\tilde{x}\in\mathfrak{B}, \tilde{x}\notin\ker(\lambda\mathds{1}-L)}\frac{\Vert \tilde{x}\Vert}{\Vert (\lambda\mathds{1}-L)\tilde{x}\Vert}\\
&\leq&\frac{\Vert \tilde{x}\Vert}{\Re(\lambda)\Vert \tilde{x}\Vert}=\frac{1}{\Re(\lambda)},
\end{eqnarray*}
and therefore $L$ satisfies the conditions of the theorem \ref{theoHill-Yoshida}.

\end{proof}

\subsection{Evolution families}\label{sectionEvolutionFamilies}

As we have seen semigroups arise naturally in the context of problems involving linear differential equations, as exemplified by Eq. (\ref{diffprobhomo}), with time-independent generators $L$. However some situations in physics require solving time-inhomogeneous differential problems,
\begin{equation*}\left\{\begin{array}{l}
\frac{dx}{dt}=L(t)x, \quad x\in\mathfrak{B}\\
x(t_0)=x_0,
\end{array}\right.
\end{equation*}
where $L(t)$ is a time-dependent linear operator.
Given that the differential equation is linear, its solution must depend linearly on the initial conditions, so we can write
\begin{equation}
x(t)=T_{(t,t_0)}x(t_0),
\end{equation}
where $T_{(t,t_0)}$ is a linear operator often referred to as the \textit{evolution operator}. Obviously, $T_{(t_0,t_0)}=\mathds{1}$ and the composition of two consecutive evolution operators is well defined; since for $t_2 \geq t_1 \geq t_0$
\[
x(t_1)=T_{(t_1,t_0)}x(t_0)\quad\text{and}\quad x(t_2)=T_{(t_2,t_1)}x(t_1).
\]
The evolution operator between $t_2$ and $t_0$ is given by
\begin{equation}\label{composicion}
T_{(t_2,t_0)}=T_{(t_2,t_1)}T_{(t_1,t_0)},
\end{equation}
and the two-parameter family of operators $T_{(t,s)}$ fulfills the properties
\begin{equation}\label{evfamily}
\begin{array}{l}
T_{(t,s)}=T_{(t,r)}T_{(r,s)}, \quad t\geq r \geq s\\
T_{(s,s)}=\mathds{1}.
\end{array}
\end{equation}
These operators are sometimes called \textit{evolution families}, \textit{propagators}, or \textit{fundamental solutions}. One important fact to realize is that if the generator is time-independent, $L$, the evolution family is reduced to a one-parameter semigroup in the time difference $T_{(t,s)}=T_{\tau=t-s}$. However, in contrast to general semigroups, it is not enough (even in finite dimension) to require that the map $(t,s)\mapsto T_{(t,s)}$ be continuous to ensure that it is differentiable \cite{EvFamilynoDiff}, although in practice we usually assume that the temporal evolution is smooth enough and families fulfilling Eq. (\ref{evfamily}) will also considered to be differentiable. Under these conditions we can formulate the following result.

\begin{theorem} A differentiable family $T_{(t,s)}$ obeying the condition (\ref{evfamily}) is the only solution to the differential problems
\begin{equation*}\left\{\begin{array}{l}
\frac{dT_{(t,s)}}{dt}=L(t)T_{(t,s)}, \quad t\geq s\\
T_{(s,s)}=\mathds{1},
\end{array}\right.
\end{equation*}
and
\begin{equation*}\left\{\begin{array}{l}
\frac{dT_{(t,s)}}{ds}=-T_{(t,s)}L(t), \quad t\geq s\\
T_{(s,s)}=\mathds{1}.
\end{array}\right.
\end{equation*}
\end{theorem}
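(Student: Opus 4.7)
The plan is to adapt the strategy that proved Theorem \ref{theoExp} — first extract the ODE from the semigroup law itself, then establish uniqueness via a product-derivative trick — to the time-inhomogeneous setting governed by (\ref{evfamily}). I would begin by defining the generator $L(t) := \partial_u T_{(u,t)}\big|_{u=t}$, which exists by the assumed differentiability of the family. For the forward equation I split $T_{(t+h,s)}$ using (\ref{evfamily}) as $T_{(t+h,s)} = T_{(t+h,t)} T_{(t,s)}$, so that
\[
\frac{T_{(t+h,s)} - T_{(t,s)}}{h} = \frac{T_{(t+h,t)} - \mathds{1}}{h}\, T_{(t,s)};
\]
letting $h\to 0$ yields $\partial_t T_{(t,s)} = L(t)\, T_{(t,s)}$, and the initial condition $T_{(s,s)} = \mathds{1}$ is already part of (\ref{evfamily}).

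For the backward equation I use the dual factorization $T_{(t,s)} = T_{(t,s+h)} T_{(s+h,s)}$, which gives
\[
\frac{T_{(t,s+h)} - T_{(t,s)}}{h} = -\,T_{(t,s+h)}\,\frac{T_{(s+h,s)} - \mathds{1}}{h}.
\]
The bracketed quotient converges to the generator associated with time $s$, and joint continuity of the family lets the prefactor $T_{(t,s+h)}$ collapse to $T_{(t,s)}$, producing the stated backward identity.

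For uniqueness I would transplant the construction from the final paragraph of the proof of Theorem \ref{theoExp}. Given another differentiable $S_{(t,s)}$ solving the forward problem, fix $t\geq s$ and set $Q(r) = T_{(t,r)}\, S_{(r,s)}$ on $r\in[s,t]$; by the product rule, together with the backward equation applied to $T_{(t,r)}$ and the forward equation applied to $S_{(r,s)}$, the two contributions to $dQ/dr$ cancel. Hence $Q$ is constant, forcing $T_{(t,s)} = Q(s) = Q(t) = S_{(t,s)}$; uniqueness for the backward problem is symmetric. The main obstacle will be the limit in the backward identity above: I must justify that the prefactor $T_{(t,s+h)}$ behaves continuously in its second argument and that the right-hand difference quotient converges to the same operator-valued generator that appears in the forward equation. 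Once this compatibility between the two one-sided derivatives of the family at equal arguments is pinned down, the remainder of the argument is a direct lift of the time-homogeneous case.
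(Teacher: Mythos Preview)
Your proposal is correct and mirrors the paper's proof almost exactly: the paper derives the forward equation from the same difference-quotient splitting $T_{(t+h,s)}=T_{(t+h,t)}T_{(t,s)}$, dismisses the backward equation with ``analogously one proves,'' and establishes uniqueness via precisely the function $Q(r)=T_{(t,r)}S_{(r,s)}$ with $dQ/dr=0$. Your version is in fact more careful, since you make explicit the dual factorization for the backward equation and correctly note that the resulting generator is evaluated at $s$ (the $L(t)$ in the stated backward problem is evidently a typo for $L(s)$).
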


\begin{proof} If the evolution family $T_{(t,s)}$ is differentiable
\[
\frac{dT_{(t,s)}}{dt}=\lim_{h\rightarrow0}\frac{T_{(t+h,s)}-T_{(t,s)}}{h}=\lim_{h\rightarrow0}\frac{T_{(t+h,t)}-\mathds{1}}{h}T_{(t,s)}=L(t)T_{(t,s)}.
\]
Analogously one proves that
\[
\frac{dT_{(t,s)}}{ds}=-T_{(t,s)}L(t).
\]
To show that this solution is unique, as in theorem $\ref{theoExp}$, let us assume that there exist some other $S_{(t,s)}$ solving both differential problems. Then, for every fixed $t$ and $s$, we define now $Q(r)=T_{(t,r)}S_{(r,s)}$ for $t\geq r\geq s$. It is immediate to show that $\frac{dQ(r)}{dr}=0$ and therefore
\[
T_{(t,s)}=Q(s)=Q(t)=S_{(t,s)}.
\]
\end{proof}

Again opposed to the case of semigroups, to give a general expression for an evolution family $T_{(t,s)}$ in terms of the generator $L(t)$ is complicated. To illustrate why let us define first the concept of time-ordering operator.

\begin{definition} The \emph{time-ordered operator} $\mathcal{T}$ of a product of two operators $L(t_1)L(t_2)$ is defined by
\[
\mathcal{T}L(t_1)L(t_2)=\theta(t_1-t_2)L(t_1)L(t_2)+\theta(t_2-t_1)L(t_2)L(t_1),
\]
where $\theta(x)$ is the Heaviside step function. Similarly, for a product of three operators $L(t_1)L(t_2)L(t_3)$, we have
\[
\begin{split}
\mathcal{T}L(t_1)L(t_2)L(t_3)=\ &\theta(t_1-t_2)\theta(t_2-t_3)L(t_1)L(t_2)L(t_3)\\
+\ &\theta(t_1-t_3)\theta(t_3-t_2)L(t_1)L(t_3)L(t_2)\\
+\ &\theta(t_2-t_1)\theta(t_1-t_3)L(t_2)L(t_1)L(t_3)\\
+\ &\theta(t_2-t_3)\theta(t_3-t_1)L(t_2)L(t_3)L(t_1)\\
+\ &\theta(t_3-t_1)\theta(t_1-t_2)L(t_3)L(t_1)L(t_2)\\
+\ &\theta(t_3-t_2)\theta(t_2-t_1)L(t_3)L(t_2)L(t_1),
\end{split}
\]
so that for a product of $k$ operators $L(t_1)\cdots L(t_k)$ we can write:
\begin{equation}\label{timeordering}
\mathcal{T}L(t_1)\cdots L(t_k)=\sum_{\pi}\theta[t_{\pi(1)}-t_{\pi(2)}]\cdots\theta[t_{\pi(k-1)}-t_{\pi(k)}]L[t_{\pi(1)}]\cdots L[t_{\pi(k)}],
\end{equation}
where $\pi$ is a permutation of $k$ indexes and the sum extends over all $k!$ different permutations.
\end{definition}

\begin{theorem}[Dyson Series] Let the generator $L(t')$ to be bounded in the interval $[t,s]$, i.e. $\|L(t')\|<\infty,\ t'\in[t,s]$. Then the evolution family $T_{(t,s)}$ admits the following series representation
\begin{equation}\label{seriepropagador}
T(t,s)=\mathds{1}+\sum_{m=1}^\infty\int_s^t\int_s^{t_1}\cdots\int_s^{t_{m-1}}L(t_1)\cdots L(t_m)dt_m\cdots dt_1,
\end{equation}
where $t\geq t_1\geq \ldots \geq t_m\geq s$, and it can be written symbolically as
\begin{equation}\label{serieDyson}
T(t,s)=\mathcal{T}e^{\int_s^tL(t')dt'}\equiv\mathds{1}+\sum_{m=1}^\infty\frac{1}{m!}\int_s^t\cdots\int_s^{t}\mathcal{T}L(t_1)\cdots L(t_m)dt_1\cdots dt_m.
\end{equation}
\end{theorem}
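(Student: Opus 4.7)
The plan is to proceed by Picard iteration on the integral form of the differential equation characterizing $T_{(t,s)}$, then identify the resulting nested integrals with the time-ordered exponential by a symmetrization argument. The boundedness hypothesis $\|L(t')\| \leq M < \infty$ on $[s,t]$ is what makes both the iteration converge and the termwise manipulations legal.

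First I would rewrite the differential problem $\frac{d T_{(t,s)}}{dt} = L(t)T_{(t,s)}$, $T_{(s,s)} = \mathds{1}$, in integral form as the Volterra equation
\[
T_{(t,s)} = \mathds{1} + \int_s^t L(t_1)\, T_{(t_1,s)}\, dt_1.
\]
Iterating this identity by substituting its right-hand side into $T_{(t_1,s)}$ repeatedly yields, after $m$ iterations, the partial sum
\[
\mathds{1} + \sum_{k=1}^{m}\int_s^t \!\!\int_s^{t_1}\!\! \cdots \!\!\int_s^{t_{k-1}} L(t_1)\cdots L(t_k)\, dt_k\cdots dt_1 + R_m(t,s),
\]
with an explicit remainder $R_m$. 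The norm of the $k$-th term is bounded by $M^k(t-s)^k/k!$ because the $k$-simplex $\{s \leq t_k \leq \cdots \leq t_1 \leq t\}$ has volume $(t-s)^k/k!$ and $\|L(t_1)\cdots L(t_k)\| \leq M^k$ by submultiplicativity. Hence the series is absolutely convergent with bound $e^{M(t-s)}$, the remainder $R_m$ goes to zero, and the limit satisfies the Volterra equation. By the uniqueness statement of the previous theorem (applied to the forward problem), the limit must coincide with $T_{(t,s)}$, giving the first representation \eqref{seriepropagador}.

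For the time-ordered form, I would exploit that the integrand in $\int_s^t\cdots\int_s^t \mathcal{T}L(t_1)\cdots L(t_m)\, dt_1\cdots dt_m$ is, by construction in \eqref{timeordering}, a sum of $m!$ terms each supported on one of the simplices obtained by permuting the arguments. Because the hypercube $[s,t]^m$ is, up to a measure-zero set, the disjoint union of these $m!$ simplices, and because relabeling the dummy variables on each simplex brings the corresponding contribution into the canonical ordered form $\int_s^t\int_s^{t_1}\cdots\int_s^{t_{m-1}} L(t_1)\cdots L(t_m)$, all $m!$ contributions are equal. Dividing by $m!$ then reproduces exactly the $m$-th term of \eqref{seriepropagador}, establishing the symbolic identity \eqref{serieDyson}.

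The step that requires a bit of care is the last one: one must check carefully that the step functions in \eqref{timeordering} do partition $[s,t]^m$ up to a set of Lebesgue measure zero (the ``diagonal'' where two $t_i$ coincide), and that integrating by Fubini on each permutation cell and relabeling produces the ordered simplex integral unchanged. Everything else — convergence, termwise integration, passing to the limit in the Volterra equation — is routine once the uniform bound $M$ on $\|L(t')\|$ is fixed, so the combinatorial identification of the hypercube integral with $m!$ copies of the simplex integral is the only genuinely non-trivial ingredient.
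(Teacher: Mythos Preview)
Your proposal is correct and follows essentially the same route as the paper: the convergence is obtained from the factorial bound $M^k(t-s)^k/k!$, and the identification of the ordered-simplex integrals with the time-ordered hypercube integral divided by $m!$ is exactly the symmetrization argument the paper gives. The only minor difference is in how the series is shown to equal $T_{(t,s)}$: you derive it by Picard iteration on the Volterra equation and then invoke the uniqueness theorem, whereas the paper writes down the series directly and verifies by termwise differentiation (justified via the Weierstrass $M$-test) that it satisfies the differential problem; these are two equivalent and standard ways of closing the argument.
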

\begin{proof}
To prove that expression (\ref{seriepropagador}) can be formally written as (\ref{serieDyson}) note that by introducing the time-ordering definition Eq. (\ref{timeordering}), the $k^\mathrm{th}$ term of this series can be written as
\begin{multline*}
\frac{1}{k!}\int_s^t\cdots\int_s^{t}\mathcal{T}L(t_1)\cdots L(t_k)dt_1\cdots dt_k\\
=\frac{1}{k!}\sum_\pi\int_s^t\int_s^{t_{\pi(1)}}\cdots\int_s^{t_{\pi(k-1)}}L[t_{\pi(1)}]\cdots L[t_{\pi(k)}]dt_1\cdots dt_k\\
=\int_s^t\int_s^{t_1}\cdots\int_s^{t_{k-1}}L(t_{1})\cdots L(t_{k})dt_k\cdots dt_1,
\end{multline*}
where in the last step we have used that $t\geq t_{\pi(1)}\geq \ldots \geq t_{\pi(k)}\geq s$, and that the name of the variables does not matter, so the sum over $\pi$ is the addition of the same integral $k!$ times.

To prove Eq. (\ref{seriepropagador}), we need first of all to show that the series is convergent, but this is simple because by taking norms in (\ref{serieDyson}) we get the upper bound
\[
\|T(t,s)\|\leq \sum_{m=0}^\infty\frac{|t-s|^m}{m!}\left(\sup_{t'\in[s,t]}\|L(t')\|\right)^m=\exp\left[\sup_{t'\in[s,t]}\|L(t')\||t-s|\right],
\]
which is convergent as a result of $L(t')$ being bounded in $[t,s]$. Finally, the fact that (\ref{seriepropagador}) is the solution of the differential problem is easy to verify by differentiating the series term by term since it converges uniformly and so does its derivative (to see this one uses again the M-test of Weierstrass \cite{Spivak} on the Banach space $\mathfrak{B}$).
\end{proof}

This result is the well-known Dyson expansion, which is widely used for instance in scattering theory. We have proved its convergence for finite dimensional systems and bounded generator. This is normally difficult to prove in the infinite dimensional case, but the expression is still used in this context in a formal sense.

Apart from the Dyson expansion, there is an approximation formula for evolution families which will be useful in this work.

\begin{theorem}[Time-splitting formula] The evolution family $T_{(t,s)}$ can be given by the formula
\begin{equation}\label{time-splitting}
T_{(t,s)}=\lim_{\max|t_{j+1}-t_j|\rightarrow0}\prod_{j={n-1}}^{0}e^{(t_{j+1}-t_j)L(t_j)} \quad \text{for }t=t_{n}\geq\ldots\geq t_0=s,
\end{equation}
where $L(t_j)$ is the generator evaluated in the instantaneous time $t_j$ (note the descending order in the product symbol).
\end{theorem}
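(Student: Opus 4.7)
The plan is a telescoping argument strictly parallel to the Lie-Trotter proof given earlier, with identical powers $A^n,B^n$ replaced by products of distinct factors. Given a partition $s=t_0<t_1<\cdots<t_n=t$ with mesh $\varepsilon:=\max_j\Delta t_j$ (where $\Delta t_j:=t_{j+1}-t_j$), set $A_j:=T_{(t_{j+1},t_j)}$ and $B_j:=e^{\Delta t_j L(t_j)}$. The cocycle property (\ref{evfamily}) factorizes the true evolution as $T_{(t,s)}=A_{n-1}\cdots A_0$, and the natural generalization of identity (\ref{trotter1}) to non-identical factors reads
$$
T_{(t,s)}-\prod_{j=n-1}^{0}B_j=\sum_{k=0}^{n-1}\Bigl(\prod_{j=n-1}^{k+1}A_j\Bigr)(A_k-B_k)\Bigl(\prod_{j=k-1}^{0}B_j\Bigr).
$$
Submultiplicativity (\ref{submultiplicative}) then reduces the problem to bounding $\sum_k\|A_k-B_k\|$ weighted by uniform norms of the flanking products.

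For those uniform bounds I set $M:=\sup_{u\in[s,t]}\|L(u)\|<\infty$: the Dyson estimate from the previous theorem gives $\|A_j\|\leq e^{M\Delta t_j}$, while (\ref{exponential}) gives $\|B_j\|\leq e^{M\Delta t_j}$, so each surrounding product is bounded by $e^{M(t-s)}$ independently of the partition. For the local slab comparison I would expand to second order: Dyson gives
$$
A_k=\mathds{1}+\int_{t_k}^{t_{k+1}}L(u)\,du+R_k^A,\qquad \|R_k^A\|\leq\sum_{m\geq 2}\frac{(M\Delta t_k)^m}{m!}=\mathcal{O}(\Delta t_k^2),
$$
and the exponential series gives $B_k=\mathds{1}+L(t_k)\Delta t_k+R_k^B$ with $\|R_k^B\|=\mathcal{O}(\Delta t_k^2)$. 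Introducing the modulus of continuity $\omega(\varepsilon):=\sup_{|u-v|\leq\varepsilon,\,u,v\in[s,t]}\|L(u)-L(v)\|$, the first-order terms differ by at most $\omega(\varepsilon)\Delta t_k$, so
$$
\|A_k-B_k\|\leq\omega(\varepsilon)\Delta t_k+C\,\Delta t_k^2,
$$
with $C$ depending only on $M$.

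Summing and combining with the uniform bound yields
$$
\Bigl\|T_{(t,s)}-\prod_{j=n-1}^{0}e^{\Delta t_j L(t_j)}\Bigr\|\leq e^{M(t-s)}(t-s)\bigl(\omega(\varepsilon)+C\varepsilon\bigr),
$$
and the right-hand side vanishes as $\varepsilon\to 0$. The main obstacle is precisely this local second-order estimate: the telescoping sum contains $n\sim 1/\varepsilon$ terms, so each slab must contribute $o(\Delta t_k)$ in norm for the total to vanish, and this is what forces me to extract the integral mean $\int L(u)\,du$ explicitly and invoke the uniform continuity of $L$ on $[s,t]$ (implicit in the standing differentiability hypothesis on $T_{(t,s)}$); mere boundedness of $L$ would not suffice. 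Everything else is routine bookkeeping with the telescoping identity and submultiplicativity, exactly as in the Lie-Trotter proof.
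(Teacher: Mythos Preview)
Your proof is correct and takes a genuinely different route from the paper. The paper's argument is a short formal manipulation: it inserts the time-ordering operator $\mathcal{T}$ in front of the already-ordered product (harmless), then uses $\mathcal{T}$ to combine the exponentials into $\mathcal{T}\exp\bigl[\sum_j \Delta t_j\,L(t_j)\bigr]$, recognizes the exponent as a Riemann sum for $\int_s^t L(t')\,dt'$, and identifies the result with the Dyson series~(\ref{serieDyson}). The paper itself acknowledges this is a sketch (``more rigorous and general proofs can be found in \cite{Yosida,reedsimon2}'').

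Your telescoping argument is the rigorous version: it generalizes identity~(\ref{trotter1}) to non-identical factors, uses the cocycle property~(\ref{evfamily}) to factorize $T_{(t,s)}$, and controls each slab error $\|A_k-B_k\|$ via a second-order Dyson/exponential expansion together with the modulus of continuity of $L$. This buys you an explicit rate $e^{M(t-s)}(t-s)\bigl(\omega(\varepsilon)+C\varepsilon\bigr)$, and it makes transparent exactly which hypothesis is doing the work (uniform continuity of $L$ on $[s,t]$, not mere boundedness). The paper's approach is quicker and more mnemonic but hides the analytic content inside the formal time-ordering calculus; yours is what one would actually write down to justify the step.
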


\begin{proof} The idea for this proof is taken from \cite{Schwabl}, more rigorous and general proofs can be found in \cite{Yosida,reedsimon2}.
Given that the product in (\ref{time-splitting}) is already temporally ordered, nothing happens if we introduce the temporal-ordering operator
\begin{eqnarray*}
T_{(t,s)}&=&\lim_{\max|t_{j+1}-t_j|\rightarrow0}\mathcal{T}\prod_{j={n-1}}^{0}e^{(t_{j+1}-t_j)L(t_j)},\\
&=&\lim_{\max|t_{j+1}-t_j|\rightarrow0}\mathcal{T}e^{\sum_{j=0}^{n-1}(t_{j+1}-t_j)L(t_j)},
\end{eqnarray*}
where one realizes that the time-ordering operator is then making the work of ordering the terms in the exponential series in the bottom term in order to obtain the same series expansion as in the top term. Now it only remains to recognize the Riemann sum \cite{Spivak} in the exponent to arrive at the expression
\[
T_{(t,s)}=\mathcal{T}e^{\int_s^tL(t')dt'},
\]
which is Eq. (\ref{serieDyson}).
\end{proof}

In analogy to the case of contraction semigroups, one can define contraction evolution families. However, since we have now two parameters, there is not a unique possible definition, as illustrated below.

\begin{definition} An evolution family $T_{(t,s)}$ is called \emph{contractive} if $\|T_{(t,s)}\|\leq1$
for every $t$ and every $s$ such that $t\geq s$.
\end{definition}

\begin{definition} An evolution family $T_{(t,s)}$ is called \emph{eventually contractive} if there exist a $s_0$ such that $\|T_{(t,s_0)}\|\leq1$ for every $t\geq s_0$. \label{eventuallycontractive}
\end{definition}

It is clear that contractive families are also eventually contractive, but in this second case there exist a privileged initial time $s_0$, such that one can have $\|T_{(t,s)}\|>1$ for some $s\neq s_0$. This difference is relevant and will allow us later on to discriminate universal dynamical maps given by differential equations which are Markovian from those which are not.

\section{Time evolution in closed quantum systems}\label{sectionClosedSystems}

Probably time evolution of physical systems is the main factor in order to understand their nature and properties. In classical systems time evolution is usually formulated in terms of differential equations (i.e. Euler-Lagrange's equations, Hamilton's equations, Liouville equation, etc.), which can present very different characteristics depending on which physical system they correspond to. From the beginning of the quantum theory, physicists have been often trying to translate the methods which were useful in the classical case to the quantum one, so was that Erwin Schr\"{o}dinger obtained the first quantum evolution equation in 1926 \cite{Sc26}.

This equation, called Schr\"{o}dinger's equation since then, describes the behavior of an \emph{isolated} or \emph{closed} quantum system, that is, by definition, a system which does not interchanges information (i.e. energy and/or matter \cite{OpenQuantumSystemRemark}) with another system. So if our isolated system is in some pure state $|\psi(t)\rangle\in\mathcal{H}$ at time $t$, where $\mathcal{H}$ denotes the Hilbert space of the system, the time evolution of this state (between two consecutive measurements) is according to
\begin{equation}
\frac{d}{dt}|\psi(t)\rangle=-\frac{i}{\hbar}H(t) |\psi(t)\rangle,
\end{equation}
where $H(t)$ is the Hamiltonian operator of the system. In the case of mixed states $\rho(t)$ the last equation naturally induces
\begin{equation}\label{Liouville}
\frac{d\rho(t)}{dt}=-\frac{i}{\hbar}[H(t),\rho(t)],
\end{equation}
sometimes called von Neumann or Liouville-von Neumann equation, mainly in the context of statistical physics. From now on we consider units where $\hbar=1$.

There are several ways to justify the Schr\"{o}dinger equation, some of them related with heuristic approaches or variational principles, however in all of them it is necessary to make some extra assumptions apart from the postulates about states and observables; for that reason the Schr\"{o}dinger equation can be taken as another postulate. Of course the ultimate reason for its validity is that the experimental tests are in accordance with it \cite{Zeilinger,Weinberg,Wineland}. In that sense, the Schr\"{o}dinger equation is for quantum mechanics as fundamental as the Maxwell's equations for electromagnetism or the Newton's Laws for classical mechanics.

An important property of Schr\"{o}dinger equation is that it does not change the norm of the states, indeed
\begin{multline}\label{norma0}
\frac{d}{dt}\langle \psi(t)|\psi(t)\rangle=\left(\frac{d\langle \psi(t)|}{dt}\right)|\psi(t)\rangle+\langle \psi(t)|\left(\frac{d|\psi(t)\rangle}{dt}\right)=\\
=i\langle \psi(t)|H^\dagger(t) |\psi(t)\rangle-i\langle \psi(t)|H(t) |\psi(t)\rangle=0,
\end{multline}
as the Hamiltonian is self-adjoint, $H(t)=H^\dagger(t)$. In view of the fact that the Schr\"{o}dinger equation is linear, its solution is given by an evolution family $U(t,r)$,
\begin{equation}
|\psi(t)\rangle=U(t,t_0)|\psi(t_0)\rangle,
\end{equation}
where $U(t_0,t_0)=\mathds{1}$.

The equation (\ref{norma0}) implies that the evolution operator is an isometry (i.e. norm-preserving map), which means it is a unitary operator in case of finite dimensional systems. In infinite dimensional systems one has to prove that $U(t,t_0)$ maps the whole $\mathcal{H}$ onto $\mathcal{H}$ in order to exclude partial isometries, this is easy to verify from the composition law of evolution families \cite{GP90} and then for any general Hilbert space
\begin{equation}
U^\dagger(t,t_0) U(t,t_0)=U(t,t_0)U^\dagger(t,t_0)=\mathds{1}\Rightarrow U^{-1}=U^\dagger.
\end{equation}

On the other hand, according to these definitions for pure states, the evolution of some density matrix $\rho(t)$ is
\begin{equation}\label{rhosanguis}
\rho(t_1)=U(t_1,t_0)\rho(t_0)U^\dagger(t_1,t_0).
\end{equation}
We can also rewrite this relation as
\begin{equation}\label{rhoOperator}
\rho(t_1)=\mathcal{U}_{(t_1,t_0)}[\rho(t_0)],
\end{equation}
where $\mathcal{U}_{(t_1,t_0)}$ is a linear (unitary \cite{unitaryLiouville}) operator acting as
\begin{equation}
\mathcal{U}_{(t_1,t_0)}[\cdot]=U(t_1,t_0)[\cdot]U^\dagger(t_1,t_0).
\end{equation}

Of course the specific form of the evolution operator in terms of the Hamiltonian depends on the properties of the Hamiltonian itself. In the simplest case in which $H$ is time-independent (conservative system), the formal solution of the Sch\"{o}dinger equation is straightforwardly obtained as
\begin{equation}\label{expH}
U(t,t_0)=e^{-i(t-t_0)H/\hbar}.
\end{equation}
When $H$ is time-dependent (non-conservative system) we can formally write the evolution as a Dyson expansion,
\begin{equation}\label{DysonH}
U(t,t_0)=\mathcal{T}e^{\int_{t_0}^tH(t')dt'}.
\end{equation}


Note now that if $H(t)$ is self-adjoint, $-H(t)$ is also self-adjoint, so in this case we have physical inverses for every evolution. In fact when the Hamiltonian is time-independent, $U(t_1,t_0)=U(t_1-t_0)$. So the evolution operator is not only a one-parameter semigroup, but a one-parameter group, with an inverse for every element $U^{-1}(\tau)=U(-\tau)=U^\dagger(\tau)$, such that $U^{-1}(\tau)U(\tau)=U(\tau)U^{-1}(\tau)=\mathds{1}$. Of course in case of time-dependent Hamiltonians, every member of the evolution family $U(t,s)$ has the inverse $U^{-1}(t,s)=U(s,t)=U^\dagger(t,s)$, being this one also unitary.

If we look into the equation (\ref{rhoOperator}), we can identify $\rho$ as a member of the Banach space of the set of trace-class self-adjoint operators, and $\mathcal{U}_{(t_1,t_0)}$ as some linear operator acting on this Banach space. Then everything considered for pure states concerning the form of the evolution operator is also applicable for general mixed states, actually the general solution to the equation (\ref{Liouville}) can be written as
\begin{equation}
\rho(t)=\mathcal{U}_{(t,t_0)}\rho(t_0)=\mathcal{T}e^{\int_{t_0}^t\mathcal{L}_{t'}dt'}\rho(t_0),
\end{equation}
where the generator is given by the commutator $\mathcal{L}_t(\cdot)=-i[H(t),\cdot]$. This is sometimes called the \emph{Liouvillian}, and by taking the derivative one immediately arrives to (\ref{Liouville}).

\section{Time evolution in open quantum systems}\label{sectionOpenSystems}

Consider now the case where the total Hilbert space is decomposed into two parts, in the form $\mathcal{H}=\mathcal{H}_A\otimes\mathcal{H}_B$, each subspace corresponding to a certain quantum system; the question to answer is then, what properties does the time evolution of each subsystem have?.

First of all, one must establish the relation between the total density matrix $\rho\in\mathcal{H}$ and the density matrix of a subsystem, say $A$, which is given by the partial trace over the other subsystem $B$:
\begin{equation}
\rho_A=\tr_B(\rho).
\end{equation}
There are several ways to justify this; see for example \cite{NC00,ShashThesis}. Let us focus first in the description of a measurement of the system $\mathcal{H}_A$ viewed from the whole system $\mathcal{H}=\mathcal{H}_A\otimes\mathcal{H}_B$. For the system $\mathcal{H}_A$ a measurement is given by a set of self-adjoint positive operators $\{M_k\}$ each of them associated with a possible result $k$. Suppose that we are blind to the system $B$, and we see the state of system $A$ as some $\rho_A$. Then the probability to get the result $k$ when we make the measurement $\{M_k\}$ will be
\[
p_A(k)=\tr(M_k\rho_A).
\]
Now one argues that if we are viewing only the part $A$ through the measurement $M_k$, we are in fact observers of an extended system $\mathcal{H}$ through some measurement $\tilde{M}_k$ such that, for physical consistency, for any state of the composed system $\rho$ compatible with $\rho_A$ which is seen from $A$ (see figure \ref{fig1Rev}), the following holds
\[
p_A(k)=\tr(\tilde{M}_k\rho).
\]
But assuming this is true for every state $\rho_A$, we assume that $\tilde{M}_k$ is a genuine measurement operator on $\mathcal{H}$, which implicitly implies that is independent of the state $\rho$ of the whole system on which we are making the measurement. Now consider the special case where the state of the whole system has the form $\rho=\rho_A\otimes\rho_B$, then the above equation is clearly satisfied with the choice $\tilde{M}_k=M_k\otimes \mathds{1}$,
\[
p_A(k)=\tr(M_k\rho_A)=\tr[(M_k\otimes \mathds{1})\rho_A\otimes\rho_B]=\tr(M_k\rho_A)\tr(\rho_B)=\tr(M_k\rho_A).
\]
Actually this is the only possible choice such that $\tilde{M}_k$ is independent of the particular states $\rho_A$ and $\rho_B$ (and so of $\rho$). Indeed, imagine that there exist another solution, $\tilde{M}_k$, independent of $\rho$, which is also a genuine measure on $\hilbert$, then from the linearity of the trace we would get
\[
p_A(k)=\tr(M_k\rho_A)=\tr(M_k\otimes\mathds{1}\rho)=\tr(\tilde{M}_k\rho)\Rightarrow \tr[(\tilde{M}_k-M_k\otimes\mathds{1})\rho]=0,
\]
for all $\rho_A$ and so for all $\rho$. This means
\[
\tr[(\tilde{M}_k-M_k\otimes\mathds{1})\sigma]=0,
\]
for any self-adjoint operator $\sigma$, therefore as the set of self-adjoint operators forms a Banach space this implies $\tilde{M}_k-M_k\otimes\mathds{1}=0$ and then $\tilde{M}_k=M_k\otimes\mathds{1}$.

\begin{figure}[h]
\centering
\includegraphics[width=0.9\textwidth]{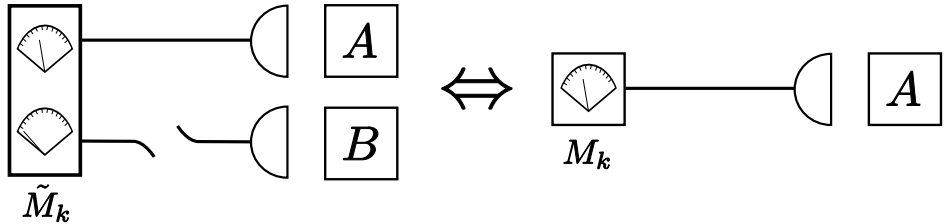}
\caption{On the left side we perform a measure over the combined system $A$ and $B$, which we denote by $\tilde{M}_k$, but we have ``disconnected'' the information provided by $B$. If one assume that this situation is equivalent to a measure $M_k$ only the system $A$, this is illustrated on the right side, then $\tilde{M}_k=M_k\otimes\mathds{1}$ (see discussion on the text).} \label{fig1Rev}
\end{figure}

Once established this relation, one can wonder about the connection between $\rho_A$ and $\rho$. If $\{|\psi_j^A\rangle|\psi_\ell^B\rangle\}$ is a basis of $\hilbert$ then
\begin{eqnarray*}
p_A(k)=\tr[(M_k\otimes \mathds{1})\rho]&=&\sum_{j\ell}\langle\psi_j^A|\langle\psi_\ell^B|(M_k\otimes \mathds{1})\rho|\psi_j^A\rangle|\psi_\ell^B\rangle\\
&=&\sum_{j,\ell}\langle\psi_j^A|M_k\langle\psi_\ell^B|\rho|\psi_\ell^B\rangle|\psi_j^A\rangle=\tr(M_k\rho_A),
\end{eqnarray*}
hence $\rho_A$ is univocally given by the partial trace
\[
\rho_A=\sum_\ell\langle\psi_\ell^B|\rho|\psi_\ell^B\rangle\equiv\tr_B(\rho).
\]

One important property of the partial trace operation is that, even if $\rho=|\psi\rangle\langle\psi|$ was pure, $\rho_A$ can be mixed, this occurs if $|\psi\rangle$ is entangled. 

Given the total density matrix $\rho(t_0)$, by taking the partial trace in (\ref{rhosanguis}), the state of $A$ at time $t_1$, $\rho_A(t_1)$, is given by
\begin{equation}\label{rhoAevolucion}
\rho_A(t_1)=\tr_B[U(t_1,t_0)\rho(t_0)U^\dagger(t_1,t_0)].
\end{equation}

Provided that the total evolution is not factorized $U(t_1,t_0)=U_A(t_1,t_0)\otimes U_B(t_1,t_0)$, both quantum subsystems $A$ and $B$ are interchanging information with each other (they are interacting), and so they are \emph{open quantum systems}.

\subsection{Dynamical maps}

Let us set $A$ as our system to study. A natural task arising from the equation (\ref{rhoAevolucion}) is trying to rewrite it as a \emph{dynamical map} acting on $\mathcal{H}_A$ which connects the states of the subsystem $A$ at times $t_0$ and $t_1$:
\[
\mathcal{E}_{(t_1,t_0)}:\rho_A(t_0)\rightarrow\rho_A(t_1).
\]
The problem with this map is that in general it does not only depend on the global evolution operator $U(t_1,t_0)$ and on the properties of the subsystem $B$, but also on the system $A$ itself.

In order to clarify this, let us write the total initial state as the sum of two contributions \cite{SB01}:
\begin{equation}\label{rhoDesc}
\rho(t_0)=\rho_A(t_0)\otimes\rho_B(t_0)+\rho_{corr}(t_0),
\end{equation}
where the term $\rho_{corr}(t_0)$ is not a quantum state; satisfies
\[
\tr_A[\rho_{corr}(t_0)]=\tr_B[\rho_{corr}(t_0)]=0,
\]
and contains all correlations (classical as well as quantum) between the two subsystems. The substitution of (\ref{rhoDesc}) in (\ref{rhoAevolucion}) provides
\begin{eqnarray}\label{KrausInno}
\rho_A(t_1)&=&\tr_B\{U(t_1,t_0)[\rho_A(t_0)\otimes\rho_B(t_0)+\rho_{corr}(t_0)]U^\dagger(t_1,t_0)\}\nonumber\\
&=&\sum_i\lambda_i\tr_B\{U(t_1,t_0)[\rho_A(t_0)\otimes|\psi_i\rangle\langle\psi_i|]U^\dagger(t_1,t_0)\}\nonumber\\
&+&\tr_B[U(t_1,t_0)\rho_{corr}(t_0)U^\dagger(t_1,t_0)]\nonumber\\
&=&\sum_{\alpha}K_{\alpha}(t_1,t_0)\rho_A(t_0)K_{\alpha}^\dagger(t_1,t_0)+\delta\rho(t_1,t_0)\nonumber\\
&\equiv&\mathcal{E}_{(t_1,t_0)}[\rho_A(t_0)],
\end{eqnarray}
where $\alpha=\{i,j\}$, $K_{\{i,j\}}(t_1,t_0)=\sqrt{\lambda_i}\langle\psi_j|U(t_1,t_0)|\psi_i\rangle$ and we have used the spectral decomposition of $\rho_B(t_0)=\sum_i\lambda_i|\psi_i\rangle\langle\psi_i|$. Apparently the operators $K_\alpha(t_1,t_0)$ depend only on the global unitary evolution and on the initial state of the subsystem $B$, but the inhomogeneous part $\delta\rho(t_1,t_0)=\tr_B[U(t_1,t_0)\rho_{corr}(t_0)U^\dagger(t_1,t_0)]$ may no longer be independent of $\rho_A$ because of the correlation term $\rho_{corr}$ \cite{FactorizedDynamics}. That is, because of the positivity requirement of the total density matrix $\rho(t_0)$, given some $\rho_A(t_0)$ and $\rho_B(t_0)$ not any $\rho_{corr}(t_0)$ is allowed such that $\rho(t_0)$ given by (\ref{rhoDesc}) is a positive operator; in the same way, given $\rho_A(t_0)$ and $\rho_{corr}(t_0)$ not any $\rho_B(t_0)$ is allowed.

A paradigmatic example of this is the case of reduced pure states. Then the following theorem is fulfilled.

\begin{theorem}\label{rhoApuro} Let one of the reduced states of a bipartite quantum system be pure, say $\rho_A=|\psi\rangle\mbox{}_A\langle\psi|$, then
$\rho_{corr}=0$.
\end{theorem}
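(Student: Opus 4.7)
The plan is to combine two standard facts: pure states are extreme points of the convex set of density matrices, and a bipartite pure state whose marginal on $A$ has rank one must be a tensor product.

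First, I would spectrally decompose the total state as
\[
\rho = \sum_k p_k |\Psi_k\rangle\langle\Psi_k|, \qquad p_k>0,\ \sum_k p_k=1,
\]
and take the partial trace over $B$ to obtain
\[
\rho_A = \sum_k p_k \rho_A^{(k)}, \qquad \rho_A^{(k)} := \tr_B\bigl(|\Psi_k\rangle\langle\Psi_k|\bigr).
\]
Each $\rho_A^{(k)}$ is itself a density operator on $\mathcal{H}_A$, so $\rho_A$ is presented as a convex combination of density operators. Because $\rho_A = |\psi\rangle_A\langle\psi|$ is a rank-one projector, it is an extreme point of the convex set of states on $\mathcal{H}_A$; the decomposition must therefore be trivial, forcing $\rho_A^{(k)} = |\psi\rangle_A\langle\psi|$ for every $k$ with $p_k>0$.

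Next, for each such $k$, I would apply the Schmidt decomposition to the pure bipartite vector $|\Psi_k\rangle\in\mathcal{H}_A\otimes\mathcal{H}_B$. The Schmidt coefficients squared are precisely the eigenvalues of $\rho_A^{(k)}$, and since this reduced state is a rank-one projector, there is a single nonzero Schmidt coefficient equal to $1$. Hence $|\Psi_k\rangle$ must take the product form $|\psi\rangle_A \otimes |\phi_k\rangle_B$ for some unit vector $|\phi_k\rangle\in\mathcal{H}_B$.

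Substituting back,
\[
\rho = \sum_k p_k\, |\psi\rangle_A\langle\psi| \otimes |\phi_k\rangle_B\langle\phi_k| = |\psi\rangle_A\langle\psi| \otimes \sum_k p_k |\phi_k\rangle_B\langle\phi_k| = \rho_A\otimes\rho_B,
\]
where the second factor is manifestly $\rho_B = \tr_A(\rho)$. Comparing with the defining decomposition $\rho = \rho_A\otimes\rho_B + \rho_{corr}$ then yields $\rho_{corr}=0$. The only delicate point is the extreme-point step in the first paragraph, but this is a direct consequence of the spectral characterization of pure states; everything else is routine.
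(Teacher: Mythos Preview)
Your proof is correct and follows essentially the same approach as the paper: write $\rho$ as a convex combination of pure states, use that a pure $\rho_A$ is an extreme point to force each $\rho_A^{(k)}=|\psi\rangle_A\langle\psi|$, and then invoke the Schmidt decomposition to conclude each $|\Psi_k\rangle$ is a product with fixed $A$-factor. Your presentation is in fact a bit more streamlined, since the paper treats the cases of pure and mixed $\rho$ separately before arriving at the same Schmidt argument.
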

\begin{proof}
Indeed, let us consider first that $\rho=|\psi\rangle\langle\psi|$ is pure as well. By using the Schmidt decomposition of $|\psi\rangle=\sum_{i}\lambda_i|u_i,v_i\rangle$, where for all $i$ in the sum $\lambda_i>0$, and $\{|u_i\rangle\}$ and $\{|v_i\rangle\}$ are orthonormal basis of $\hilbert_A$ and $\hilbert_B$ respectively, we obtain
\begin{equation}\label{pureNoCorr}
\rho_A=\tr_B\left[\sum_{i,j}\lambda_i\lambda_j|u_i,v_i\rangle\langle u_j,v_j|\right]=\sum_{i}\lambda_i^2|u_i\rangle\langle u_i|.
\end{equation}
It is elementary to show that a pure state cannot be written as non-trivial convex combination of another two states \cite{Peres,Ballentine}; then one coefficient, say $\lambda_{i'}$ must be equal to one, and $\lambda_{i}=0$ for all $i\neq i'$. This implies that $\rho$ has the product form
\[
\rho=|u_{i'}\rangle\langle u_{i'}|\otimes |v_{i'}\rangle\langle v_{i'}|\equiv |\psi\rangle\mbox{}_A\langle\psi|\otimes|\psi\rangle\mbox{}_B\langle\psi|.
\]
If $\rho$ is mixed, it can be written as convex combination of pure states
\[
\rho=\sum_kp_k|\psi_k\rangle\langle\psi_k|,
\]
if $\rho_A^{(k)}=\tr_A(|\psi_k\rangle\langle\psi_k|)$, the state of $A$ is
\[
\rho_A=\sum_kp_k\rho_A^{(k)},
\]
but as $\rho_A$ is pure, we have only two possibilities, either some $p_{k'}=1$ and $p_{k}=0$ for $k\neq k'$ or $\rho_A^{(k)}$ is the same state $\rho_A$ for every $k$. In the first case $\rho$ is pure obviously, for the second option let us introduce the Schmidt decomposition again for each state $|\psi_k\rangle=\sum_{i}\lambda_{i,k}|u_{i,k},v_{i,k}\rangle$:
\[
\rho_A=\rho_A^{(k)}=\tr_A(|\psi_k\rangle\langle\psi_k|)=\sum_{i}\lambda_{i,k}^2|u_{i,k}\rangle\langle u_{i,k}|,\quad \forall k
\]
so both $\lambda_{i,k}$ and $|u_{i,k}\rangle$ can be taken to be independent of $k$, and
\[
\rho_A=\sum_{i}\lambda_{i}^2|u_{i}\rangle\langle u_{i}|.
\]
This equation is actually the same as (\ref{pureNoCorr}) and so one $\lambda_{i'}$ is equal to one, and $\lambda_{i}=0$ for all $i\neq i'$. With these requirements we get $|\psi_k\rangle=|u_{i'},v_{i',k}\rangle$, and finally
\begin{multline*}
\rho=\sum_kp_k|\psi_k\rangle\langle\psi_k|=\sum_kp_k|u_{i'},v_{i',k}\rangle\langle u_{i'},v_{i',k}|\\
=|u_{i'}\rangle\langle u_{i'}|\otimes\sum_kp_k|v_{i',k}\rangle\langle v_{i',k}|\equiv|\psi\rangle\mbox{}_A\langle\psi|\otimes\rho_B,
\end{multline*}
and hence $\rho_{corr}=0$.

\end{proof}

As a result, the positivity requirement forces each term in the dynamical map to be interconnected and dependent on the state upon which they act; this means that a dynamical map with some values of $K_\alpha(t_1,t_0)$ and $\delta\rho(t_1,t_0)$ may describe a physical evolution for some states $\rho_A(t_0)$ and an unphysical evolution for others.

Given this fact, a possible approach to the dynamics of open quantum systems is by means of the study of dynamical maps and their positivity domains \cite{JSS04,SS05,Sh05,CTZ08,Modi10,CesarKavanAlan}, however that will not be our aim here.

On the other hand, the following interesting result given independently by D. Salgado \textit{et al.} \cite{SSF04} and D. M. Tong \textit{et al.} \cite{TKOJM04} provides an equivalent parametrization of dynamical maps.

\begin{theorem} Any kind of time evolution of a quantum state $\rho_A$ can always be written in the form
\begin{equation}\label{Salgado}
\rho_A(t_1)=\sum_\alpha K_{\alpha}\left(t_1,t_0,\rho_A\right)\rho_A(t_0)K_{\alpha}^\dagger\left(t_1,t_0,\rho_A\right),
\end{equation}
where $K_{\alpha}\left(t_1,t_0,\rho_A\right)$ are operators which depend on the state $\rho_A$ at time $t_0$.
\end{theorem}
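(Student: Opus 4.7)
The plan is to give a direct constructive proof: exhibit, for a fixed pair $\bigl(\rho_A(t_0),\rho_A(t_1)\bigr)$, an explicit family of operators $K_\alpha$ built out of the spectral data of both density matrices, and then verify by a one-line computation that the right-hand side of (\ref{Salgado}) reproduces $\rho_A(t_1)$. The crucial conceptual point, which removes what would otherwise be a serious obstacle, is that the $K_\alpha$ are \emph{allowed} to depend on $\rho_A(t_0)$; this is what enables us to dispense with completely positive, trace-preserving structure on the map and treat \emph{any} evolution.

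Concretely, I would first write the two spectral decompositions
\[
\rho_A(t_0)=\sum_i p_i |\phi_i\rangle\langle\phi_i|,\qquad
\rho_A(t_1)=\sum_j q_j |\psi_j\rangle\langle\psi_j|,
\]
with $\{|\phi_i\rangle\}$ and $\{|\psi_j\rangle\}$ orthonormal eigenbases of $\mathcal{H}_A$ and $p_i,q_j\geq 0$, $\sum_i p_i=\sum_j q_j=1$. Then I would define, for each pair $\alpha=(j,i)$, the rank-one operator
\[
K_{\alpha}(t_1,t_0,\rho_A)=\sqrt{q_j}\,|\psi_j\rangle\langle\phi_i|.
\]
Note that these operators depend on $\rho_A(t_0)$ through the eigenvectors $|\phi_i\rangle$ and on $\rho_A(t_1)$ (hence also on $\rho_A(t_0)$ and on whatever physical process produced the evolution) through $q_j$ and $|\psi_j\rangle$.

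Next I would verify (\ref{Salgado}) by direct substitution:
\[
\sum_{j,i}K_{(j,i)}\,\rho_A(t_0)\,K_{(j,i)}^\dagger
=\sum_{j,i}q_j\,|\psi_j\rangle\langle\phi_i|\rho_A(t_0)|\phi_i\rangle\langle\psi_j|
=\sum_{j,i}q_j\,p_i\,|\psi_j\rangle\langle\psi_j|,
\]
and the factor $\sum_i p_i=\mathrm{tr}[\rho_A(t_0)]=1$ peels off, leaving exactly $\sum_j q_j|\psi_j\rangle\langle\psi_j|=\rho_A(t_1)$. No invertibility assumption is needed on $\rho_A(t_0)$; indices $i$ with $p_i=0$ simply contribute zero and can be retained or dropped at will.

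The only mildly subtle point, rather than a genuine obstacle, is ensuring that the construction makes sense regardless of the relation between the supports of $\rho_A(t_0)$ and $\rho_A(t_1)$. This is why I would avoid tempting but fragile ans\"atze such as $K=\sqrt{\rho_A(t_1)}\,\rho_A(t_0)^{-1/2}$, which require support containment, and stick to the rank-one construction above, which works universally. The resulting representation is of course far from unique, and this non-uniqueness is worth flagging as a closing remark: the theorem guarantees existence of a Kraus-like expansion, not its canonical character, so no positivity or trace-preservation conclusion can be drawn from the existence of (\ref{Salgado}) alone.
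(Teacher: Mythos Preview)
Your proof is correct and complete. The computation is clean and the discussion of why the rank-one construction avoids support issues is a nice touch.

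The paper's argument is different in presentation, though ultimately related. Rather than building the $K_\alpha$ directly from spectral data of both states, the paper embeds the problem in a doubled Hilbert space: it considers $\rho_A(t_0)\otimes\rho_A(t_1)\in\mathcal{H}_A\otimes\mathcal{H}_A$, applies the unitary SWAP, and traces out the second factor to recover $\rho_A(t_1)$. The Kraus operators then emerge from the spectral decomposition of $\rho_A(t_1)$ viewed as the ``ancilla'' state, via the standard recipe $K_{\{j,k\}}=\sqrt{q_j}\,\langle\psi_k|U_{\mathrm{SWAP}}|\psi_j\rangle$. If one unpacks this, the resulting operators are $\sqrt{q_j}\,|\psi_j\rangle\langle\psi_k|$ with both bra and ket taken from the eigenbasis of $\rho_A(t_1)$; your operators use the eigenbasis of $\rho_A(t_0)$ for the bra instead. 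Both choices work because in either case the bra vectors form a complete orthonormal set, so the inner sum reproduces $\tr\rho_A(t_0)=1$.

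What each approach buys: yours is more elementary and entirely self-contained, with no auxiliary Hilbert space or unitary to introduce. The paper's route has the pedagogical advantage of exhibiting the state-dependent Kraus form as a special instance of the dilation picture used elsewhere (cf.\ Theorem~\ref{dilation}): the dependence on $\rho_A$ enters precisely because the ancilla state $\rho_A(t_1)$ depends on the input. Either argument would be acceptable here; yours is arguably the more transparent of the two.
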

\begin{proof} We give here a very simple proof based on an idea presented in \cite{CTZ08}. Let $\rho_A(t_0)$ and $\rho_A(t_1)$ be the states of a quantum system at times $t_0$ and $t_1$ respectively. Consider formally the product $\rho_A(t_0)\otimes\rho_A(t_1) \in \hilbert_A\otimes\hilbert_{A}$, and a unitary operation $U_{\mathrm{SWAP}}$ which interchanges the states in a tensor product $U_{\mathrm{SWAP}}|\psi_1\rangle\otimes|\psi_2\rangle=|\psi_2\rangle\otimes|\psi_1\rangle$, so $U_{\mathrm{SWAP}}(A\otimes B) U^\dagger_{\mathrm{SWAP}}=B\otimes A$. It is evident that the time evolution between $\rho_A(t_0)$ and $\rho_A(t_1)$ can be written as
\[
\mathcal{E}_{(t_1,t_0)}[\rho_A(t_0)]=\tr_{2}\left[U_{\mathrm{SWAP}}\rho_A(t_0)\otimes\rho_A(t_1)U^\dagger_{\mathrm{SWAP}}\right]=\rho_A(t_1), \]
where $\tr_{2}$ denotes the partial trace with respect to the second member of the composed state. By taking the spectral decomposition of $\rho_A(t_1)$ in the central term of the above equation we obtain an expression of the form (\ref{Salgado}).

\end{proof}
\noindent Note that the decomposition of (\ref{Salgado}) is clearly not unique. For more details see references \cite{SSF04} and \cite{TKOJM04}.

It will be convenient for us to understand the dependence on $\rho_A$ of the operators $K_{\alpha}\left(t_1,t_0,\rho_A\right)$ in (\ref{Salgado}) as a result of limiting the action of (\ref{KrausInno}). That is, the above theorem asserts that restricting the action of any dynamical map (\ref{KrausInno}) from its positivity domain to a small enough set of states, actually to one state ``$\rho_A$'' if necessary, is equivalent to another dynamical map without inhomogeneous term $\delta\rho(t_1,t_0)$. However this fact can also be visualized as a by-product of the specific nonlinear features of a dynamical map as (\ref{KrausInno}) \cite{Pe94,vWL00}.

As one can already guess, the absence of the inhomogeneous part will play a key role in the dynamics. Before we move on to examine this in the next section, it is worth remarking that C. A. Rodr\'iguez \textit{et al.} \cite{RMKSS08} have found that for some classes of initial separable states $\rho(t_0)$ (the ones which have vanishing quantum discord), the dynamical map (\ref{Salgado}) is the same for a set of commuting states of $\rho_A$. So we can remove the dependence on $\rho_A$ of $K_{\alpha}\left(t_1,t_0,\rho_A\right)$ when applying (\ref{Salgado}) inside of the set of states commuting with $\rho_A$. 

\subsection{Universal dynamical maps} \label{sectionUDM}

We shall call a \emph{universal dynamical map (UDM)} to a dynamical map which is independent of the state it acts upon, and so it describes a plausible physical evolution for any state $\rho_{A}$. In particular, since for pure $\rho_{A}(t_0)$,  $\rho_{corr}(t_0)=0$, according to (\ref{KrausInno}) the most general form of a UDM is given by
\begin{equation}\label{Kraus}
\rho_A(t_1)=\mathcal{E}_{(t_1,t_0)}[\rho_A(t_0)]=\sum_\alpha K_{\alpha}\left(t_1,t_0\right)\rho_A(t_0)K_{\alpha}^\dagger\left(t_1,t_0\right),
\end{equation}
in addition, the equality
\begin{equation}\label{KrausN}
\sum_\alpha K_{\alpha}^\dagger\left(t_1,t_0\right) K_{\alpha}\left(t_1,t_0\right) = \mathds{1}
\end{equation}
is fulfilled because $\tr[\rho_A(t_1)]=1$ for any initial state $\rho_{A}(t_0)$. Compare (\ref{Salgado}) and (\ref{Kraus}).

An important result is the following (see figure \ref{figuredilation}):

\begin{theorem} \label{dilation}
A dynamical map is a UDM if and only if it is induced from an extended system with the initial condition $\rho(t_0)=\rho_A(t_0)\otimes\rho_B(t_0)$ where $\rho_B(t_0)$ is fixed for any $\rho_A(t_0)$.
\end{theorem}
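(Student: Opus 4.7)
My plan is to prove both directions separately, with the forward direction being a short consequence of the construction in Eq.~(\ref{KrausInno}) and the converse direction being an explicit Stinespring-type dilation built from the Kraus representation (\ref{Kraus}).

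For the ``if'' direction, I would take $\rho(t_0)=\rho_A(t_0)\otimes\rho_B(t_0)$ with $\rho_B(t_0)$ fixed and simply read off Eq.~(\ref{KrausInno}). Because $\rho_{\mathrm{corr}}(t_0)=0$ the inhomogeneous term $\delta\rho(t_1,t_0)$ vanishes identically, so
\[
\mathcal{E}_{(t_1,t_0)}[\rho_A(t_0)]=\sum_{\alpha}K_\alpha(t_1,t_0)\rho_A(t_0)K_\alpha^\dagger(t_1,t_0),
\]
with $K_{\{i,j\}}(t_1,t_0)=\sqrt{\lambda_i}\langle\psi_j|U(t_1,t_0)|\psi_i\rangle$ determined entirely by $U(t_1,t_0)$ and the fixed spectral data of $\rho_B(t_0)$. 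Since these operators carry no dependence on $\rho_A(t_0)$, the map is state-independent, hence a UDM.

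For the ``only if'' direction, I would start from a UDM written in the form (\ref{Kraus}) with the normalization (\ref{KrausN}), and explicitly build the required dilation. Choose an auxiliary Hilbert space $\mathcal{H}_B$ of dimension at least equal to the number of Kraus operators, fix an orthonormal basis $\{|\alpha\rangle_B\}$, single out one element $|0\rangle_B$, and set $\rho_B(t_0)=|0\rangle_B\langle 0|$. Define a linear map $V:\mathcal{H}_A\to\mathcal{H}_A\otimes\mathcal{H}_B$ by
\[
V|\psi\rangle_A=\sum_\alpha K_\alpha(t_1,t_0)|\psi\rangle_A\otimes|\alpha\rangle_B.
\]
Condition (\ref{KrausN}) gives $\langle\psi|V^\dagger V|\psi\rangle=\sum_\alpha\langle\psi|K_\alpha^\dagger K_\alpha|\psi\rangle=\langle\psi|\psi\rangle$, so $V$ is an isometry from $\mathcal{H}_A$ onto the subspace $V(\mathcal{H}_A)\subset\mathcal{H}_A\otimes\mathcal{H}_B$. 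Identifying $\mathcal{H}_A$ with $\mathcal{H}_A\otimes|0\rangle_B$, we may regard $V$ as a partial isometry on $\mathcal{H}_A\otimes\mathcal{H}_B$ defined on the subspace $\mathcal{H}_A\otimes|0\rangle_B$, which by standard linear algebra (extend any orthonormal basis of the initial subspace and any orthonormal basis of the image to full bases of $\mathcal{H}_A\otimes\mathcal{H}_B$) extends to a unitary $U$ on the whole $\mathcal{H}_A\otimes\mathcal{H}_B$ satisfying $U(|\psi\rangle_A\otimes|0\rangle_B)=V|\psi\rangle_A$. A direct computation then verifies
\[
\operatorname{tr}_B\!\left[U\bigl(\rho_A(t_0)\otimes|0\rangle\langle 0|_B\bigr)U^\dagger\right]=\sum_{\alpha,\beta}K_\alpha\rho_A(t_0)K_\beta^\dagger\,\langle\beta|\alpha\rangle=\sum_\alpha K_\alpha\rho_A(t_0)K_\alpha^\dagger=\mathcal{E}_{(t_1,t_0)}[\rho_A(t_0)],
\]
so the UDM is induced by the product initial condition $\rho_A(t_0)\otimes\rho_B(t_0)$ with $\rho_B(t_0)$ fixed.

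The only genuinely delicate step is the extension of the isometry $V$ to a global unitary $U$; in finite dimensions this is routine linear algebra (one only needs enough room in $\mathcal{H}_B$, which is guaranteed by taking its dimension at least as large as the number of Kraus operators), so I do not expect substantive difficulties. The conceptual content is that (\ref{KrausN}) is exactly the hypothesis one needs to make $V$ an isometry, which in turn is exactly what allows the dilation to exist.
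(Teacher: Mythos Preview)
Your proof is correct and follows essentially the same route as the paper: for the ``if'' direction both invoke the Kraus construction (\ref{KrausInno}) with $\rho_{\mathrm{corr}}=0$, and for the ``only if'' direction both build a Stinespring-type dilation with a pure fixed ancilla state, using (\ref{KrausN}) to ensure the relevant map is an isometry extendable to a unitary. Your version is more explicit---you actually write down the isometry $V$ and spell out the extension argument---whereas the paper merely observes that the $K_\alpha$ fix only a column block of $U$ and that (\ref{KrausN}) is the unitarity condition on that block, leaving the existence of the extension implicit.
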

\begin{proof}
Given such an initial condition the result is elementary to check. Conversely, given a UDM (\ref{Kraus}) one can always consider the extended state $\rho(t_0)=\rho_A(t_0)\otimes|\psi\rangle_B\langle\psi|$, with $|\psi\rangle_B\in\hilbert_B$ fixed for every state $\rho_A(t_0)$, and so $K_{\alpha}\left(t_1,t_0\right)=\mbox{}_B\langle\phi_\alpha|U(t_1,t_0)|\psi\rangle_B$, where ${|\phi_\alpha\rangle_B}$ is a basis of $\hilbert_B$. Since $K_{\alpha}\left(t_1,t_0\right)$ only fixes a few partial elements of $U(t_1,t_0)$ and the requirement (\ref{KrausN}) is consistent with the unitarity condition
\begin{multline*}
\sum_\alpha\mbox{}_B\langle\psi|U^\dagger(t_1,t_0)|\phi_\alpha\rangle\mbox{}_B\langle\phi_\alpha|U(t_1,t_0)|\psi\rangle_B=\mbox{}_B\langle\psi|U^\dagger(t_1,t_0)U(t_1,t_0)|\psi\rangle_B\\
=\mbox{}_B\langle\psi|\mathds{1}\otimes\mathds{1}|\psi\rangle_B=\mathds{1},
\end{multline*}
such a unitary operator $U(t_1,t_0)$ always exists.

\end{proof}

\begin{figure}[h]
\centering
\includegraphics[width=0.6\textwidth]{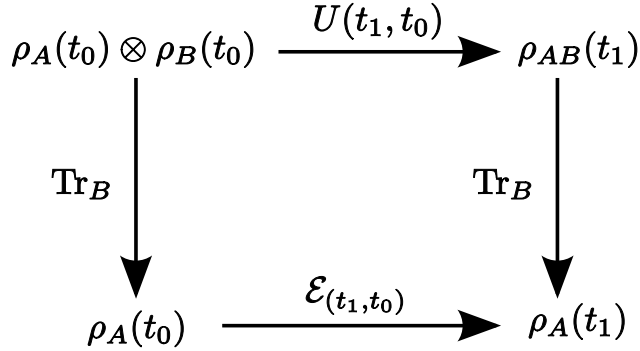}
\caption{Schematic illustration of the result presented in theorem \ref{dilation}.}\label{figuredilation}
\end{figure}

Moreover UDMs possess other interesting properties. For instance they are linear maps, that is for $\rho_A(t_0)=\sum_i\lambda_i\rho_A^{(i)}(t_0)$ $\left(\text{with }\lambda_i\geq0\text{ and }\sum_i\lambda_i=1\right)$, a straightforward consequence from definition of UDM is 
\[
\mathcal{E}_{(t_1,t_0)}[\rho_A(t_0)]=\sum_i\lambda_i\mathcal{E}_{(t_1,t_0)}[\rho_A^{(i)}(t_0)].
\]
In addition, UDMs are positive operations (i.e. the image of a positive operator is also a positive operator).

Furthermore, a UDM holds another remarkable mathematical property. Consider the following ``Gedankenexperiment''; imagine that the system $AB$ is actually part of another larger system $ABW$, where the $W$ component interact neither with $A$ nor with $B$, and so it is hidden from our eyes, playing the role of a simple bystander of the motion of $A$ and $B$. Then let us consider the global dynamics of the three subsystems. Since $W$ is disconnected from $A$ and $B$, the unitary evolution of the whole system is of the form $U_{AB}(t_1,t_0)\otimes U_W(t_1,t_0)$, where $U_W(t_1,t_0)$ is the unitary evolution of the subsystem $W$. If the initial state is $\rho_{ABW}(t_0)$ then
\[
\rho_{ABW}(t_1)=U_{AB}(t_1,t_0)\otimes U_W(t_1,t_0)\rho_{ABW}(t_0)U_{AB}^\dagger(t_1,t_0)\otimes U_W^\dagger(t_1,t_0).
\]
Note that by taking the partial trace with respect to $W$ we get
\begin{eqnarray*}
\rho_{AB}(t_1)&=&U_{AB}(t_1,t_0)\tr_W\left[U_W(t_1,t_0)\rho_{ABW}(t_0)U_W^\dagger(t_1,t_0)\right]U_{AB}^\dagger(t_1,t_0) \\
&=&U_{AB}(t_1,t_0)\tr_W\left[U_W^\dagger(t_1,t_0)U_W(t_1,t_0)\rho_{ABW}(t_0)\right]U_{AB}^\dagger(t_1,t_0)\\
&=&U_{AB}(t_1,t_0)\rho_{AB}(t_0)U_{AB}^\dagger(t_1,t_0),
\end{eqnarray*}
so indeed the presence of $W$ does not perturb the dynamics of $A$ and $B$. Since we assume that the evolution of $A$ is described by a UDM, according to the theorem \ref{dilation} the initial condition must be $\rho_{AB}(t_0)=\rho_{A}(t_0)\otimes\rho_{B}(t_0)$, and then the most general initial condition of the global system which is compatible with that has to be of the form $\rho_{ABW}(t_0)=\rho_{AW}(t_0)\otimes\rho_{B}(t_0)$, where $\rho_{B}(t_0)$ is fixed for any $\rho_{AW}(t_0)$. Now let us address our attention to the dynamics of the subsystem $AW$:
\begin{multline*}
\rho_{AW}(t_1)=\tr_B\left[U_{AB}(t_1,t_0)\otimes U_W(t_1,t_0)\rho_{AW}(t_0)\otimes\rho_{B}(t_0)\right.\\
\left. U_{AB}^\dagger(t_1,t_0)\otimes U_W^\dagger(t_1,t_0)\right],
\end{multline*}
by using the spectral decomposition of $\rho_{B}(t_0)$, as we have already done a couple of times, one obtains
\begin{eqnarray*}
\rho_{AW}(t_1)&=&\sum_\alpha K_{\alpha}\left(t_1,t_0\right)\otimes U_W(t_1,t_0)\rho_{AW}(t_0)K_{\alpha}^\dagger\left(t_1,t_0\right)\otimes U_W^\dagger(t_1,t_0)\\
&=&\mathcal{E}_{(t_1,t_0)}\otimes\mathcal{U}_{(t_1,t_0)}\left[\rho_{AW}(t_0)\right].
\end{eqnarray*}
Here $\mathcal{E}_{(t_1,t_0)}$ is a dynamical map on the subsystem $A$ and the operator $\mathcal{U}_{(t_1,t_0)}[\cdot]=U_W(t_1,t_0)[\cdot]U_W^\dagger(t_1,t_0)$ is the unitary evolution of $W$. As we see, the dynamical map on the system $AW$ is the tensor product of each individual dynamics. So if the evolution of a system is given by a UDM, $\mathcal{E}_{(t_1,t_0)}$, then for any unitary evolution $\mathcal{U}_{(t_1,t_0)}$ of any dimension, $\mathcal{E}_{(t_1,t_0)}\otimes\mathcal{U}_{(t_1,t_0)}$ is also a UDM; and in particular is a positive-preserving operation. By writing
\[
\mathcal{E}_{(t_1,t_0)}\otimes\mathcal{U}_{(t_1,t_0)}=\left[\mathcal{E}_{(t_1,t_0)}\otimes\mathds{1}\right]\left[\mathds{1}\otimes\mathcal{U}_{(t_1,t_0)}\right],
\]
as $\mathds{1}\otimes\mathcal{U}_{(t_1,t_0)}$ is a unitary operator, the condition for positivity-preserving asserts that $\mathcal{E}_{(t_1,t_0)}\otimes\mathds{1}$ is positive. Linear maps fulfilling this property are called \emph{completely positive} maps \cite{St55}.

Since we can never dismiss the possible existence of some extra system $W$, out of our control, a UDM must always be completely positive. This may be seen directly from the decomposition (\ref{Kraus}). The reverse statement, i.e. any completely positive map can be written as (\ref{Kraus}), is a famous result proved by Kraus \cite{Kraus83}; thus the formula (\ref{Kraus}) is referred as \emph{Kraus decomposition} of a completely positive map. Therefore another definition of a UDM is a (trace-preserving) linear map which is completely positive, see for instance \cite{BrPe02,NC00}.

Complete positivity is a stronger requirement than positivity, the best known example of a map which is positive but not completely positive is the transposition map (see for example \cite{NC00}), so it cannot represent a UDM. It is appropriate to stress that, despite the crystal-clear reasoning about the necessity of complete positivity in the evolution of a quantum system, it is only required if the evolution is given by a UDM, a general dynamical map may not possess this property \cite{Pe94,vWL00,JSS04,SSF04,SS05,Sh05,CTZ08}. Indeed, if $\delta\rho(t_1,t_0)\ne0$, $\mathcal{E}_{(t_1,t_0)}$ given by (\ref{KrausInno}) is not positive for every state, so certainly it cannot be completely positive. In addition note that if $\rho_{ABW}(t_0)\neq\rho_{AW}(t_0)\otimes\rho_{B}(t_0)$ we can no longer express the dynamical map on $AW$ as $\mathcal{E}_{(t_1,t_0)}\otimes\mathcal{U}_{(t_1,t_0)}$, so the physical requirement for complete positivity is missed.


\subsection{Universal dynamical maps as contractions} \label{sectionContractions}

Let us consider now the Banach space $\mathfrak{B}$ of the trace-class operators with the trace norm (see comments to the definition \ref{defBanach}). Then, as we know, an operator $\rho\in\mathfrak{B}$ is a quantum state if it is positive-semidefinite, this is
\[
\langle \psi|\rho|\psi\rangle\geq0, \quad  |\psi\rangle\in\hilbert,
\]
and it has trace one, which is equal to the trace norm because the positivity of its spectrum,
\[
\rho\in\mathfrak{B}\text{ is a quantum state }\Leftrightarrow \rho\geq0 \text{ and } \|\rho\|_1=1.
\]
We denote the set of quantum states by $\bar{\mathfrak{B}}^+\subset\mathfrak{B}$, actually the Banach space $\mathfrak{B}$ the smallest linear space which contains $\bar{\mathfrak{B}}^+$; in other words, any other linear space containing $\bar{\mathfrak{B}}^+$ also contains $\mathfrak{B}$.

We now focus our attention in possible linear transformations on the Banach space $\mathcal{E}(\mathfrak{B}):\mathfrak{B}\mapsto\mathfrak{B}$, and in particular, as we have said in section \ref{sectionBanach}, the set of these transformations is the dual Banach space $\mathfrak{B}^\ast$, in this case with the induced norm
\[
\|\mathcal{E}\|_{1}=\sup_{\rho\in\mathfrak{B}, \rho\neq0}\left\{\frac{\|\mathcal{E}(\rho)\|_1}{\|\rho\|_1}\right\}.
\]

Particularly we are interested in linear maps which leave invariant the subset $\bar{\mathfrak{B}}^+$, i.e. they connect any quantum state with another quantum state, as a UDM does. The following theorem holds

\begin{theorem} A linear map $\mathcal{E}\in\mathfrak{B}^\ast$ leaves invariant $\bar{\mathfrak{B}}^+$ if and only if it preserves the trace and is a contraction on $\mathfrak{B}$ \label{theoCPTcontractions}
\[
\|\mathcal{E}\|_{1}\leq1.
\]
\end{theorem}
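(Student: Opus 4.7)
The plan is to prove both implications by leveraging the Jordan decomposition $A=A_+-A_-$ of any self-adjoint trace-class operator into its orthogonal positive and negative parts, together with the fundamental identity $\|X\|_1=\tr(X)$ valid for any $X\geq 0$.

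For the forward implication, assume $\mathcal{E}$ maps $\bar{\mathfrak{B}}^+$ into itself. First I would show trace preservation: given arbitrary $A\in\mathfrak{B}$, write $A=A_+-A_-$ with $A_\pm\geq 0$, normalize each nonzero piece to obtain the states $A_\pm/\tr(A_\pm)$, and use the fact that $\mathcal{E}$ sends these to states (which have unit trace) to conclude $\tr[\mathcal{E}(A)]=\tr(A_+)-\tr(A_-)=\tr(A)$. Next, for the contraction property, use that $\mathcal{E}(A_\pm)\geq 0$ by hypothesis, so $\|\mathcal{E}(A_\pm)\|_1=\tr[\mathcal{E}(A_\pm)]=\tr(A_\pm)$, and then estimate
\[
\|\mathcal{E}(A)\|_1=\|\mathcal{E}(A_+)-\mathcal{E}(A_-)\|_1\leq \|\mathcal{E}(A_+)\|_1+\|\mathcal{E}(A_-)\|_1=\tr(A_+)+\tr(A_-)=\|A\|_1,
\]
which yields $\|\mathcal{E}\|_1\leq 1$ by the definition of the induced norm.

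For the converse, suppose $\mathcal{E}$ is trace-preserving and $\|\mathcal{E}\|_1\leq 1$, and take $\rho\in\bar{\mathfrak{B}}^+$. Trace preservation immediately gives $\tr[\mathcal{E}(\rho)]=1$, so only positivity of $\mathcal{E}(\rho)$ remains. The idea here is an argument by contradiction: assume $\mathcal{E}(\rho)$ has a nontrivial negative part and write its Jordan decomposition $\mathcal{E}(\rho)=P_+-P_-$ with $P_\pm\geq 0$ supported on orthogonal subspaces, so that $\|\mathcal{E}(\rho)\|_1=\tr(P_+)+\tr(P_-)$. Using trace preservation, $\tr(P_+)-\tr(P_-)=1$, which rearranges to $\tr(P_+)=1+\tr(P_-)$ and therefore
\[
\|\mathcal{E}(\rho)\|_1=1+2\tr(P_-).
\]
If $\tr(P_-)>0$ this exceeds $1=\|\rho\|_1$, contradicting $\|\mathcal{E}(\rho)\|_1\leq \|\mathcal{E}\|_1\|\rho\|_1\leq 1$. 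Hence $P_-=0$, so $\mathcal{E}(\rho)\geq 0$, completing the argument.

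The main obstacle I anticipate is the converse direction: the subtle point is that neither contraction nor trace preservation alone suffices to deduce positivity, but the combination forces it through the sharp identity $\|X\|_1=\tr(X_+)+\tr(X_-)$ for self-adjoint $X$. Everything else is a straightforward application of the Jordan decomposition and the triangle inequality for $\|\cdot\|_1$; self-adjointness of $\mathcal{E}(\rho)$ needed to apply the Jordan decomposition is automatic since $\mathfrak{B}$ was defined as the space of self-adjoint trace-class operators.
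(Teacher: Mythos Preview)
Your proof is correct and follows essentially the same approach as the paper's: both directions rely on the Jordan (spectral) decomposition $A=A_+-A_-$ into orthogonal positive parts together with the identity $\|X\|_1=\tr(X)$ for $X\geq 0$, with the forward direction handled via the triangle inequality exactly as you do. In the converse the paper phrases the argument as the sandwich $\|\rho\|_1=\tr[\mathcal{E}(\rho)]\leq\|\mathcal{E}(\rho)\|_1\leq\|\rho\|_1$ and then invokes the characterization ``$\tr(X)=\|X\|_1\Leftrightarrow X\geq 0$'', whereas you unpack this same identity explicitly via $\|\mathcal{E}(\rho)\|_1=1+2\tr(P_-)$; these are the same argument in different clothing.
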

\begin{proof} Assume that $\mathcal{E}\in\mathfrak{B}^\ast$ leaves invariant the set of quantum states $\bar{\mathfrak{B}}^+$, then it is obvious that it preserves the trace; moreover, it implies that for every positive $\rho$ the trace norm is preserved, so $\|\mathcal{E}(\rho)\|_1=\|\rho\|_1$. Let $\sigma\in\mathfrak{B}$ but $\sigma\notin\bar{\mathfrak{B}}^+$, then by the spectral decomposition we can split $\sigma=\sigma^+-\sigma^-$, where
\[
\begin{cases}
\sigma^+=\sum\lambda_j|\psi_j\rangle\langle\psi_j|, \text{ for }\lambda_i\geq0,\\
\sigma^-=-\sum\lambda_j|\psi_j\rangle\langle\psi_j|, \text{ for }\lambda_j<0,
\end{cases}
\]
here $\lambda_j$ are the eigenvalues and $|\psi_j\rangle\langle\psi_j|$ the corresponding eigenvectors. Note that $\sigma^+$ and $\sigma^-$ are both positive operators. Because $|\psi_j\rangle\langle\psi_j|$ are orthogonal projections we get (in other words as $\|\sigma\|_1=\sum_j|\lambda_j|$):
\[
\|\sigma\|_1=\|\sigma^+\|_1+\|\sigma^-\|_1,
\]
but then
\[
\|\mathcal{E}(\sigma)\|_1=\|\mathcal{E}(\sigma^+)-\mathcal{E}(\sigma^-)\|_1\leq\|\mathcal{E}(\sigma^+)\|+\|\mathcal{E}(\sigma^-)\|_1=\|\sigma^+\|_1+\|\sigma^-\|_1=\|\sigma\|_1,
\]
so $\mathcal{E}$ is a contraction.

Conversely, assume that $\mathcal{E}$ is a contraction and preserves the trace, then for $\rho\in\bar{\mathfrak{B}}^+$ we have the next chain of inequalities:
\[
\|\rho\|_1=\mathrm{Tr}(\rho)=\mathrm{Tr}[\mathcal{E}(\rho)]\leq\|\mathcal{E}(\rho)\|_1\leq\|\rho\|_1,
\]
so $\|\mathcal{E}(\rho)\|_1=\|\rho\|_1$ and $\mathrm{Tr}[\mathcal{E}(\rho)]=\|\mathcal{E}(\rho)\|_1$. Since $\rho\in\bar{\mathfrak{B}}^+$ if and only if $\|\rho\|_1=\mathrm{Tr}(\rho)=1$, the last equality implies that $\mathcal{E}(\rho)\in\bar{\mathfrak{B}}^+$ for any $\rho\in\bar{\mathfrak{B}}^+$.

\end{proof}

This theorem was first proven by A. Kosakowski in the references \cite{Kossakowski1,Kossakowski2}, later on, M. B. Ruskai also analyzed the necessary condition in \cite{Ruskai}.

\subsection{The inverse of a universal dynamical map}

Let us come back to the scenario of dynamical maps. Given a UDM, $\mathcal{E}_{(t_1,t_0)}$, one can wonder whether there is another UDM describing the evolution in the time reversed sense $\mathcal{E}_{(t_0,t_1)}$ in such a way that
\[
\mathcal{E}_{(t_0,t_1)}\mathcal{E}_{(t_1,t_0)}=\mathcal{E}_{(t_1,t_0)}^{-1}\mathcal{E}_{(t_1,t_0)}=\mathds{1}.
\]
Note, first of all, that it is quite easy to write a UDM which is not bijective (for instance, take $\{|\psi_n\rangle\}$ to be an orthonormal basis of the space, then define: $\mathcal{E}(\rho)=\sum_n|\phi\rangle\langle \psi_n|\rho|\psi_n\rangle\langle\phi|=\tr(\rho)|\phi\rangle\langle\phi|$, being $|\phi\rangle$ an arbitrary fixed state), and so the inverse does not always exist. But even if it is mathematically possible to invert a map, we are going to see that in general the inverse is not another UDM.

\begin{theorem}A UDM, $\mathcal{E}_{(t_1,t_0)}$, can be inverted by another UDM if and only if it is unitary $\mathcal{E}_{(t_1,t_0)}=\mathcal{U}_{(t_1,t_0)}$. \label{WignerUDM}
\end{theorem}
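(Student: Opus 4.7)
\emph{Proof plan.} The plan is to combine the contraction characterization of UDMs from Theorem~\ref{theoCPTcontractions} with a Kraus-decomposition argument. The ``if'' direction is immediate: when $\mathcal{E}_{(t_1,t_0)}(\rho)=U(t_1,t_0)\rho U^{\dagger}(t_1,t_0)$ with $U(t_1,t_0)$ unitary, the map $\rho\mapsto U^{\dagger}(t_1,t_0)\rho U(t_1,t_0)$ is manifestly completely positive and trace-preserving, hence a UDM, and it inverts $\mathcal{E}_{(t_1,t_0)}$.

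For the non-trivial ``only if'' direction, let $\mathcal{E}:=\mathcal{E}_{(t_1,t_0)}$ and suppose $\mathcal{F}:=\mathcal{E}^{-1}$ is also a UDM. I would first apply Theorem~\ref{theoCPTcontractions} to each to obtain that both are trace-norm contractions on $\mathfrak{B}$; the sandwich
\[
\|\rho\|_1=\|\mathcal{F}(\mathcal{E}(\rho))\|_1\leq\|\mathcal{E}(\rho)\|_1\leq\|\rho\|_1
\]
then forces $\|\mathcal{E}(\rho)\|_1=\|\rho\|_1$ for every $\rho\in\mathfrak{B}$, so $\mathcal{E}$ is a trace-norm isometry. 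This is the dynamical content of the hypothesis; the remainder is structural.

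For the structural part I would invoke the Kraus representations $\mathcal{E}(\rho)=\sum_{\alpha}K_{\alpha}\rho K_{\alpha}^{\dagger}$ and $\mathcal{F}(\rho)=\sum_{\beta}L_{\beta}\rho L_{\beta}^{\dagger}$, which exist because both maps are completely positive. Composing them, the identity channel is recovered with Kraus operators $\{L_{\beta}K_{\alpha}\}$, so $\rho=\sum_{\alpha,\beta}(L_{\beta}K_{\alpha})\rho(L_{\beta}K_{\alpha})^{\dagger}$ for every $\rho$. Evaluating on an arbitrary pure state $|\psi\rangle\langle\psi|$ writes a rank-one positive operator as a sum of positive rank-one operators, which is only possible if each $L_{\beta}K_{\alpha}|\psi\rangle$ is collinear with $|\psi\rangle$. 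Since $|\psi\rangle$ was arbitrary, every vector is an eigenvector of $L_{\beta}K_{\alpha}$, forcing $L_{\beta}K_{\alpha}=c_{\beta\alpha}\mathds{1}$ for suitable complex scalars. Picking any pair $(\alpha_0,\beta_0)$ with $c_{\beta_0\alpha_0}\neq 0$ makes $K_{\alpha_0}$ invertible; then the identities $L_{\beta_0}K_{\alpha}=c_{\beta_0\alpha}\mathds{1}$ force $K_{\alpha}=(c_{\beta_0\alpha}/c_{\beta_0\alpha_0})K_{\alpha_0}$ for every $\alpha$. Thus all Kraus operators of $\mathcal{E}$ are scalar multiples of a single operator $U$, giving $\mathcal{E}(\rho)=U\rho U^{\dagger}$, and the trace-preservation condition $U^{\dagger}U=\mathds{1}$ in finite dimension upgrades $U$ to a unitary, so $\mathcal{E}=\mathcal{U}_{(t_1,t_0)}$.

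The main obstacle I anticipate is the lemma asserting that any channel whose Kraus operators $\{M_i\}$ satisfy $\sum_i M_i\rho M_i^{\dagger}=\rho$ for all $\rho$ must have each $M_i$ proportional to $\mathds{1}$. The pure-state argument sketched above handles it cleanly, but it is the one step where some care is required; once it is in place, the rest reduces to bookkeeping on the Kraus decompositions.
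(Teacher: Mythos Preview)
Your proof is correct but follows a genuinely different route from the paper's. Both arguments open with the same contraction sandwich to obtain that $\mathcal{E}$ is a trace-norm isometry, but from there they diverge. The paper uses the isometry to show that $\mathcal{E}$ maps pure states to pure states, then computes $\|\,|\psi_1\rangle\langle\psi_1|-|\psi_2\rangle\langle\psi_2|\,\|_1=\sqrt{1-|\langle\psi_1|\psi_2\rangle|^2}$ to conclude that transition probabilities are preserved, invokes Wigner's theorem to obtain $\mathcal{E}(\rho)=V\rho V^{\dagger}$ with $V$ unitary or anti-unitary, and finally uses complete positivity only at the end to rule out the anti-unitary case (transposition is not CP). Your approach instead exploits complete positivity from the outset via the Kraus form: composing the Kraus decompositions of $\mathcal{E}$ and $\mathcal{E}^{-1}$ gives a Kraus representation of the identity channel, and your pure-state lemma (a sum of positive rank-$\leq 1$ operators equalling a rank-one projector forces each summand into its range) pins every $L_\beta K_\alpha$ to a scalar multiple of $\mathds{1}$, from which a single Kraus operator for $\mathcal{E}$ follows immediately.

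Your argument is more self-contained and elementary---it avoids Wigner's theorem entirely and stays within the algebra of Kraus operators already set up in the text. The paper's route, by contrast, separates the roles of positivity and \emph{complete} positivity more sharply: the Wigner step works for any positive trace-preserving isometry, and CP enters only to exclude anti-unitaries, which is conceptually illuminating. One minor remark: in your write-up the isometry paragraph is actually not used in the subsequent Kraus argument, so it could be dropped; the structural part alone carries the proof.
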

\begin{proof}
The proof is easy (assuming Wigner's theorem) but lengthy. Let us remove the temporal references to make the notation less cumbersome, so we write $\mathcal{E}\equiv\mathcal{E}_{(t_1,t_0)}$.
Since we have shown that UDMs are contractions on $\mathfrak{B}$,
\[
\|\mathcal{E}(\sigma)\|_1\leq\|\sigma\|_1, \quad \sigma\in\mathfrak{B}.
\]
Let us suppose that there exists a UDM, $\mathcal{E}^{-1}$, which is the inverse of $\mathcal{E}$; then by applying the above inequality twice
\[
\|\sigma\|_1=\|\mathcal{E}^{-1}\mathcal{E}(\sigma)\|_1\leq\|\mathcal{E}(\sigma)\|_1\leq\|\sigma\|_1,
\]
and we conclude that
\begin{equation}\label{Wigner1}
\|\mathcal{E}(\sigma)\|_1= \|\sigma\|_1, \quad \forall\sigma\in\mathfrak{B}.
\end{equation}
Next, note that such an invertible $\mathcal{E}$ would connect pure stares with pure states. Indeed, suppose that $|\psi\rangle\langle\psi|$ is a pure state, and $\mathcal{E}(|\psi\rangle\langle\psi|)$ is not, then we can write
\[
\mathcal{E}(|\psi\rangle\langle\psi|)=p\rho_1+(1-p)\rho_2, \quad 1>p>0,
\]
for some $\rho_1\neq\rho_2\neq\mathcal{E}(|\psi\rangle\langle\psi|)$. But after taking the inverse this equation yields
\[
|\psi\rangle\langle\psi|=p\mathcal{E}^{-1}(\rho_1)+(1-p)\mathcal{E}^{-1}(\rho_2),
\]
and $\mathcal{E}^{-1}$ is a (bijective) UDM, $\mathcal{E}^{-1}(\rho_1)\neq\mathcal{E}^{-1}(\rho_2)$ are quantum states; so such a decomposition is impossible as $|\psi\rangle\langle\psi|$ is pure by hypothesis; and a pure state cannot be decomposed as a non-trivial convex combination of another two states.

Now consider $\sigma=\frac{1}{2}(|\psi_1\rangle\langle\psi_1|-|\psi_2\rangle\langle\psi_2|)$ for two arbitrary pure states $|\psi_1\rangle\langle\psi_1|$ and $|\psi_2\rangle\langle\psi_2|$. We can calculate its eigenvalues and eigenvectors by restring the computation to the two dimensional subspace spanned by $|\psi_1\rangle$ and $|\psi_2\rangle$; because any other state $|\psi\rangle$ can be decomposed as a linear combination of $|\psi_1\rangle$, $|\psi_2\rangle$ and $|\psi_\bot\rangle$, such that $\sigma|\psi_\bot\rangle=0$. Thus the eigenvalue equation reads
\[
\sigma(\alpha|\psi_1\rangle+\beta|\psi_2\rangle)=\lambda(\alpha|\psi_1\rangle+\beta|\psi_2\rangle),
\]
for some $\lambda$, $\alpha$ and $\beta$. By taking the scalar product with respect to $\langle\psi_1|$ and $\langle\psi_2|$, and solving the two simultaneous equations one finds
\[
\lambda=\pm\frac{1}{2}\sqrt{1-|\langle\psi_2|\psi_1\rangle|^2},
\]
so the trace norm $\|\sigma\|_1=\sum_j|\lambda_j|$ turns out to be
\begin{equation}\label{Wigner2}
\|\sigma\|_1=\sqrt{1-|\langle\psi_2|\psi_1\rangle|^2}.
\end{equation}
However, since $\mathcal{E}$ connects pure states with pure states,
\[
\mathcal{E}(\sigma)=\frac{1}{2}\left[\mathcal{E}(|\psi_1\rangle\langle\psi_1|)-\mathcal{E}(|\psi_2\rangle\langle\psi_2|)\right]=\frac{1}{2}(|\tilde{\psi}_1\rangle\langle\tilde{\psi}_1|-|\tilde{\psi}_2\rangle\langle\tilde{\psi}_2|),
\]
and therefore from equations (\ref{Wigner1}) and (\ref{Wigner2}) we conclude that
\[
|\langle\psi_2|\psi_1\rangle|=|\langle\tilde{\psi}_2|\tilde{\psi}_1\rangle|.
\]
Invoking Wigner's theorem \cite{GP90,Peres}, the transformation $\mathcal{E}$ have to be implemented by
\[
\mathcal{E}=V\rho V^\dagger,
\]
where $V$ is a unitary or anti-unitary operator. Finally since the effect of an anti-unitary operator involves the complex conjugation \cite{GP90}, that is equivalent to the transposition of the elements of a trace-class self-adjoint operator, which is not a completely positive map; we conclude that $V$ has to be unitary, $\mathcal{E}_{(t_1,t_0)}=\mathcal{U}_{(t_1,t_0)}$.

\end{proof}

\noindent From another perspective, the connection between the failure of an inverse for a UDM to exist just denotes the irreversibility of the universal open quantum systems dynamics.

\subsection{Temporal continuity. Markovian evolutions}\label{sectionTemporalContinuity}

One important problem of the dynamical maps given by UDM is related with their continuity in time. This can be formulated as follows: assume that we know the evolution of a system between $t_0$ and $t_1$, $\mathcal{E}_{(t_1,t_0)}$, and between $t_1$ and a posterior time $t_2$, $\mathcal{E}_{(t_2,t_1)}$. From the continuity of time, one could expect that the evolution between $t_0$ and $t_2$ was given by the composition of applications, $\mathcal{E}_{(t_2,t_0)}=\mathcal{E}_{(t_2,t_1)}\mathcal{E}_{(t_1,t_0)}$. However, the exact meaning of this equation has to be analyzed carefully.

Indeed, let the total initial state be $\rho(t_0)=\rho_A(t_0)\otimes\rho_B(t_0)$, and the unitary evolution operators $U(t_2,t_1)$, $U(t_1,t_0)$ and $U(t_2,t_0)=U(t_2,t_1)U(t_1,t_0)$. Then the evolution of the subsystem $A$ between $t_0$ and any subsequent $t$ is clearly given by,
\begin{eqnarray*}
\begin{split}
\mathcal{E}_{(t_1,t_0)}[\rho_A(t_0)]=\tr_B[U(t_1,t_0)\rho_A(t_0)\otimes\rho_B(t_0)&U^\dagger(t_1,t_0)]\\
=&\sum_\alpha K_{\alpha}\left(t_1,t_0\right)\rho_A(t_0)K_{\alpha}^\dagger\left(t_1,t_0\right),
\end{split}\\
\end{eqnarray*}
\text{and}
\begin{eqnarray*}
\begin{split}
\mathcal{E}_{(t_2,t_0)}[\rho_A(t_0)]=\tr_B[U(t_2,t_0)\rho_A(t_0)\otimes\rho_B(t_0)&U^\dagger(t_2,t_0)]\\
=&\sum_\alpha K_{\alpha}\left(t_2,t_0\right)\rho_A(t_0)K_{\alpha}^\dagger\left(t_2,t_0\right).
\end{split}
\end{eqnarray*}
The problem arises with $\mathcal{E}_{(t_2,t_1)}$:
\[
\begin{split}
\mathcal{E}_{(t_2,t_1)}[\rho_A(t_1)]=\tr_B[U(t_2,t_1)\rho(t_1)&U^\dagger(t_2,t_1)]\\
=&\sum_\alpha K_{\alpha}\left(t_2,t_1,\rho_A\right)\rho_A(t_1)K_{\alpha}^\dagger\left(t_2,t_1,\rho_A\right),
\end{split}
\]
being $\rho(t_1)=U(t_1,t_0)[\rho_A(t_0)\otimes\rho_B(t_0)]U^\dagger(t_1,t_0)$. Since $\rho(t_1)$ is not a tensor product in general, and it depends on what initial states were taken for both subsystems, $\mathcal{E}_{(t_2,t_1)}$ has not the form of an UDM (see figure \ref{fig2Rev}).

\begin{figure}[h]
\centering
\includegraphics[width=0.9\textwidth]{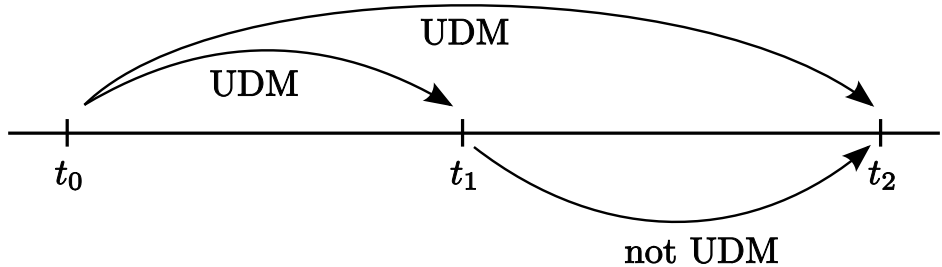}
\caption{General situation describing the dynamics of open quantum systems in terms of a UDM. At time $t_0$ the total state is assumed to be factorized, $\rho(t_0)=\rho_A(t_0)\otimes\rho_B(t_0)$, and the dynamical maps from this point are UDMs. However it is not possible to define a UDM from some intermediate time $t_1>t_0$, because generally at this time the total state can no longer be written as a tensor product.}\label{fig2Rev}
\end{figure}

Let us to illustrate this situation from another point of view. If the application $\mathcal{E}_{(t_1,t_0)}$ is bijective it seems that we can overcome the problem by defining
\begin{equation}\label{inversetrick}
\mathcal{E}_{(t_2,t_1)}=\mathcal{E}_{(t_2,t_0)}\mathcal{E}^{-1}_{(t_1,t_0)},
\end{equation}
which trivially satisfies the relation $\mathcal{E}_{(t_2,t_0)}=\mathcal{E}_{(t_2,t_1)}\mathcal{E}_{(t_1,t_0)}$. However, as we have just seen, $\mathcal{E}^{-1}_{(t_1,t_0)}$ is not completely positive unless it is unitary, so in general the evolution $\mathcal{E}_{(t_2,t_1)}$ is still not a UDM \cite{compositionCP}.

The same idea can also be used to define dynamical maps beyond UDM \cite{canonicalDM}. That is, given the global system in an arbitrary general state $\rho(t_0)$, if this is not a tensor product, we know that at least in some past time $t_{\text{initial}}$, before the subsystems started interacting, the state was factorized $\rho(t_{\text{initial}})=\rho_A(t_{\text{initial}})\otimes\rho_B(t_{\text{initial}})$. If we manage to find a unitary operation $U^{-1}(t_0,t_{\text{initial}})$ which maps $\rho(t_0)$ to the state $\rho(t_{\text{initial}})=\rho_A(t_{\text{initial}})\otimes\rho_B(t_{\text{initial}})$ for every reduced state $\rho_A(t_0)$ compatible with $\rho(t_0)$, having $\rho_B(t_{\text{initial}})$ the same for all $\rho_A(t_0)$, we can describe any posterior evolution from $t_0$ to $t_1$ as the composition of the inverse map and the UDM from $t_{\text{initial}}$ to $t_1$ as in (\ref{inversetrick}). However to find such an unitary operator can be complicated, and it will not always exist.

On the other hand, as a difference with the dynamics of closed systems, it is clear that these difficulties arising from the continuity of time for UDMs make it impossible to formulate the general dynamics of open quantum systems by means of differential equations which generate contractive families on $\mathfrak{B}$ (i.e. families of UDMs). However, sometimes one can write down a differential form for the evolution from a fixed time origin $t_0$, which is only a valid UDM to describe the evolution from that time $t_0$. In other words, the differential equation generates an eventually contractive family from $t_0$ on $\mathfrak{B}$ (see definition \ref{eventuallycontractive}).

To illustrate this, consider a bijective UDM, $\mathcal{E}_{(t,t_0)}$, then by differentiation (without the aim to be very rigorous here)
\begin{equation}\label{TCLForm}
\frac{d\rho_A(t)}{dt}=\frac{d\mathcal{E}_{(t,t_0)}}{dt}[\rho_A(t_0)]=\frac{d\mathcal{E}_{(t,t_0)}}{dt}\mathcal{E}^{-1}_{(t,t_0)}[\rho_A(t)]=\mathcal{L}_t[\rho_A(t)],
\end{equation}
where $\mathcal{L}_t=\frac{d\mathcal{E}_{(t,t_0)}}{dt}\mathcal{E}^{-1}_{(t,t_0)}$. This is the ultimate idea behind the ``time convolutionless'' method which will be briefly described in section \ref{sectionTCL}. If the initial condition is given at $t_0$, then, by construction, the solution to the above equation is $\rho_A(t)=\mathcal{E}_{(t,t_0)}[\rho_A(t_0)]$. Nevertheless, there is not certainty that the evolution from any other initial time is a UDM. Indeed, as we may guess, it is not difficult to write solutions given the initial condition $\rho(t_1)$ $(t_1>t_0)$ as (\ref{inversetrick}).

\begin{definition}\label{quantumMarkov}
We will say that a quantum systems undergoes a \emph{Markovian} evolution if it is described by a contractive evolution family on $\mathfrak{B}$ and thus we recover the composition law for the UDMs
\begin{equation}\label{CompLaw}
\mathcal{E}_{(t_2,t_0)}=\mathcal{E}_{(t_2,t_1)}\mathcal{E}_{(t_1,t_0)},
\end{equation}
which is the quantum analogue to the Chapman-Kolmogorov equation in classical Markovian process (see next section \ref{sectionMarkovClassic}). This property is sometimes also referred as ``divisibility condition''.
\end{definition}

Typically, the dynamics of an open quantum system is not Markovian, because as we have already said both systems $A$ and $B$ develop correlations $\rho_{corr}$ during their evolution and so the reduced dynamical map is not a UDM for some intermediate starting time $t_1$. However if the term which goes with $\rho_{corr}$ does not affect so much the dynamics, a Markovian model can be a good approximate description of time evolution. We will analyze this deeply in section \ref{sectionMicrosMarkov}.

\section{Quantum Markov process: mathematical structure}\label{sectionMarkov}

Before studying under which conditions can an open quantum system be approximately described by a Markovian evolution, we examine in this section the structure and properties of quantum Markovian process.

\subsection{Classical Markovian processes} \label{sectionMarkovClassic}

To motivate the adopted definition of a quantum Markov process, let us recall the definition of classical Markov processes. More detailed explanations can be found in \cite{Gardiner97,BrPe02,Norris97,Ethier05}; here we just sketch the most interesting properties for our purposes without getting too concern with mathematical rigor.

On a probability space, a \emph{stochastic process} is a family of random variables $\{X(t),t\in I\subset\mathds{R}\}$. Roughly speaking, a stochastic process is specified by a random variable $X$ depending on a parameter $t$ which usually represents time. Assume that $I$ is a countable set labeled by $n$, then a stochastic process is a \emph{Markov process} if the probability that the random variable $X$ takes a value $x_n$ at any  arbitrary time $t_n$, is conditioned only by the value $x_{n-1}$ that it took at time $t_{n-1}$, and does not depend on the values at earlier times. This condition is formulated in terms of the conditional probabilities as follows:
\begin{equation}
p(x_n,t_n|x_{n-1},t_{n-1};\ldots;x_0,t_0)=p(x_n,t_n|x_{n-1}),\quad \forall t_n\in I.
\end{equation}
This property is sometimes expressed in the following way: a Markov process does not have memory of the history of past values of $X$. They are so-named after the Russian mathematician A. Markov.

Consider now this process on a continuous interval $I$, then $t_{n-1}$ is time infinitesimally close to $t_n$. So for any $t>t'$ just from the definition of conditional probability
\[
p(x,t;x',t')=p(x,t|x',t')p(x',t'),
\]
where $p(x,t;x',t')$ is the joint probability that random variable $X$ takes value $x$ at time $t$ and $x'$ at time $t'$. By integrating with respect to $x'$ we find the relation between unconditional probabilities
\[
p(x,t)=\int dx'p(x,t|x',t')p(x',t'),
\]
and let us write it as
\begin{equation} \label{kernel}
p(x,t)=\int dx'K(x,t|x',t')p(x',t') \quad\text{with } K(x,t|x',t')=p(x,t|x',t').
\end{equation}
The Markov process is called homogeneous if $K(x,t|x',t')$ is only a function of the difference between the two time parameters involved $K(x,t|x',t')=K_{t-t'}(x|x')$.

Next, take the joint probability for any three consecutive times $t_3>t_2>t_1$ and apply again the definition of conditional probability twice
\begin{eqnarray*}
p(x_3,t_3;x_2,t_2;x_1,t_1)&=&p(x_3,t_3|x_2,t_2;x_1,t_1)p(x_2,t_2;x_1;t_1)\\
&=&p(x_3,t_3|x_2,t_2;x_1,t_1)p(x_2,t_2|x_1,t_1)p(x_1,t_1).
\end{eqnarray*}
The Markov condition implies that $p(x_3,t_3|x_2,t_2;x_1,t_1)=p(x_3,t_3|x_2,t_2)$, then by integrating over $x_2$ and dividing both sides by $p(x_1,t_1)$ one gets
\[
p(x_3,t_3|x_1,t_1)=\int dx_2 p(x_3,t_3|x_2,t_2)p(x_2,t_2|x_1,t_1),
\]
which is called \emph{Chapman-Kolmogorov equation}. Observe that with the notation of (\ref{kernel}) this is
\begin{equation}\label{Chap-Kolmo}
K(x_3,t_3|x_1,t_1)=\int dx_2K(x_3,t_3|x_2,t_2)K(x_2,t_2|x_1,t_1).
\end{equation}
Thinking of $K(x,t|x',t')$ as propagators of the time evolution between $t'$ and $t$ (see Eq. \ref{kernel}), equation (\ref{Chap-Kolmo}) express that they form evolution families with that composition law. Moreover, since $K(x,t|x',t')$ are conditional probabilities they connect any probability $p(x',t')$ with another probability $p(x,t)$ so in this sense they are universal (preserve the positivity of $p(x',t')$ and the normalization).

We can now draw a clear parallelism with a quantum setting. In the definition of quantum Markov process (definition \ref{quantumMarkov}), the role played by the probabilities $p(x,t)$ is played by the density operators $\rho(t)$ and the role of conditional probabilities $p(x,t|x',t')$ is played by UDMs $\mathcal{E}_{(t,t')}$, such that (\ref{CompLaw}) is the quantum analog to the classic Chapman-Kolmogorov equation (\ref{Chap-Kolmo}).

\subsection{Quantum Markov evolution as a differential equation}

For positive $\epsilon$, consider the difference
\[
\rho(t+\epsilon)-\rho(t)=[\mathcal{E}_{(t+\epsilon,0)}-\mathcal{E}_{(t,0)}]\rho(0)=[\mathcal{E}_{(t+\epsilon,t)}-\mathds{1}]\mathcal{E}_{(t,0)}[\rho(0)]=[\mathcal{E}_{(t+\epsilon,t)}-\mathds{1}]\rho(t),
\]
provided that the limit $\epsilon\rightarrow0$ is well defined (we assume that time evolution is smooth enough). We can obtain a linear differential equation for $\rho(t)$ (called \emph{master equation})
\begin{equation}
\frac{d\rho(t)}{dt}=\lim_{\epsilon\rightarrow0}\frac{\rho(t+\epsilon)-\rho(t)}{\epsilon}=\lim_{\epsilon\rightarrow0}\frac{[\mathcal{E}_{(t+\epsilon,t)}-\mathds{1}]}{\epsilon}\rho(t)=\mathcal{L}_t\rho(t),
\end{equation}
where by definition the \emph{generator} of the evolution is
\[
\mathcal{L}_t=\lim_{\epsilon\rightarrow0}\frac{[\mathcal{E}_{(t+\epsilon,t)}-\mathds{1}]}{\epsilon}.
\]

If some quantum evolution is Markovian according to our definition \ref{quantumMarkov}, we may wonder what is the form of these differential equations. The answer is given by the following theorem, which is the main result of this section.

\begin{theorem} \label{TeoremaMARKOV} A differential equation is a Markovian master equation if and only if it can be written in the form
\begin{equation}\label{diffMarkov}
\frac{d\rho(t)}{dt}=-i[H(t),\rho(t)]+\sum_k\gamma_k(t)\left[V_k(t)\rho(t)V_k^\dagger(t)-\frac{1}{2}\{V_k^\dagger(t)V_k(t),\rho(t)\}\right],
\end{equation}
where $H(t)$ and $V_k(t)$ are time-dependent operators, with $H(t)$ self-adjoint, and $\gamma_k(t)\geq0$ for every $k$ and time $t$.
\end{theorem}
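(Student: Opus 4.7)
The plan is to prove both directions separately, following the Gorini--Kossakowski--Lindblad--Sudarshan (GKLS) strategy adapted to time-dependent generators. The key conceptual point is that Markovianity (the divisibility condition of Definition \ref{quantumMarkov}) means each infinitesimal propagator $\mathcal{E}_{(t+\epsilon,t)}$ is a UDM, hence completely positive and trace preserving (CPTP), so the generator $\mathcal{L}_t$ inherits tight structural constraints at every instant.

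For the sufficiency direction (the stated form generates a Markovian evolution), I would argue that each $\mathcal{L}_t$ of the displayed shape generates a CPTP map over an infinitesimal step. The cleanest route is to split $\mathcal{L}_t = \mathcal{L}_t^{H} + \mathcal{L}_t^{D}$ with the Hamiltonian and dissipative pieces, and use the Lie--Trotter formula (\ref{Lie-TrotterGeneral}) together with the time-splitting representation (\ref{time-splitting}) to express the propagator as a limit of products of exponentials $e^{\epsilon\mathcal{L}_{t_j}^{H}}e^{\epsilon\mathcal{L}_{t_j}^{D}}$. The Hamiltonian piece $e^{\epsilon\mathcal{L}_t^{H}}(\rho) = e^{-i\epsilon H(t)}\rho e^{i\epsilon H(t)}$ is manifestly a UDM. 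For the dissipative piece one writes $e^{\epsilon\mathcal{L}_t^{D}} \approx \mathds{1} + \epsilon\mathcal{L}_t^{D}$ and checks directly that this is, to first order, of Kraus form with Kraus operators $K_0 = \mathds{1} - \tfrac{\epsilon}{2}\sum_k \gamma_k(t) V_k^\dagger V_k$ and $K_k = \sqrt{\epsilon \gamma_k(t)}\, V_k$, which requires $\gamma_k(t)\geq 0$. Trace preservation and composition then yield the contractive evolution family via Theorem \ref{theoCPTcontractions}.

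For the necessity direction (every Markovian master equation has the stated form) the steps are the following. Fix a time $t$ and consider the short-time Kraus decomposition $\mathcal{E}_{(t+\epsilon,t)}(\rho) = \sum_\alpha K_\alpha(\epsilon) \rho K_\alpha^\dagger(\epsilon)$. Pick an orthonormal operator basis $\{F_j\}_{j=1}^{N^2}$ of the space of $N\times N$ matrices with respect to the Hilbert--Schmidt inner product, choosing $F_{N^2} = \mathds{1}/\sqrt{N}$ and the remaining $F_j$ traceless. Expand each Kraus operator $K_\alpha(\epsilon) = \sum_j c_{\alpha j}(\epsilon) F_j$ and substitute to obtain
\[
\mathcal{E}_{(t+\epsilon,t)}(\rho) = \sum_{j,k} c_{jk}(\epsilon) F_j \rho F_k^\dagger,\qquad c_{jk}(\epsilon) = \sum_\alpha c_{\alpha j}(\epsilon)\overline{c_{\alpha k}(\epsilon)}.
\]
The matrix $[c_{jk}(\epsilon)]$ is positive semidefinite because it is a Gram matrix. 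Differentiating at $\epsilon = 0$ isolates coefficients $a_{jk}(t) = \lim_{\epsilon\to 0}[c_{jk}(\epsilon) - \delta_{jN^2}\delta_{kN^2}N]/\epsilon$; separating the index $N^2$ from the rest produces three contributions: the purely scalar part (which vanishes against trace preservation), mixed terms that can be collected into a commutator $-i[H(t),\rho]$ by setting $H(t) = \tfrac{1}{2i}\sum_{j<N^2}(a_{j N^2}(t) F_j - a_{N^2 j}(t) F_j^\dagger)$ plus a Hermitian piece absorbed into the anticommutator, and the block $[a_{jk}(t)]_{j,k<N^2}$ which inherits positive semidefiniteness from $[c_{jk}(\epsilon)]$ up to the relevant order. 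Trace preservation $\tr\mathcal{L}_t(\rho) = 0$ for all $\rho$ fixes the anticommutator counterterm and yields
\[
\mathcal{L}_t(\rho) = -i[H(t),\rho] + \sum_{j,k<N^2} a_{jk}(t)\left[F_j\rho F_k^\dagger - \tfrac{1}{2}\{F_k^\dagger F_j,\rho\}\right].
\]
Diagonalizing the positive semidefinite matrix $[a_{jk}(t)]$ by a unitary $u(t)$ with eigenvalues $\gamma_k(t)\geq 0$, and defining $V_k(t) = \sum_j u_{jk}(t) F_j$, recovers the stated form.

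The main obstacle will be the necessity direction, specifically justifying that the limit defining $\mathcal{L}_t$ preserves the positive semidefiniteness of $[c_{jk}(\epsilon)]$ (after subtracting the identity part) and handling the time dependence of the basis expansion coefficients cleanly enough that one can extract a genuine Hermitian $H(t)$ and a genuine positive semidefinite matrix at each $t$. Everything else is bookkeeping in the chosen operator basis; the physical content lies precisely in that positivity of the rate matrix, which is the time-local manifestation of complete positivity of each $\mathcal{E}_{(t+\epsilon,t)}$.
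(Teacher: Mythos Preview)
Your necessity argument is essentially the paper's: expand the short-time Kraus operators in an orthonormal basis with $F_{N^2}=\mathds{1}/\sqrt{N}$, extract the positive semidefinite Gram matrix $c_{jk}(\epsilon)$, differentiate, use trace preservation to fix the anticommutator, and diagonalize the $(N^2-1)\times(N^2-1)$ block. That part is fine.

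The sufficiency argument, however, has a genuine gap. You split $\mathcal{L}_t=\mathcal{L}_t^H+\mathcal{L}_t^D$ and propose to show $e^{\epsilon\mathcal{L}_t^D}$ is CP by checking that $\mathds{1}+\epsilon\mathcal{L}_t^D$ is ``to first order'' of Kraus form with $K_0=\mathds{1}-\tfrac{\epsilon}{2}\sum_k\gamma_k V_k^\dagger V_k$ and $K_k=\sqrt{\epsilon\gamma_k}\,V_k$. But $K_0\rho K_0^\dagger+\sum_k K_k\rho K_k^\dagger$ differs from $(\mathds{1}+\epsilon\mathcal{L}_t^D)(\rho)$ by an $O(\epsilon^2)$ term, so neither $\mathds{1}+\epsilon\mathcal{L}_t^D$ nor $e^{\epsilon\mathcal{L}_t^D}$ is exactly CP by this argument. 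In the Lie--Trotter limit $\bigl(e^{\tau\mathcal{L}^H/n}e^{\tau\mathcal{L}^D/n}\bigr)^n$ you need each factor to be \emph{exactly} CP for every finite $n$ (CP is closed under composition and limits); ``CP up to $O(1/n^2)$'' does not propagate through an $n$-fold product without an additional estimate you have not supplied.

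The paper avoids this by a different splitting: it groups the Hamiltonian commutator \emph{together with the anticommutator} into
\[
\tilde{\mathcal{L}}_0(\cdot)=-i[H,\cdot]-\tfrac{1}{2}\sum_m\{\tilde V_m^\dagger\tilde V_m,\cdot\},
\]
and isolates each jump term $\tilde{\mathcal{L}}_k(\cdot)=\tilde V_k(\cdot)\tilde V_k^\dagger$ separately. Then $e^{\tau\tilde{\mathcal{L}}_0}\rho=O\rho O^\dagger$ with $O=\exp\!\bigl[(-iH-\tfrac{1}{2}\sum_m\tilde V_m^\dagger\tilde V_m)\tau\bigr]$, which is exactly CP (single Kraus operator), and $e^{\tau\tilde{\mathcal{L}}_k}\rho=\sum_{m\geq0}\tfrac{\tau^m}{m!}\tilde V_k^m\rho\tilde V_k^{\dagger m}$, which is exactly CP (manifest Kraus form). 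Now Lie--Trotter applies rigorously. The point you are missing is that the anticommutator must travel with the Hamiltonian, not with the jump terms, for each exponential factor to be CP on the nose.
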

\begin{proof} Let us show the necessity first (for this we follow a generalization of the argument given in \cite{AlickiLendi87} and \cite{BrPe02} for time-independent generators). If $\mathcal{E}_{(t_2,t_1)}$ is a UDM for any $t_2\geq t_1$ it admits a decomposition of the form
\[
\mathcal{E}_{(t_2,t_1)}[\rho]=\sum_\alpha K_\alpha(t_2,t_1)\rho K_\alpha^\dagger(t_2,t_1).
\]
Actually, let $\{F_j, j=1,\ldots,N^2\}$ be a complete orthonormal basis with respect to the Hilbert-Schmidt inner product $(F_j,F_k)_{\mathrm{HS}}=\tr(F_j^\dagger F_k)=\delta_{jk}$, in such a way that we chose $F_{N^2}=\mathds{1}/\sqrt{N}$ and the rest of operators are then traceless. Now the expansion of the Kraus operators in this basis yields
\[
\mathcal{E}_{(t_2,t_1)}[\rho]=\sum_{j,k} c_{jk}(t_2,t_1)F_j\rho F_k^\dagger,
\]
where the elements
\[
c_{jk}(t_2,t_1)=\sum_\alpha \left(F_j,K_\alpha(t_2,t_1)\right)_{\mathrm{HS}}\left(F_k,K_\alpha(t_2,t_1)\right)_{\mathrm{HS}}^\ast
\]
form a positive semidefinite matrix $\mathbf{c}(t_2,t_1)$. Indeed, for any $N^2$-dimensional vector $\mathbf{v}$ we have that
\[
\left(\mathbf{v},\mathbf{c}(t_2,t_1)\mathbf{v}\right)=\sum_{j,k} v_j^\ast c_{jk}(t_2,t_1)v_k=\sum_{\alpha}\left|\left(\sum_{k}v_k\left(F_k,K_\alpha(t_2,t_1)\right)_{\mathrm{HS}}\right)\right|^2\geq0.
\]
Since this holds for any $t_2\geq t_1$, let us take $t_1=t$ and $t_2=t+\epsilon$, then the generator reads
\begin{eqnarray}
\mathcal{L}_t(\rho)&=&\lim_{\epsilon\rightarrow0}\sum_{j,k}\frac{c_{jk}(t+\epsilon,t)F_j\rho F_k^\dagger-\mathds{1}}{\epsilon}=\lim_{\epsilon\rightarrow0}\left[\frac{1}{N}\frac{c_{N^2N^2}(t+\epsilon,t)-N}{\epsilon}\rho\right.\nonumber\\
&+&\frac{1}{\sqrt{N}}\sum_{j=1}^{N^2-1}\left(\frac{c_{jN^2}(t+\epsilon,t)}{\epsilon}F_j\rho+\frac{c_{N^2j}(t+\epsilon,t)}{\epsilon}\rho F_j^\dagger\right)\nonumber\\
&+&\left.\sum_{j,k=1}^{N^2-1}\frac{c_{jk}(t+\epsilon,t)}{\epsilon}F_j\rho F_k^\dagger\right].\label{PruebaGenerator1}
\end{eqnarray}
Now, we define the time-dependent coefficients $a_{jk}(t)$ by
\begin{eqnarray*}
a_{N^2N^2}(t)&=&\lim_{\epsilon\rightarrow0}\frac{c_{N^2N^2}(t+\epsilon,t)-N}{\epsilon},\\
a_{jN^2}(t)&=&\lim_{\epsilon\rightarrow0}\frac{c_{jN^2}(t+\epsilon,t)}{\epsilon}, \quad j=1,\ldots,N^2-1,\\
a_{jk}(t)&=&\lim_{\epsilon\rightarrow0}\frac{c_{jk}(t+\epsilon,t)}{\epsilon}, \quad j,k=1,\ldots,N^2-1,\\
\end{eqnarray*}
the following operators,
\begin{eqnarray*}
F(t)&=&\frac{1}{\sqrt{N}}\sum_{j=1}^{N^2-1}a_{jN^2}(t)F_j,\\
G(t)&=&\frac{a_{N^2N^2}}{2N}\mathds{1}+\frac{1}{2}\left[F^\dagger(t)+F(t)\right],
\end{eqnarray*}
and
\[
H(t)=\frac{1}{2i}\left[F^\dagger(t)-F(t)\right],
\]
which is self-adjoint. In terms of these operators the generator (\ref{PruebaGenerator1}) can be written like
\[
\mathcal{L}_t(\rho)=-i[H(t),\rho]+\{G(t),\rho\}+\sum_{j,k=1}^{N^2-1}a_{jk}(t)F_j\rho F_k^\dagger.
\]
Since the UDMs preserve trace for every density matrix $\rho$,
\[
0=\tr\left[\mathcal{L}_t(\rho)\right]=\tr\left\{\left[2G(t)+\sum_{j,k=1}^{N^2-1}a_{jk}(t)F_k^\dagger F_j\right]\rho\right\},
\]
and therefore we conclude that
\[
G(t)=-\frac{1}{2}\sum_{j,k=1}^{N^2-1}a_{jk}(t)F_k^\dagger F_j.
\]
Note that this is nothing but the condition $\sum_\alpha K_\alpha^\dagger(t_2,t_1)K_\alpha(t_2,t_1)=\mathds{1}$ expressed in differential way. Thus the generator adopts the form
\[
\mathcal{L}_t(\rho)=-i[H(t),\rho]+\sum_{j,k=1}^{N^2-1}a_{jk}(t)\left[F_j\rho F_k^\dagger-\frac{1}{2}\{F_k^\dagger F_j,\rho\}\right].
\]
Finally note that because of the positive semidefiniteness of $c_{jk}(t+\epsilon,t)$, the matrix $a_{jk}(t),\ j,k=1,\ldots,N^2-1$ is also positive semidefinite, so for any $t$ it can be diagonalized by mean of a unitary matrix $\mathbf{u}(t)$; that is,
\[
\sum_{j,k}u_{mj}(t)a_{jk}(t)u^\ast_{nk}(t)=\gamma_m(t)\delta_{mn},
\]
where each eigenvalue is positive $\gamma_m(t)\geq0$. Introducing a new set of operators $V_k(t)$,
\[
V_k(t)=\sum_{j=1}^{N^2-1}u^\ast_{kj}(t)F_j, \quad F_j=\sum_{k=1}^{N^2-1}u_{kj}(t)V_k(t),
\]
we obtain the desired result
\[
\mathcal{L}_t(\rho)=-i[H(t),\rho]+\sum_k\gamma_k(t)\left[V_k(t)\rho V_k^\dagger(t)-\frac{1}{2}\{V_k^\dagger(t)V_k(t),\rho\}\right],
\]
with $\gamma_k(t)\geq0, \forall k,t$.

Assume now that some dynamics is given by a differential equation as (\ref{diffMarkov}). Since it is a first order linear equation, there exists a continuous family of propagators satisfying
\begin{eqnarray*}
\mathcal{E}_{(t_2,t_0)}&=&\mathcal{E}_{(t_2,t_1)}\mathcal{E}_{(t_1,t_0)},\\
\mathcal{E}_{(t,t)}&=&\mathds{1},
\end{eqnarray*}
for every time $t_2\geq t_1\geq t_0$. It is clear that these propagators are trace preserving (as for any $\rho$, $\tr(\mathcal{L}_t\rho)=0$). So we have to prove that for any times $t_2\geq t_1$ the dynamical maps $\mathcal{E}_{(t_2,t_1)}$ are UDMs, or equivalently completely positive maps. For that we use the approximation formulas introduced in section \ref{sectionMates}; firstly the time-splitting formula (\ref{time-splitting}) allow us to write the propagator like
\begin{equation}\label{PruebaGenerator2}
\mathcal{E}_{(t_2,t_1)}=\lim_{\max|t'_{j+1}-t'_j|\rightarrow0}\prod_{j={n-1}}^{0}e^{\mathcal{L}_{t'_j}(t'_{j+1}-t'_j)},
\end{equation}
where $t_2=t'_n\geq t'_{n-1}\geq\ldots\geq t'_{0}=t_1$. Next, take any element of this product
\[
e^{\mathcal{L}_{t'_\ell}\tau_\ell}, \quad \tau_\ell=t'_{\ell+1}-t'_\ell\geq0,
\]
and since the constants $\gamma_k(t)$ are positive for every time, we can write this generator evaluated at the instant $t'_\ell$ as
\[
\mathcal{L}_{t'_\ell}(\cdot)=-i[H(t'_\ell),(\cdot)]+\sum_k\left[\tilde{V}_k(t'_\ell)(\cdot)\tilde{V}_k^\dagger(t'_\ell)-\frac{1}{2}\{\tilde{V}_k^\dagger(t'_\ell)\tilde{V}_k(t'_\ell),(\cdot)\}\right],
\]
where $\tilde{V}_k(t'_\ell)=\sqrt{\gamma_k(t'_\ell)} V_k(t'_\ell)$. Now we split this generator in several parts,
\[
\mathcal{L}_{t'_\ell}=\sum_{k=0}^M\tilde{\mathcal{L}}_k,
\]
where
\begin{eqnarray}
\tilde{\mathcal{L}}_0(\cdot)&=&-i[H(t'_\ell),(\cdot)]-\sum_m^M\frac{1}{2}\{\tilde{V}_m^\dagger(t'_\ell)\tilde{V}_m(t'_\ell),(\cdot)\},\\
\tilde{\mathcal{L}}_k(\cdot)&=&\tilde{V}_k(t'_\ell)(\cdot)\tilde{V}_k^\dagger(t'_\ell), \quad k=1,\ldots,M,
\end{eqnarray}
and here $M$ is the upper limit in the sum over $k$ in (\ref{diffMarkov}). On one hand, differentiating with respect to $\tau$ one can easily check that
\[
e^{\tilde{\mathcal{L}}_0\tau}\rho=e^{\left\{\left[-iH(t'_\ell)-\sum_k\frac{1}{2}\tilde{V}_k^\dagger(t'_\ell)\tilde{V}_k(t'_\ell)\right]\tau\right\}}\rho e^{\left\{\left[iH(t'_\ell)-\sum_k\frac{1}{2}\tilde{V}_k^\dagger(t'_\ell)\tilde{V}_k(t'_\ell)\right]\tau\right\}}
\]
which has the form $O\rho O^\dagger$, so it is completely positive for all $\tau$. On the other hand, for $\tau\geq0$
\[
e^{\tilde{\mathcal{L}}_k\tau}\rho=\sum_{m=0}^{\infty}\frac{\tau^m}{m!}\tilde{V}_k^m(t'_\ell)\rho\tilde{V}_k^{\dagger m}(t'_\ell),
\]
has also the Kraus form $\sum_m O_m\rho O^\dagger_m$, so it is completely positive for all $k$. Now everything is ready to invoke the Lie-Trotter product formula (\ref{Lie-TrotterGeneral}), since
\begin{equation*}
e^{\mathcal{L}_{t'_\ell}\tau}=e^{\left(\sum_{k=0}^M\tilde{\mathcal{L}}_k\right)\tau}=\lim_{n\rightarrow\infty}\left(\prod_{k=1}^N e^{\frac{\tilde{\mathcal{L}}_k}{n}}\right)^n
\end{equation*}
is the composition (infinite product) of completely positive maps $e^{\tilde{\mathcal{L}}_k\tau/n}$, it is completely positive for all $\tau$. Finally, because this is true for every instantaneous $t'_\ell$ the ``time-splitting'' formula (\ref{PruebaGenerator2}) asserts that $\mathcal{E}_{(t_2,t_1)}$ is completely positive for any $t_2$ and $t_1$, because it is just another composition of completely positive maps \cite{SufficiencyGernot}.

\mbox{}
\end{proof}

This result can be taken as a global consequence of the works of A. Kossakowski \cite{Kossakowski1,Kossakowski2,GoKoSh76} and G. Lindblad \cite{Lindblad76}, who analyzed this problem for the case of time-homogeneous equations, that is, when the UDMs depend only on the difference $\tau=t_2-t_1$, $\mathcal{E}_{(t_2,t_1)}=\mathcal{E}_{\tau}$ and so they form a one-parameter semigroup (not a group because the inverses are generally not a UDM, see theorem \ref{WignerUDM}). As we know, under the continuity assumption, $\mathcal{E}_{\tau}=e^{\mathcal{L}\tau}$ can be defined and the generator $\mathcal{L}$ is time-independent, moreover it has the form
\begin{equation}\label{generatorMarkov}
\mathcal{L}\rho(t)=-i[H,\rho(t)]+\sum_k\gamma_k\left[V_k\rho(t)V_k^\dagger-\frac{1}{2}\{V_k^\dagger V_k,\rho(t)\}\right].
\end{equation}
This is the result proven by Gorini, Kossakowski and Sudarshan \cite{GoKoSh76} for finite dimensional semigroups and by Lindblad \cite{Lindblad76} for infinite dimensional systems with bounded generators, that is, for uniformly continuous semigroups \cite{EngelNagel}.

\subsection{Kossakowski conditions}

It is probably worthwhile to dedicate a small section to revise the original idea of Kossakowski's seminal work \cite{Kossakowski1,Kossakowski2,GoKoSh76}, as it has clear analogies with some results in classical Markov processes. The route that A. Kossakowski and collaborators followed to arrive at (\ref{generatorMarkov}) was to derive some analogue to the classical Markovian conditions on the generator $\mathcal{L}$. That is, for classical Markov process in finite dimensional systems, a matrix $Q$ is the generator of an evolution stochastic matrix $e^{Q\tau}$ for any $\tau>0$ if only if the following conditions in its elements are satisfied:
\begin{eqnarray}
Q_{ii}\leq0,\ \forall i, \label{KolmogorovConditions1}\\
Q_{ij}\geq0,\ \forall i\neq j, \label{KolmogorovConditions2}\\
\sum_j Q_{ij}=0,\ \forall i \label{KolmogorovConditions3}.
\end{eqnarray}
The proof can be found in \cite{Norris97}. To define similar conditions in the quantum case, one uses the Lumer-Phillips theorem \ref{Lumer-PhillipsTheo}, as $\mathcal{E}_{\tau}=e^{\mathcal{L}\tau}$ has to be a contraction semigroup in the Banach space of self-adjoint matrices with respect to the trace norm. For that aim, consider the next definition.

\begin{definition}
Let $\sigma\in\mathfrak{B}$, with spectral decomposition $\sigma=\sum_j\sigma_jP_j$. We define the linear operator \emph{sign} of $\sigma$ as
\[
\sgn(\sigma)=\sum_j\sgn(\sigma_j)P_j,
\]
where $\sgn(\sigma_j)$ is the sign function of each eigenvalue, that is
\[
\sgn(\sigma_j)=\left\{\begin{array}{ll}
1,& \sigma_j>0\\
0,& \sigma_j=0\\
-1,& \sigma_j<0.
\end{array}
\right.
\]
\end{definition}
\begin{definition}
We define the following product of two elements $\sigma,\rho\in\mathfrak{B}$,
\begin{equation}\label{KossakowskiProduct}
[\sigma,\rho]_K=\Vert \sigma\Vert\tr[\sgn(\sigma)\rho].
\end{equation}
\end{definition}
\begin{proposition}
The product (\ref{KossakowskiProduct}) is a semi-inner product. In addition it is a real number.
\end{proposition}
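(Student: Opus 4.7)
The plan is to verify the three axioms (\ref{sip1})--(\ref{sip3}) in turn, and separately confirm that the value is real. The key observation throughout is that $\sgn(\sigma)$ is itself a self-adjoint operator whose eigenvalues lie in $\{-1,0,+1\}$, so its operator norm satisfies $\|\sgn(\sigma)\|_\infty\leq 1$, while its action on $\sigma$ reproduces $|\sigma|$.

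First I would establish reality. Since $\sigma\in\mathfrak{B}$ is self-adjoint, its spectral projections $P_j$ are self-adjoint and the eigenvalues $\sigma_j$ are real, hence $\sgn(\sigma)=\sum_j\sgn(\sigma_j)P_j$ is also self-adjoint. As $\rho\in\mathfrak{B}$ is self-adjoint too, one gets $\overline{\tr[\sgn(\sigma)\rho]}=\tr[(\sgn(\sigma)\rho)^\dagger]=\tr[\rho\,\sgn(\sigma)]=\tr[\sgn(\sigma)\rho]$, so the quantity is real and, multiplied by the real number $\|\sigma\|$, yields $[\sigma,\rho]_K\in\mathds{R}$.

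Next I would check linearity (\ref{sip1}) in the second slot: for fixed $\sigma$, $[\sigma,\cdot]_K$ is just $\|\sigma\|$ times the linear functional $\rho\mapsto\tr[\sgn(\sigma)\rho]$, so linearity is immediate. For (\ref{sip2}), I would use the spectral decomposition: $\sgn(\sigma)\sigma=\sum_j\sgn(\sigma_j)\sigma_j P_j=\sum_j|\sigma_j|P_j=|\sigma|$, whose trace is $\sum_j|\sigma_j|=\|\sigma\|_1=\|\sigma\|$, so $[\sigma,\sigma]_K=\|\sigma\|\cdot\|\sigma\|=\|\sigma\|^2$.

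The only step that needs a little care is (\ref{sip3}). Here I would invoke the trace--operator norm duality inequality $|\tr(AB)|\leq\|A\|_\infty\|B\|_1$, which in finite dimension follows from the singular value decomposition of $B$ and $|\tr(AP)|\leq\|A\|_\infty$ for any rank-one projection $P$ (or, equivalently, from Hölder for Schatten norms). Applied to $A=\sgn(\sigma)$ and $B=\rho$, together with $\|\sgn(\sigma)\|_\infty\leq 1$, this gives $|\tr[\sgn(\sigma)\rho]|\leq\|\rho\|_1=\|\rho\|$, and multiplication by $\|\sigma\|$ yields $|[\sigma,\rho]_K|\leq\|\sigma\|\|\rho\|$. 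The main obstacle, if any, is justifying this last duality bound cleanly within the self-adjoint trace-class setup of the excerpt; but because $\sgn(\sigma)$ has spectrum in $\{-1,0,+1\}$ it can be written as a difference of two orthogonal projections $P_+-P_-$ with $P_++P_-\leq\mathds{1}$, and then $|\tr[(P_+-P_-)\rho]|\leq\tr[(P_++P_-)|\rho|]\leq\tr|\rho|=\|\rho\|$ finishes the estimate without leaving the real-linear span of self-adjoint operators.
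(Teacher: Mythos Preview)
Your proof is correct. For properties (\ref{sip1}), (\ref{sip2}) and the reality claim, you and the paper proceed essentially the same way (the paper simply declares the first two ``evident'' and states reality follows from self-adjointness of $\sgn(\sigma)$ and $\rho$; you spell out the short computations).

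The genuine difference is in the bound (\ref{sip3}). The paper argues entirely by spectral expansion of \emph{both} operators: writing $\sigma=\sum_j\sigma_jP_j^\sigma$ it estimates $|[\sigma,\rho]_K|\leq\|\sigma\|\sum_j|\tr(P_j^\sigma\rho)|$, then inserts $\rho=\sum_k\rho_kP_k^\rho$, uses $|\tr(P_j^\sigma P_k^\rho)|=\tr(P_j^\sigma P_k^\rho)\geq 0$, and sums first over $j$ (completeness) and then over $k$ to recover $\|\rho\|$. You instead invoke the Schatten--H\"older duality $|\tr(AB)|\leq\|A\|_\infty\|B\|_1$ together with $\|\sgn(\sigma)\|_\infty\leq 1$, and offer as a self-contained alternative the projection splitting $\sgn(\sigma)=P_+-P_-$. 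Your route is shorter and conceptually cleaner---it identifies the estimate as an instance of a standard norm duality---while the paper's argument has the pedagogical virtue of being completely elementary and staying entirely within the spectral language already set up, with no appeal to an external inequality. Both reach exactly the same conclusion.
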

\begin{proof}
The proof is quite straightforward. It is evident that for the product (\ref{KossakowskiProduct}), (\ref{sip1}) and (\ref{sip2}) hold. Finally (\ref{sip3}) also follows easily. By introducing the spectral decomposition of $\sigma=\sum_j\sigma_jP_j^\sigma$,
\begin{eqnarray*}
|[\sigma,\rho]_K|=\Vert \sigma\Vert\left|\sum_j\sgn(\sigma_j)\tr(P_j^\sigma\rho)\right|&\leq&\Vert \sigma\Vert\sum_j\lvert\sgn(\sigma_j)\rvert\lvert\tr(P_j^\sigma \rho)\rvert\\
&\leq& \Vert \sigma\Vert\sum_j\lvert\tr(P_j^\sigma \rho)\rvert;
\end{eqnarray*}
and we also use the spectral decomposition of $\rho=\sum_j\rho_jP_j^\rho$, to obtain
\begin{eqnarray*}
\sum_j\lvert\tr(P_j^\sigma\rho)\rvert&=&\sum_j\left\lvert\sum_k\rho_k\tr(P_j^\sigma P_k^\rho)\right\rvert\leq\sum_{k,j}\lvert \rho_k\rvert\lvert\tr(P_j^\rho P_k^\sigma)\rvert\\
&=&\sum_{k,j}|\rho_k|\tr(P_j^\sigma P_k^\rho)=\sum_{k}|\rho_k|\tr(P_k^\rho)=\sum_{k}|\rho_k|=\Vert \rho\Vert.
\end{eqnarray*}
The reality of $[\sigma,\rho]_K$ is clear from the fact that both $\sgn(\sigma)$ and $\rho$ are self-adjoint.

\end{proof}

\noindent Now we can formulate the equivalent to the conditions (\ref{KolmogorovConditions1}), (\ref{KolmogorovConditions2}) and (\ref{KolmogorovConditions3}) in the quantum case.
\begin{theorem}[Kossakowski Conditions]
A linear operator $\mathcal{L}$ on $\mathfrak{B}$ (of finite dimension $N$) is the generator of a trace preserving contraction semigroup if and only if for every resolution of the identity $\mathfrak{P}=(P_1,P_2,\ldots,P_N)$, $\mathds{1}=\sum_jP_j$, the relations
\begin{eqnarray}
A_{ii}(\mathfrak{P})&\leq&0\quad (i=1,2,\ldots,N),\label{-Koss}\\
A_{ij}(\mathfrak{P})&\geq&0\quad (i\ne j=1,2,\ldots,N),\label{+Koss}\\
\sum_{i=1}^N A_{ij}(\mathfrak{P})&=&0\quad(j=1,2,\ldots,N)\label{trpKoss}
\end{eqnarray}
are satisfied, where
\[
A_{ij}(\mathfrak{P})=\tr[P_i\mathcal{L}(P_j)].
\]
\end{theorem}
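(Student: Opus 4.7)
The plan is to prove each implication separately, interpreting the Kossakowski conditions through Theorem~\ref{theoCPTcontractions} on the necessary side and through the Lumer--Phillips Theorem~\ref{Lumer-PhillipsTheo} with the semi-inner product $[\cdot,\cdot]_K$ of (\ref{KossakowskiProduct}) on the sufficient side.

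For necessity, assume $T_t=e^{\mathcal{L}t}$ is a trace-preserving contraction semigroup. By Theorem~\ref{theoCPTcontractions} each $T_t$ leaves $\bar{\mathfrak{B}}^+$ invariant, so for any resolution of the identity $\mathfrak{P}=(P_1,\ldots,P_N)$ and every $j$ the operator $T_t(P_j)$ is positive with $\tr[T_t(P_j)]=\tr(P_j)$. Writing the first-order expansion
\[
\tr[P_i T_t(P_j)] = \tr[P_iP_j] + t\,A_{ij}(\mathfrak{P}) + O(t^2),
\]
non-negativity of the left-hand side forces $A_{ij}(\mathfrak{P})\geq 0$ when $i\neq j$ (where the constant term vanishes by orthogonality). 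For $i=j$, since $P_j$ has operator norm at most one we have $\tr[P_j T_t(P_j)]\leq \tr[T_t(P_j)] = \tr(P_j)$, which yields $A_{jj}(\mathfrak{P})\leq 0$. Trace preservation differentiated at $t=0$ gives $\tr[\mathcal{L}(P_j)]=\sum_i A_{ij}(\mathfrak{P})=0$.

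For sufficiency, I would verify trace preservation and dissipativity with respect to $[\cdot,\cdot]_K$, so that Lumer--Phillips applies. Trace preservation is immediate: take the spectral decomposition $\sigma=\sum_k \sigma_k P_k^\sigma$ of any self-adjoint $\sigma$ and apply (\ref{trpKoss}) to $\mathfrak{P}^\sigma=\{P_k^\sigma\}$, obtaining $\tr[\mathcal{L}(\sigma)]=\sum_k\sigma_k \sum_i A_{ik}(\mathfrak{P}^\sigma)=0$. For dissipativity, expand
\[
[\sigma,\mathcal{L}(\sigma)]_K=\|\sigma\|_1\sum_l \sigma_l\sum_k \sgn(\sigma_k)\, A_{kl}(\mathfrak{P}^\sigma),
\]
split the indices into $S^\pm=\{k:\pm\sigma_k>0\}$ and $S^0=\{k:\sigma_k=0\}$, and use (\ref{trpKoss}) to replace the diagonal term $A_{ll}$ by minus the sum of the off-diagonals. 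A short calculation then shows that for $l\in S^+$ the inner sum reduces to $-2\sum_{k\in S^-} A_{kl}-\sum_{k\in S^0} A_{kl}\leq 0$ by (\ref{+Koss}), while for $l\in S^-$ it equals $2\sum_{k\in S^+} A_{kl}+\sum_{k\in S^0} A_{kl}\geq 0$. In each case $\sigma_l$ carries the opposite sign, so every term contributes non-positively, yielding $[\sigma,\mathcal{L}(\sigma)]_K\leq 0$. Theorem~\ref{Lumer-PhillipsTheo} then delivers a contraction semigroup which, by the trace-preservation step, is trace preserving.

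The main obstacle I anticipate is the sign-bookkeeping in the dissipativity computation: one must simultaneously exploit the off-diagonal positivity (\ref{+Koss}) and the column-sum-zero identity (\ref{trpKoss}) to dispose of the awkward diagonal contribution $\sgn(\sigma_l)\sigma_l A_{ll}$ in a way that makes the residual sign-structure uniform across $S^+$ and $S^-$. The necessity direction, by contrast, is essentially a differential argument once Theorem~\ref{theoCPTcontractions} is invoked.
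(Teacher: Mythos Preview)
Your proposal is correct and follows essentially the same route as the paper: sufficiency via Lumer--Phillips with the semi-inner product $[\cdot,\cdot]_K$, using the spectral decomposition of $\sigma$ together with (\ref{trpKoss}) to eliminate the diagonal and (\ref{+Koss}) to control the off-diagonals (the paper packages this as $\sum_{k\neq j}\alpha_{kj}A_{kj}\geq0$ with $\alpha_{kj}=|\sigma_j|[1-\sgn(\sigma_k)\sgn(\sigma_j)]$, which is your sign-class split in compressed form). For necessity the paper obtains (\ref{-Koss}) from dissipativity at $\sigma=P$ and (\ref{+Koss}) from the generator definition $\lim_{\tau\to0}\tau^{-1}\tr[P_k\mathcal{E}_\tau(P_j)]$, whereas you derive both from the first-order expansion of $\tr[P_iT_t(P_j)]$ after invoking Theorem~\ref{theoCPTcontractions}; these are minor presentational variations of the same argument.
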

\begin{proof}
The condition (\ref{trpKoss}) is just the trace preserving requirement $\tr[\mathcal{L}(\sigma)]=0$, $\sigma\in\mathfrak{B}$. We have
\begin{eqnarray*}
\sum_{i=1}^N A_{ij}(\mathfrak{P})=\sum_{i}^N\tr[P_i\mathcal{L}(P_j)]=\tr[\mathcal{L}(P_j)]=0, \quad \forall P_j\in\mathfrak{P}\\
\Leftrightarrow \tr[\mathcal{L}(\sigma)]=0,\quad \forall \sigma\in\mathfrak{B}.
\end{eqnarray*}

On the other hand, the theorem of Lumer-Phillips \ref{Lumer-PhillipsTheo} asserts that
\begin{equation}\label{LPK}
[\sigma,\mathcal{L}(\sigma)]_K\leq0;
\end{equation}
taking $\sigma$ to be some projector $P$ we obtain
\begin{equation}\label{LPK1}
\tr[P,\mathcal{L}(P)]\leq0,
\end{equation}
so (\ref{-Koss}) is necessary. Furthermore (\ref{LPK}) can be rewritten as
\begin{eqnarray}\label{LPK2}
[\sigma,\mathcal{L}(\sigma)]_K&=&\sum_{k,j}\sgn(\sigma_k)\sigma_j\tr[P_k\mathcal{L}(P_j)]=\sum_j|\sigma_j|\tr[P_j\mathcal{L}(P_j)]\nonumber\\
&+&\sum_{k\ne j}\sgn(\sigma_k)\sigma_j\tr[P_k\mathcal{L}(P_j)]\leq0,
\end{eqnarray}
where we have used the spectral decomposition $\sigma=\sum_j\sigma_jP_j$ again. By splitting (\ref{trpKoss}) in a similar way
\[
\tr[P_j\mathcal{L}(P_j)]=-\sum_{k\ne j}\tr[P_k\mathcal{L}(P_j)],\quad \text{for each }j=1,\ldots,N.
\]
Here note that on the right hand side the sum is just over the index $k$ (not over $j$). Inserting this result in (\ref{LPK2}) yields
\begin{equation}\label{LPK3}
\sum_{k\ne j}\alpha_{kj}\tr[P_k\mathcal{L}(P_j)]\geq0,
\end{equation}
where the sum is over the two indexes, with
\[
\alpha_{kj}=|\sigma_j|[1-\sgn(\sigma_k)\sgn(\sigma_j)]\geq0.
\]
Thus for trace-preserving semigroups the condition (\ref{LPK2}) is equivalent to (\ref{-Koss}) and (\ref{LPK3}). Therefore if (\ref{-Koss}), (\ref{+Koss}) and (\ref{trpKoss}) are satisfied $\mathcal{L}$ is the generator of a trace preserving contraction semigroup because of (\ref{LPK1}) and (\ref{LPK3}). It only remains to prove the necessity of (\ref{+Koss}), that can be seen directly from (\ref{LPK3}). However, it is also possible to show it from the definition of generator:
\[
\tr[P_k\mathcal{L}(P_j)]=\lim_{\tau\rightarrow0}\frac{\tr[P_k(\mathcal{E}_\tau-\mathds{1})P_j]}{\tau}=\lim_{\tau\rightarrow0}\frac{\tr[P_k\mathcal{E}_\tau (P_j)]}{\tau}\geq0,
\]
since $\mathcal{E}_\tau (P_j)\geq0$ as $\mathcal{E}_\tau$ is a trace preserving and completely positive map.

\end{proof}

Finally from this result, one can prove (\ref{generatorMarkov}). For that one realizes (similarly to the necessity proof of theorem \ref{TeoremaMARKOV}) that a trace and self-adjoint preserving generator can be written as
\[
\mathcal{L}\rho=-i[H,\rho]+\sum_{j,k=1}^{N^2-1}a_{jk}\left[F_j\rho F_k^\dagger-\frac{1}{2}\{F_k^\dagger F_j,\rho\}\right],
\]
where $H=H^\dagger$ and the coefficients $a_{jk}$ form a Hermitian matrix. $\mathcal{E}_{\tau}$ is completely positive (not just positive) if and only if $\mathcal{L}\otimes\mathds{1}$ (in the place of $\mathcal{L}$) satisfies the conditions (\ref{-Koss}), (\ref{+Koss}) and (\ref{trpKoss}). It is easy to see \cite{GoKoSh76} this implies that $a_{jk}$ is positive semidefinite, and then one obtains (\ref{generatorMarkov}) by diagonalization. The matrix $a_{jk}$ is sometimes referred to as Kossakowski matrix \cite{Benatti05,Alexandra08}.

\subsection{Steady states of homogeneous Markov processes}\label{sectionSteady}

In this section we give a brief introduction to the steady state properties of homogeneous Markov processes, i.e. completely positive semigroups. This is a complicated topic and still an active area of research, so we only expose here some simple but important results.

\begin{definition} A semigroup $\mathcal{E}_\tau$ is relaxing if there exist a unique (steady) state $\rho_\mathrm{ss}$ such that $\mathcal{E}_\tau\left(\rho_\mathrm{ss}\right)=\rho_\mathrm{ss}$ for all $\tau$ and
\[
\lim_{\tau\rightarrow\infty}\mathcal{E}_\tau\left(\rho\right)=\rho_\mathrm{ss},
\]
for every initial state $\rho$.
\end{definition}

Of course if one state $\rho_\mathrm{ss}$ is a steady state of a semigroup $\mathcal{E}_\tau$, it is an eigenoperator of the operator $\mathcal{E}_\tau$ with eigenvalue 1,
\[
\mathcal{E}_\tau\left(\rho_\mathrm{ss}\right)=e^{\mathcal{L}\tau}\rho_\mathrm{ss}=\rho_\mathrm{ss},
\]
and, by differentiating, it is an eigenoperator of the generator with zero eigenvalue,
\begin{equation}\label{steadystate1}
\mathcal{L}(\rho_\mathrm{ss})=0.
\end{equation}
If this equation admits more than one solution then the semigroup is obviously not relaxing, and the final state will depend on the initial state, or might even oscillate inside of the set of convex combinations of solutions. However, the uniqueness of the solution of equation (\ref{steadystate1}) should not presumably be a guarantee for the semigroup to be relaxing. First of all let us point out the next result.

\begin{theorem} \label{theoexiststeady}A completely positive semigroup $\mathcal{E}_\tau=e^{\mathcal{L}\tau}$ on a finite dimensional Banach space $\mathfrak{B}$ has always at least one steady state, this is, there exists at least a state $\rho_\mathrm{ss}$ which is a solution of (\ref{steadystate1}).
\end{theorem}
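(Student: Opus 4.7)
The plan is to exploit the fact that the set of quantum states is a non-empty, compact, convex subset of a finite-dimensional real vector space (the self-adjoint trace-class operators), together with the fact that each $\mathcal{E}_\tau$ is a continuous (linear) affine map that sends this set into itself, as established by Theorem~\ref{theoCPTcontractions} (UDMs leave $\bar{\mathfrak{B}}^+$ invariant). Existence of a fixed point of one map follows from Brouwer's theorem, but we actually need a \emph{common} fixed point for the entire commuting family $\{\mathcal{E}_\tau\}_{\tau\geq 0}$. The cleanest route to that is a time-averaging (ergodic) argument.

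Concretely, I would pick any initial state $\rho_0 \in \bar{\mathfrak{B}}^+$ and define the time-averaged state
\begin{equation*}
\bar{\rho}_T = \frac{1}{T}\int_0^T \mathcal{E}_\tau(\rho_0)\,d\tau,\quad T>0.
\end{equation*}
Since the integrand is a continuous curve of states and the set of states is convex, $\bar{\rho}_T$ is again a state. Compactness of $\bar{\mathfrak{B}}^+$ (it is closed and bounded in finite dimension) guarantees that there is a sequence $T_n\to\infty$ along which $\bar{\rho}_{T_n}$ converges in trace norm to some state $\rho_{\mathrm{ss}}$.

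The candidate $\rho_{\mathrm{ss}}$ must then be verified to be invariant. For any fixed $s\geq 0$ the semigroup property and a change of variables give
\begin{equation*}
\mathcal{E}_s(\bar{\rho}_T) - \bar{\rho}_T = \frac{1}{T}\left[\int_T^{T+s} \mathcal{E}_\tau(\rho_0)\,d\tau - \int_0^{s} \mathcal{E}_\tau(\rho_0)\,d\tau\right],
\end{equation*}
and taking the trace norm, using $\|\mathcal{E}_\tau(\rho_0)\|_1 = 1$ for all $\tau$ (contractivity plus trace preservation), yields $\|\mathcal{E}_s(\bar{\rho}_T) - \bar{\rho}_T\|_1 \leq 2s/T$. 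Letting $T=T_n\to\infty$ and invoking continuity of $\mathcal{E}_s$ (it is a bounded operator on a finite-dimensional space) gives $\mathcal{E}_s(\rho_{\mathrm{ss}}) = \rho_{\mathrm{ss}}$ for every $s\geq 0$. Differentiating at $s=0$ then yields $\mathcal{L}(\rho_{\mathrm{ss}}) = 0$, which is precisely \eqref{steadystate1}.

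The main conceptual hurdle is not the existence of a fixed point for a single $\mathcal{E}_\tau$ (immediate from Brouwer, or even from the fact that $1$ must lie in the spectrum since the spectral radius equals $\|\mathcal{E}_\tau\| = 1$ by contractivity and trace preservation), but rather ensuring the \emph{same} $\rho_{\mathrm{ss}}$ is fixed by \emph{all} $\mathcal{E}_\tau$ simultaneously, which the Cesàro average handles uniformly. An alternative path would be to invoke the Markov--Kakutani fixed point theorem applied to the commuting family $\{\mathcal{E}_\tau\}_{\tau\geq 0}$ acting on the compact convex set of states, but the averaging argument above is self-contained and requires no tools beyond elementary finite-dimensional analysis.
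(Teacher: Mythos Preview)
Your proposal is correct and follows essentially the same route as the paper: form the Ces\`aro (ergodic) time average $\frac{1}{T}\int_0^T \mathcal{E}_\tau(\rho_0)\,d\tau$ of the orbit of an arbitrary initial state and show that any accumulation point as $T\to\infty$ is a common fixed state. If anything, your version is more careful than the paper's, which asserts existence of the limit after only bounding the norm and declares the result ``clearly a steady state'' without writing out the $2s/T$ estimate or the subsequence/compactness step.
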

\begin{proof} We prove this by construction. Take any initial state $\rho(0)$; we calculate the average state of its evolution by (this is sometimes called ``ergodic average'')
\begin{equation}\label{averagedstate}
\bar{\rho}=\lim_{T\rightarrow\infty} \frac{1}{T}\int_0^Te^{\mathcal{L}\tau'}\rho(0)d\tau'.
\end{equation}
It is simple to show that the integral converges just by taking norms
\[
\lim_{T\rightarrow\infty} \left\|\frac{1}{T}\int_0^Te^{\mathcal{L}\tau}\rho(0)d\tau\right\|\leq\lim_{T\rightarrow\infty} \frac{1}{T}\int_0^T\|e^{\mathcal{L}\tau}\|\|\rho(0)\|d\tau\leq1.
\]
Because $e^{\mathcal{L}\tau}$ is completely positive, it is obvious that $\bar{\rho}$ is a state. Moreover, $\bar{\rho}$ is clearly a steady state
\[
e^{\mathcal{L}\tau}\bar{\rho}=\bar{\rho}.
\]
\end{proof}

The easiest way to show the steady state properties of a finite dimensional semigroup is by writing the generator and the exponential in matrix form. In order to do that, let us consider again the Hilbert-Schmidt inner product on $\mathfrak{B}$ and a set of mutually orthonormal basis $\{F_j, j=1,\ldots,N^2\}$, that is $(F_j,F_k)_{\mathrm{HS}}=\tr(F_j^\dagger F_k)=\delta_{jk}$. Then the matrix elements of the generator $\mathcal{L}$ in that basis will be
\[
\mathcal{L}_{jk}=(F_j,\mathcal{L}F_k)_{\mathrm{HS}}=\tr[F_j^\dagger \mathcal{L}(F_k)].
\]
To calculate the exponential of $\mathcal{L}_{jk}$ one uses the Jordan block form (in general $\mathcal{L}_{jk}$ is not diagonalizable). That is, we look for a basis such that $\mathcal{L}_{jk}$ can be written as a block-diagonal matrix, $\mathbf{S}^{-1}\boldsymbol{\mathcal{L}}\mathbf{S}=\boldsymbol{\mathcal{L}}^{(1)}\oplus\boldsymbol{\mathcal{L}}^{(2)}\oplus\ldots\oplus\boldsymbol{\mathcal{L}}^{(M)}$, of $M$ blocks where $\mathbf{S}$ is the matrix for the change of basis, and each block has the form
\[
\boldsymbol{\mathcal{L}}^{(i)}=\left[\begin{array}{ccccc}
\lambda_i      & 1         & 0       &  \ldots & 0         \\
0              & \lambda_i & 1       &  \ddots & \vdots    \\
\vdots         & \ddots    & \ddots  &  \ddots & 0         \\
\vdots         &           & \ddots  &  \ddots & 1         \\
0              & \ldots    & \ldots  &  0      & \lambda_i
\end{array}\right].
\]
Here $\lambda_i$ denotes some eigenvalue of $\boldsymbol{\mathcal{L}}$ and the size of the block coincides with its algebraic multiplicity $k_i$. It is a well-know result (see for example references \cite{EngelNagel,HornJonson}) that the exponential of $\boldsymbol{\mathcal{L}}$ can be explicitly computed as
\begin{eqnarray}
e^{\boldsymbol{\mathcal{L}}\tau}&=&\mathbf{S}e^{\boldsymbol{\mathcal{L}}^{(1)}\tau}\oplus e^{\boldsymbol{\mathcal{L}}^{(2)}\tau}\oplus\ldots\oplus e^{\boldsymbol{\mathcal{L}}^{(M)}\tau}\mathbf{S}^{-1}\nonumber\\
&=&\mathbf{S}\left(e^{\lambda_1\tau}e^{\textbf{N}_1\tau}\right)\oplus \left(e^{\lambda_2\tau}e^{\textbf{N}_2\tau}\right)\oplus\ldots\oplus \left(e^{\lambda_M\tau}e^{\textbf{N}_M\tau}\right)\mathbf{S}^{-1},\label{JordanExponential}
\end{eqnarray}
where the matrices $\textbf{N}_i=\boldsymbol{\mathcal{L}}^{(i)}-\lambda_i\mathbf{1}$ are nilpotent matrices, $\left(\textbf{N}_i\right)^{k_i}=0$, and thus their exponentials are easy to deal with. If $\boldsymbol{\mathcal{L}}$ is a diagonalizable matrix, $\textbf{N}_i=0$ for every eigenvalue.

Now we move on to the principal result of this section.
\begin{theorem} A completely positive semigroup is relaxing if and only if the zero eigenvalue of $\mathcal{L}$ is non-degenerate and the rest of eigenvalues have negative real part. \label{theoremSteady}
\end{theorem}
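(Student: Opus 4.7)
The plan is to work with the Jordan block expansion of $e^{\mathcal{L}\tau}$ given in (\ref{JordanExponential}) and to track the long-time behaviour of each block separately. A single preliminary observation controls the spectrum: by Theorem \ref{theoCPTcontractions}, $e^{\mathcal{L}\tau}$ is a contraction on $\mathfrak{B}$ for every $\tau\geq 0$, so no eigenvalue $\lambda_i$ of $\mathcal{L}$ can have $\Re(\lambda_i)>0$, and any eigenvalue with $\Re(\lambda_i)=0$ must correspond to a trivial ($1\times 1$) Jordan block, since otherwise the nilpotent factor $e^{\mathbf{N}_i\tau}$ would generate polynomial growth against a merely oscillatory $|e^{\lambda_i\tau}|=1$, eventually violating contractivity. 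By Theorem \ref{theoexiststeady}, $0$ is always an eigenvalue, so the constant term is always present.

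For sufficiency, assume $0$ is a simple eigenvalue and $\Re(\lambda_i)<0$ for every other eigenvalue. In the block decomposition (\ref{JordanExponential}) each non-zero block contributes a term whose norm is bounded by $|e^{\lambda_i\tau}|\,\|e^{\mathbf{N}_i\tau}\|$, a product of an exponentially decaying factor and a polynomial in $\tau$; hence it vanishes as $\tau\to\infty$. The surviving piece is a rank-one idempotent $P_0$, the spectral projector onto $\ker\mathcal{L}$. Theorem \ref{theoexiststeady} guarantees at least one state $\rho_{\mathrm{ss}}\in\ker\mathcal{L}$, and since $\dim\ker\mathcal{L}=1$ and normalization fixes the trace, $\rho_{\mathrm{ss}}$ is unique. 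For any initial state $\rho$ one then has $\lim_{\tau\to\infty}e^{\mathcal{L}\tau}\rho=P_0\rho=\rho_{\mathrm{ss}}$, because trace preservation forces the coefficient of $\rho_{\mathrm{ss}}$ in $P_0\rho$ to equal $\tr(\rho)=1$.

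For necessity, suppose the semigroup is relaxing. A nontrivial Jordan block attached to the eigenvalue $0$ would, via the nilpotent factor, yield $\|e^{\mathcal{L}\tau}\|$ growing linearly in $\tau$, contradicting contractivity; so the zero eigenvalue must be semisimple. If its geometric multiplicity were larger than one, $\ker\mathcal{L}$ would contain at least two linearly independent self-adjoint trace-one fixed points (obtained by normalizing two independent steady directions produced by the ergodic average construction of Theorem \ref{theoexiststeady} applied to distinct initial states), contradicting uniqueness of $\rho_{\mathrm{ss}}$. Finally, suppose some eigenvalue satisfies $\lambda_i=i\omega$ with $\omega\neq 0$; by the preliminary remark its block is trivial, so there is a nonzero eigenvector $\sigma\in\mathfrak{B}$ with $e^{\mathcal{L}\tau}\sigma=e^{i\omega\tau}\sigma$. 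Since $\mathcal{L}$ preserves hermiticity, we may take $\sigma$ self-adjoint (replacing it by $\sigma+\sigma^\dagger$ or $i(\sigma-\sigma^\dagger)$ if necessary, which belong to the eigenspaces of $\pm i\omega$); trace preservation forces $\tr(\sigma)=0$, so writing $\sigma=\sigma^+-\sigma^-$ via its spectral decomposition, the state $\rho=\rho_{\mathrm{ss}}+\epsilon\sigma/\|\sigma\|_1$ is a valid density operator for $\epsilon>0$ small enough and its evolution $e^{\mathcal{L}\tau}\rho=\rho_{\mathrm{ss}}+\epsilon e^{i\omega\tau}\sigma/\|\sigma\|_1$ oscillates forever, contradicting the relaxing property.

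The main obstacle I expect is precisely the last step: ruling out purely imaginary eigenvalues requires exhibiting a genuine physical density matrix whose trajectory does not converge, which means controlling that the abstract eigenvectors in $\mathfrak{B}$ can be realised as admissible perturbations of $\rho_{\mathrm{ss}}$ while remaining positive. The hermiticity-plus-traceless argument sketched above handles this, but it depends on the commutation of $\mathcal{L}$ with the adjoint operation, a property inherited from the GKLS form (\ref{generatorMarkov}) that should be invoked explicitly.
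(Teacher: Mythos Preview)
Your overall strategy mirrors the paper's: Jordan blocks for sufficiency, spectral reasoning for necessity. The sufficiency argument is fine and in fact a bit tidier than the paper's explicit matrix manipulation with $\mathbf{S}$ and $\mathbf{D}_1$; both amount to the same rank-one projector in the limit.

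The genuine gap is in your treatment of a purely imaginary eigenvalue. A self-adjoint operator cannot be an eigenvector of $\mathcal{L}$ for a non-real eigenvalue: from $\mathcal{L}(\sigma)=i\omega\sigma$ and the hermiticity-preservation $\mathcal{L}(X)^\dagger=\mathcal{L}(X^\dagger)$ one gets $\mathcal{L}(\sigma^\dagger)=-i\omega\sigma^\dagger$, so $\sigma+\sigma^\dagger$ and $i(\sigma-\sigma^\dagger)$ sit in the two-dimensional $\{\pm i\omega\}$ invariant subspace but are \emph{not} eigenvectors. Consequently your displayed evolution $e^{\mathcal{L}\tau}\rho=\rho_{\mathrm{ss}}+\epsilon\, e^{i\omega\tau}\sigma/\|\sigma\|_1$ cannot hold: if $\sigma$ is taken self-adjoint the right-hand side is not, and if $\sigma$ is the complex eigenvector then $\rho$ was never a density matrix. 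There is a second, independent obstruction: even with a legitimate self-adjoint traceless $\tilde\sigma$ whose orbit oscillates, $\rho_{\mathrm{ss}}+\epsilon\tilde\sigma\geq 0$ is not guaranteed unless $\rho_{\mathrm{ss}}$ is full rank, which is nowhere assumed.

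The paper avoids both issues by a different device. It does not perturb $\rho_{\mathrm{ss}}$ at all; instead it writes the traceless $\sigma$ as $c(\rho_+-\rho_-)$ via its spectral decomposition, with $\rho_\pm$ bona fide states, and applies the relaxing hypothesis to each separately:
\[
\|e^{\mathcal{L}\tau}(\rho_+-\rho_-)\|_1\leq\|e^{\mathcal{L}\tau}\rho_+-\rho_{\mathrm{ss}}\|_1+\|e^{\mathcal{L}\tau}\rho_--\rho_{\mathrm{ss}}\|_1\to 0,
\]
contradicting the constant norm coming from the eigenvalue equation. This needs no positivity assumption on $\rho_{\mathrm{ss}}$. (The paper is equally casual about the self-adjointness of the eigenvector, but its argument survives verbatim once one replaces $\sigma$ by its hermitian part, whose orbit is periodic and, since $e^{\mathcal{L}\tau}$ is invertible in finite dimension, bounded away from zero.)
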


\begin{proof} For the ``only if'' part. The condition of non-degenerate zero eigenvalue of $\mathcal{L}$ is trivial, otherwise the semigroup has more than one steady state and is not relaxing by definition. But even if this is true, it is necessary that the rest of eigenvalues have negative real part. Indeed, consider some eigenvalue with purely imaginary part $\mathcal{L}(\sigma)=i\phi\sigma$, where $\sigma$ is the associated eigenvector. By using the fact that $e^{\mathcal{L}\tau}$ is trace preserving
\[
e^{\mathcal{L}\tau}(\sigma)=e^{i\phi\tau}\sigma\Rightarrow \tr\left[e^{\mathcal{L}\tau}(\sigma)\right]=\tr(\sigma)=e^{i\phi\tau}\tr(\sigma)\Leftrightarrow\tr(\sigma)=0.
\]
So we can split $\sigma$ as a difference of positive operators $\sigma=\sigma_+-\sigma_-$, given by its spectral decomposition as in the proof of theorem \ref{theoCPTcontractions}. Since $\tr(\sigma)=0$, $\tr(\sigma_+)=\tr(\sigma_-)=c$, and we can write $\sigma/c=\rho_+-\rho_-$, where $\rho_\pm$ are now quantum states. The triangle inequality with the steady state $\rho_\mathrm{ss}$ reads
\begin{eqnarray*}
\|e^{\mathcal{L}\tau}(\sigma/c)\|&=&\|e^{\mathcal{L}\tau}(\rho_+-\rho_-)\|\leq\|e^{\mathcal{L}\tau}(\rho_+-\rho_\mathrm{ss})\|+\|e^{\mathcal{L}\tau}(\rho_--\rho_\mathrm{ss})\|\\
&=&\|e^{\mathcal{L}\tau}(\rho_+)-\rho_\mathrm{ss}\|+\|e^{\mathcal{L}\tau}(\rho_-)-\rho_\mathrm{ss}\|.
\end{eqnarray*}
Thus if the semigroup is relaxing we obtain $\lim_{\tau\rightarrow\infty}\|e^{\mathcal{L}\tau}(\sigma/c)\|=0$, but this is impossible as $e^{\mathcal{L}\tau}(\sigma)=e^{i\phi\tau}\sigma$ and
\[
\|e^{\mathcal{L}\tau}(\rho_+-\rho_-)\|=\|e^{\mathcal{L}\tau}(\sigma/c)\|=|e^{i\phi\tau}|\|(\sigma/c)\|=\|\sigma/c\|>0,\quad \forall \tau.
\]

Conversely, let us denote $\lambda_1=0$ the non-degenerate zero eigenvalue of $\mathcal{L}$, since the remaining ones have negative real part, according to (\ref{JordanExponential}) in the limit $\tau\rightarrow\infty$ the matrix form of the semigroup will be
\begin{eqnarray}
\boldsymbol{\mathcal{E}}_{\infty}&=&\lim_{\tau\rightarrow\infty}e^{\boldsymbol{\mathcal{L}}\tau}=\lim_{\tau\rightarrow\infty}\mathbf{S}\left(e^{\lambda_1\tau}\right)\oplus \left(e^{\lambda_2\tau}e^{\textbf{N}_2\tau}\right)\oplus\ldots\oplus \left(e^{\lambda_M\tau}e^{\textbf{N}_M\tau}\right)\mathbf{S}^{-1}\nonumber\\
&=&\mathbf{S}\left(1\oplus0\oplus\ldots\oplus 0\right)\mathbf{S}^{-1}=\mathbf{S}\mathbf{D}_1\mathbf{S}^{-1},
\end{eqnarray}
where
\[
\mathbf{D}_1=\left(\begin{array}{cccc}
1      & 0      & \cdots & 0      \\
0      & 0      & \cdots & 0      \\
\vdots & \vdots & \ddots & \vdots \\
0      & 0      & \cdots & 0      \\
\end{array}\right)
\]
coincides with the projector on the eigenspace associated with the eigenvalue $1$. From here it is quite simple to see that the application of $\boldsymbol{\mathcal{E}}_{\infty}$ on any initial state $\rho$ written as a vector in the basis of $\{F_j, j=1,\ldots,N^2\}$ will give us just the steady state. More explicitly, if $\rho_\mathrm{ss}$ is the steady state, its vector representation will be
\[
\mathbf{v}_\mathrm{ss}=\left[(F_1,\rho_\mathrm{ss})_\mathrm{HS},\ldots,(F_{N^2},\rho_\mathrm{ss})_\mathrm{HS}\right]^\mathrm{t},
\]
and the condition $e^{\mathcal{L}\tau}\rho_\mathrm{ss}=\rho_\mathrm{ss}$ is of course expressed in this basis as
\begin{equation}
e^{\boldsymbol{\mathcal{L}}\tau}\mathbf{v}_\mathrm{ss}=\mathbf{v}_\mathrm{ss} \label{steadystateMatrix}.
\end{equation}
On the other hand, as we already have said, $\mathbf{D}_1$ is the projector on the eigenspace associated with the eigenvalue $1$, which is expanded by $\mathbf{v}_\mathrm{ss}$, so we can write $\mathbf{D}_1=\mathbf{v}_\mathrm{ss}(\mathbf{v}_\mathrm{ss},\cdot)$ or $\mathbf{D}_1=|\mathbf{v}_\mathrm{ss}\rangle\langle\mathbf{v}_\mathrm{ss}|$ in Dirac notation.

As a particular case ($\tau\rightarrow\infty$) of (\ref{steadystateMatrix}), we have $\boldsymbol{\mathcal{E}}_{\infty}\mathbf{v}_\mathrm{ss}=\mathbf{v}_\mathrm{ss}$, by making a little bit of algebra with this condition
\begin{eqnarray}
\boldsymbol{\mathcal{E}}_{\infty}\mathbf{v}_\mathrm{ss}=\mathbf{S}\mathbf{D}_1\mathbf{S}^{-1}\mathbf{v}_\mathrm{ss}=\mathbf{v}_\mathrm{ss}&\Rightarrow & \mathbf{S}\mathbf{v}_\mathrm{ss}(\mathbf{v}_\mathrm{ss},\mathbf{S}^{-1}\mathbf{v}_\mathrm{ss})=\mathbf{v}_\mathrm{ss}\nonumber \\ &\Rightarrow& \mathbf{S}\mathbf{v}_\mathrm{ss}=\tfrac{1}{(\mathbf{v}_\mathrm{ss},\mathbf{S}^{-1}\mathbf{v}_\mathrm{ss})}\mathbf{v}_\mathrm{ss} \label{steadystate2},
\end{eqnarray}
therefore $\mathbf{v}_\mathrm{ss}$ is eigenvector of $\mathbf{S}$ with eigenvalue $\tfrac{1}{(\mathbf{v}_\mathrm{ss},\mathbf{S}^{-1}\mathbf{v}_\mathrm{ss})}$.

Finally consider any initial state $\rho$ written as a vector $\mathbf{v}_\rho$ in the orthonormal basis $\{F_j, j=1,\ldots,N^2\}$. We have
\[
\lim_{\tau\rightarrow\infty}e^{\boldsymbol{\mathcal{L}}\tau}\mathbf{v}_\rho=\boldsymbol{\mathcal{E}}_{\infty}\mathbf{v}_\rho=\mathbf{S}\mathbf{D}_1\mathbf{S}^{-1}\mathbf{v}_\rho=\mathbf{S}\mathbf{v}_\mathrm{ss}(\mathbf{v}_\mathrm{ss},\mathbf{S}^{-1}\mathbf{v}_\rho),
\]
thus using (\ref{steadystate2}),
\[
\boldsymbol{\mathcal{E}}_{\infty}\mathbf{v}_\rho=\frac{(\mathbf{v}_\mathrm{ss},\mathbf{S}^{-1}\mathbf{v}_\rho)}{(\mathbf{v}_\mathrm{ss},\mathbf{S}^{-1}\mathbf{v}_\mathrm{ss})}\mathbf{v}_\mathrm{ss}= C(\rho)\mathbf{v}_\mathrm{ss}, \quad C(\rho)\equiv\frac{(\mathbf{v}_\mathrm{ss},\mathbf{S}^{-1}\mathbf{v}_\rho)}{(\mathbf{v}_\mathrm{ss},\mathbf{S}^{-1}\mathbf{v}_\mathrm{ss})},
\]
and in the notation of states and operators this equation reads
\[
\lim_{\tau\rightarrow\infty}e^{\mathcal{L}\tau}\rho=C(\rho)\rho_{\mathrm{ss}}, \quad \text{for all }\rho.
\]
One can compute $C(\rho)$ explicitly and find that $C(\rho)=1$ for every $\rho$, however this follows from the condition that $e^{\mathcal{L}\tau}$ is a trace preserving semigroup.

\end{proof}

This theorem can be taken as a weaker version of the result by Schirmer and Wang \cite{Schirmer-Wang}, which assert that if the steady state is unique then the semigroup is relaxing (there are not purely imaginary eigenvalues). A particular simple case is if the unique steady state is pure $\rho_{ss}=|\psi\rangle\langle\psi|$. Then since a pure state cannot be expressed as a nontrivial convex combination of other states, the equation (\ref{averagedstate}) implies that the semigroup is relaxing. This was also stated for an arbitrary UDM in theorem 8 of reference \cite{Bougarth}.

On the other hand, it is desirable to have some condition on the structure of the generator which assures that the semigroup is relaxing as well as a characterization of the steady states; there are several results on this topic (see \cite{Spohn80,Schirmer-Wang,Davies70,Spohn76,Spohn77,Evans77,Frigerio77,Frigerio78,Frigerio-Spohn,Baumgartner1,Baumgartner2,Ticozzi1,Ticozzi2,Ticozzi3,Ticozzi4} and references therein). Here we will only present an important result given by Spohn in \cite{Spohn77}.

\begin{theorem}[Spohn] Consider a completely positive semigroup, $\mathcal{E}_\tau=e^{\mathcal{L}\tau}$, in $\mathfrak{B}$ with generator \label{theoremSpohn}
\[
\mathcal{L}\rho=-i[H,\rho]+\sum_{k\in I}\gamma_k\left[V_k\rho V_k^\dagger-\frac{1}{2}\{V_k^\dagger V_k,\rho\}\right],
\]
for some set of indexes $I$. Provided that the set $\{V_k,k\in I\}$ is self-adjoint (this is, the adjoint of every element of the set is inside of the set) and the only operators commuting with all of them are proportional to the identity, which is expressed as $\{V_k,k\in I\}'=c\mathds{1}$, the semigroup $\mathcal{E}_\tau$ is relaxing.
\end{theorem}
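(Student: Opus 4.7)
To prove relaxation, the plan is to appeal to Theorem \ref{theoremSteady}: it suffices to show that $\mathcal{L}$ has no eigenvalues on the imaginary axis other than $0$, and that the $0$-eigenspace is one-dimensional. I will work in the Heisenberg picture, using the dual generator $\mathcal{L}^\ast$ (adjoint under the Hilbert-Schmidt inner product), whose spectrum is obtained from that of $\mathcal{L}$ by complex conjugation and hence agrees with it on the imaginary axis. The target is to show that the only $A\in\mathfrak{B}$ satisfying $\mathcal{L}^\ast(A)=i\phi A$ with $\phi\in\mathds{R}$ is $A\propto\mathds{1}$ with $\phi=0$, and then invoke contractivity of $\mathcal{E}_\tau$ to rule out Jordan blocks on the imaginary axis.

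First, I would verify that $\rho_\mathrm{ss}=\mathds{1}/N$ is invariant. Computing gives $\mathcal{L}(\mathds{1})=\sum_k \gamma_k [V_k,V_k^\dagger]$; absorbing $\sqrt{\gamma_k}$ into $V_k$ and pairing each jump operator with its adjoint (which is in the family by the self-adjointness hypothesis) makes the sum cancel telescopically, so $\mathcal{L}(\mathds{1})=0$. Dually, $\mathcal{E}_\tau^\ast$ is a unital completely positive semigroup, and the functional $\langle X\rangle_\ast:=\tr(\rho_\mathrm{ss} X)=\tr(X)/N$ is invariant: $\langle\mathcal{E}_\tau^\ast(X)\rangle_\ast=\langle X\rangle_\ast$.

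The technical heart is the algebraic identity, valid for any $A\in\mathfrak{B}$,
\[
\mathcal{L}^\ast(A^\dagger A)-\mathcal{L}^\ast(A^\dagger)A-A^\dagger\mathcal{L}^\ast(A)=\sum_k\gamma_k[V_k,A]^\dagger[V_k,A],
\]
obtained by direct expansion of the GKLS form (the Hamiltonian part drops out through the Leibniz rule for commutators). Now suppose $\mathcal{L}^\ast(A)=i\phi A$. The Kadison-Schwarz inequality for the unital completely positive map $\mathcal{E}_\tau^\ast$ gives $\mathcal{E}_\tau^\ast(A^\dagger A)\geq\mathcal{E}_\tau^\ast(A)^\dagger\mathcal{E}_\tau^\ast(A)=A^\dagger A$ (since $|e^{i\phi\tau}|=1$). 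Since $\mathcal{E}_\tau^\ast$ is trace-preserving in the sense above, the non-negative operator $\mathcal{E}_\tau^\ast(A^\dagger A)-A^\dagger A$ has vanishing trace, hence is zero. Differentiating at $\tau=0$ yields $\mathcal{L}^\ast(A^\dagger A)=0$; combined with the identity and with $\mathcal{L}^\ast(A^\dagger)A+A^\dagger\mathcal{L}^\ast(A)=-i\phi A^\dagger A+i\phi A^\dagger A=0$, this forces $\sum_k\gamma_k[V_k,A]^\dagger[V_k,A]=0$, so $[V_k,A]=0$ for every $k$.

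By the trivial-commutant hypothesis $\{V_k\}'=c\mathds{1}$, the operator $A$ is a multiple of the identity, $A=c\mathds{1}$; then $\mathcal{L}^\ast(A)=0=i\phi c\mathds{1}$, which forces $\phi=0$. Hence the only imaginary-axis eigenvector of $\mathcal{L}^\ast$ is $\mathds{1}$ at eigenvalue $0$, and the same holds for $\mathcal{L}$; a Jordan block at any purely imaginary eigenvalue would yield orbits of $e^{\mathcal{L}\tau}$ growing in $\tau$, contradicting the contraction property, so the algebraic multiplicity of $0$ is one and all other eigenvalues lie strictly in the left half-plane. Theorem \ref{theoremSteady} then concludes that the semigroup is relaxing. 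The main obstacle I anticipate is the algebraic identity of step three (a ``carré du champ'' for the GKLS generator): it is mechanical but requires careful bookkeeping of signs; a secondary subtle point is justifying that the self-adjoint-family condition, which involves no constraint on the $\gamma_k$, indeed suffices once the rates are absorbed into the jump operators so that the pairing $V_k\leftrightarrow V_k^\dagger$ carries identical weights.
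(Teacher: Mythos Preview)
Your argument has a genuine gap at the very first substantive step, and it is not the ``secondary subtle point'' you flag at the end but rather a show-stopper for the approach as written. You claim $\rho_\mathrm{ss}=\mathds{1}/N$ is invariant because ``pairing each jump operator with its adjoint makes the sum $\sum_k\gamma_k[V_k,V_k^\dagger]$ cancel''. But the self-adjointness of the \emph{set} $\{V_k\}$ only says that for each $k$ there is some $j(k)$ with $V_{j(k)}=V_k^\dagger$; it imposes no relation between $\gamma_k$ and $\gamma_{j(k)}$. Pairing gives $(\gamma_k-\gamma_{j(k)})[V_k,V_k^\dagger]$, which need not vanish. Absorbing $\sqrt{\gamma_k}$ into $V_k$ does not help: the rescaled family $\{\sqrt{\gamma_k}\,V_k\}$ is self-adjoint only if the paired rates already coincide. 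A concrete counterexample is the qubit with $V_1=\sigma_-$, $V_2=\sigma_+$, $\gamma_1\neq\gamma_2$, $H=0$: the set $\{\sigma_\pm\}$ is self-adjoint with trivial commutant, yet the unique steady state is $\mathrm{diag}(\gamma_1,\gamma_2)/(\gamma_1+\gamma_2)\neq\mathds{1}/2$.

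This breaks step four. Kadison--Schwarz still gives $\mathcal{E}_\tau^\ast(A^\dagger A)\geq A^\dagger A$, but to promote this to equality you need a \emph{faithful} invariant functional, and $\tr(\cdot)/N$ is not invariant in general. The standard Frigerio-type argument you are reaching for requires a faithful stationary state as an input hypothesis; under Spohn's assumptions alone, establishing faithfulness of the (a priori unknown) steady state is essentially as hard as the theorem itself. The carr\'e-du-champ identity and the Jordan-block exclusion are fine, but without the trace step the chain is severed.

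For contrast, the paper's proof sidesteps any reference to an invariant state. It works in the Schr\"odinger picture with the Hilbert--Schmidt form: expanding the $V_k$ in a self-adjoint orthonormal basis $\{F_m\}$ produces a dissipator with a strictly positive coefficient matrix $a_{mn}$, and one peels off $\mathcal{L}_2(\rho)=\alpha\sum_m\bigl[F_m\rho F_m-\tfrac12\{F_m^2,\rho\}\bigr]$ for small $\alpha>0$. A direct computation yields $(\sigma,\mathcal{L}_2\sigma)_{\mathrm{HS}}=-\tfrac{\alpha}{2}\sum_m\|[F_m,\sigma]\|_{\mathrm{HS}}^2$, strictly negative on traceless $\sigma$ by the trivial-commutant hypothesis; since the remainder $\mathcal{L}_1$ is still a GKLS generator, the Hermitian part of $\mathcal{L}$ on the traceless subspace is negative definite, forcing all nonzero eigenvalues into the open left half-plane. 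Your Heisenberg/Kadison--Schwarz route would become a valid alternative if you first proved existence of a faithful invariant state, but that is an additional nontrivial ingredient not supplied by the hypotheses.
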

\begin{proof} Since $\{V_k,k\in I\}$ is self-adjoint we can consider a self-adjoint orthonormal basis with respect to the Hilbert-Schmidt product $\{F_j, j=1,\ldots,N^2\}$, such that, as in the proof of theorem \ref{TeoremaMARKOV}, we chose $F_{N^2}=\mathds{1}/\sqrt{N}$, and the remaining elements are traceless. Expanding
\begin{equation}\label{steadystate3}
V_k=\sum_{m=1}^{N^2-1}v_{km}F_m+\frac{v_{k,N^2}}{\sqrt{N}}\mathds{1},
\end{equation}
we rewrite the generator as
\begin{equation}\label{steadystate4}
\mathcal{L}\rho=-i[H+\bar{H},\rho]+\sum_{m,n=1}^{N^2-1}a_{mn}\left[F_m\rho F_n-\frac{1}{2}\{F_n F_m,\rho\}\right],
\end{equation}
where $\bar{H}$ is the self-adjoint operator
\[
\bar{H}=\sum_{k\in I}\frac{\gamma_k}{2i\sqrt{N}}\sum_{m=1}^{N^2-1}\left(v_{k,N^2}v^\ast_{km}-v^\ast_{k,N^2}v_{km}\right)F_m,
\]
and the Hermitian matrix is $a_{mn}=\sum_{k\in I}\gamma_kv_{km}v^\ast_{kn}$. Now note that we can trivially chose $a_{mn}$ to be strictly positive; for that it is enough to diagonalize again $a_{mn}$ and ignore the members in the sum of (\ref{steadystate4}) which goes with zero eigenvalue as they do not contribute, and write again the less dimensional matrix $a_{mn}$ in terms of the numbers of elements of the basis strictly necessary, say $p\leq{N^2-1}$,
\[
\mathcal{L}\rho=-i[H+\bar{H},\rho]+\sum_{m,n=1}^{p}a_{mn}\left[F_m\rho F_n-\frac{1}{2}\{F_n F_m,\rho\}\right].
\]
Next, consider some $\alpha>0$ to be smaller than the smallest eigenvalue of the strictly positive $a_{mn}$, and split the generator in two pieces $\mathcal{L}=\mathcal{L}_1+\mathcal{L}_2$ with
\[
\mathcal{L}_2=\alpha\sum_{m=1}^p\left[F_m\rho F_m-\frac{1}{2}\{F_m F_m,\rho\}\right].
\]
It is clear that $\mathcal{L}_2$ is the generator of a completely positive semigroup, and the same for $\mathcal{L}_1$ since $a_{mn}-\alpha\delta_{mn}$ is still a positive semidefinite matrix. Next, for some $\sigma\in\mathfrak{B}$ we compute
\begin{eqnarray*}
(\sigma,\mathcal{L}_2(\sigma))_\mathrm{HS}&=&\tr[\sigma\mathcal{L}_2(\sigma)]=\alpha\sum_{m=1}^p\tr[(\sigma F_m)^2-\sigma^2F_m^2]\\
&=&\frac{\alpha}{2}\sum_{m=1}^p\tr\left([F_m,\sigma][F_m,\sigma]\right)=-\frac{\alpha}{2}\sum_{m=1}^p\tr\left([F_m,\sigma]^\dagger[F_m,\sigma]\right)\leq0,
\end{eqnarray*}
so $\mathcal{L}_2$ has only real eigenvalues smaller or equal to zero. Moreover for the eigenvalue $0$ we have $\tr[\sigma\mathcal{L}_2(\sigma)]=0$ which requires $[F_m,\sigma]=0$ for $m=1,...,p$. But, because of the equation  (\ref{steadystate3}), it implies that $[V_k,\sigma]=0$ for $k\in I$ and therefore $\sigma=c\mathds{1}$ by presupposition.

Now, let $\tilde{\mathcal{L}}$, $\tilde{\mathcal{L}}_1$ and $\tilde{\mathcal{L}}_2$ denote the restriction of $\mathcal{L}$, $\mathcal{L}_1$ and $\mathcal{L}_2$ to the subspace generated by $\{F_j, j=1,\ldots,N^2-1\}$, this is the subspace of traceless self-adjoint operators. Then since $\mathcal{L}_1$ is the generator of a contracting semigroup all the eigenvalues of $\tilde{\mathcal{L}}_1$ have real part smaller or equal to zero. On the other hand from the foregoing we conclude that the spectrum of $\tilde{\mathcal{L}}_2$ is composed only of strictly negative real numbers, this implies that $\tilde{\mathcal{L}}=\tilde{\mathcal{L}}_1+\tilde{\mathcal{L}}_2$ has only eigenvalues with strictly negative real part. Indeed, writing $\tilde{\mathcal{L}}$, $\tilde{\mathcal{L}}_1$ and $\tilde{\mathcal{L}}_2$  in its matrix representation, $\tilde{\boldsymbol{\mathcal{L}}}$, $\tilde{\boldsymbol{\mathcal{L}}}_1$ and $\tilde{\boldsymbol{\mathcal{L}}}_2$, we can split them in their Hermitian and anti-Hermitian parts, e.g.
\[
\tilde{\boldsymbol{\mathcal{L}}}=\tilde{\boldsymbol{\mathcal{L}}}_A+i\tilde{\boldsymbol{\mathcal{L}}}_B,
\]
where $\tilde{\boldsymbol{\mathcal{L}}}_A=\left(\tilde{\boldsymbol{\mathcal{L}}}+\tilde{\boldsymbol{\mathcal{L}}}^\dagger\right)/2$ and $\tilde{\boldsymbol{\mathcal{L}}}_B=\left(\tilde{\boldsymbol{\mathcal{L}}}-\tilde{\boldsymbol{\mathcal{L}}}^\dagger\right)/2i$ are Hermitian matrices. Similarly for $\tilde{\boldsymbol{\mathcal{L}}}_1$ and $\tilde{\boldsymbol{\mathcal{L}}}_2$. It is quite evident that the real part of the spectrum of $\tilde{\mathcal{L}}$ is strictly negative if and only if $\tilde{\boldsymbol{\mathcal{L}}}_A$ is a negative definite matrix, $\left(\mathbf{v}_\sigma,\tilde{\boldsymbol{\mathcal{L}}}_A\mathbf{v}_\sigma\right)<0$ for any $\sigma\in\mathrm{Span}\{F_j, j=1,\ldots,N^2-1\}$. But, since $\tilde{\boldsymbol{\mathcal{L}}}_A=\tilde{\boldsymbol{\mathcal{L}}}_{A1}+\tilde{\boldsymbol{\mathcal{L}}}_{A2}$ where
$\tilde{\boldsymbol{\mathcal{L}}}_{A1}=\left(\tilde{\boldsymbol{\mathcal{L}}}_1+\tilde{\boldsymbol{\mathcal{L}}}_1^\dagger\right)/2$ and $\tilde{\boldsymbol{\mathcal{L}}}_{A2}=\left(\tilde{\boldsymbol{\mathcal{L}}}_2+\tilde{\boldsymbol{\mathcal{L}}}_2^\dagger\right)/2$, we obtain
\[
\left(\mathbf{v}_\sigma,\tilde{\boldsymbol{\mathcal{L}}}_A\mathbf{v}_\sigma\right)=\left(\mathbf{v}_\sigma,\tilde{\boldsymbol{\mathcal{L}}}_{A1}\mathbf{v}_\sigma\right)+\left(\mathbf{v}_\sigma,\tilde{\boldsymbol{\mathcal{L}}}_{A2}\mathbf{v}_\sigma\right)<0,
\]
as $\left(\mathbf{v}_\sigma,\tilde{\boldsymbol{\mathcal{L}}}_{A1}\mathbf{v}_\sigma\right)\leq0$ and $\left(\mathbf{v}_\sigma,\tilde{\boldsymbol{\mathcal{L}}}_{A2}\mathbf{v}_\sigma\right)<0$. Thus all the eigenvalues of $\tilde{\mathcal{L}}$ have strictly negative real part.

Finally since the eigenvalues of $\mathcal{L}$ are just the ones of $\tilde{\mathcal{L}}$ plus one (this is obvious because the dimension of the subspace of traceless operators is just one less than the whole state dimension $N^2$), and this extra eigenvalue has to be equal to 0 because theorem \ref{theoexiststeady} asserts that there exists always at least one steady state. Therefore $\mathcal{L}$ fulfills the conditions of theorem \ref{theoremSteady} and the semigroup $e^{\mathcal{L}\tau}$ is relaxing.

\end{proof}

For checking if some semigroup satisfies the conditions of this theorem it is sometimes useful to explore if the operators $\{V_k,k\in I\}$ correspond to some irreducible representation of a group. Then we may invoke the Schur's lemma \cite{JansenBoon}.

\section{Microscopic description: Markovian case} \label{sectionMicrosMarkov}

As we have already pointed out, the dynamics of an open quantum system is notably different from that of a closed system; both from practical and fundamental points of view. Essential questions as the existence of universally valid dynamics or the continuity in time of these, become non-trivial matters for the evolution of open quantum systems.

Generally it would be necessary to describe the evolution in the extended closed system, and trace out the state at the end of the evolution. Unfortunately, many times, one of the systems is out of our control, and/or it has an infinite number of degrees of freedom, such that the task of calculating the associated Kraus operators can be difficult and impractical. That is the case for example of a finite $N$-dimensional system coupled with an infinite dimensional one; according to Eq. (\ref{Kraus}), we would obtain, in principle, infinite terms in the sum. However it can be easily shown that any completely positive map on a $N$-dimensional system can be written with at most $N^2$ Kraus operators (of course; since the dimension of $\mathfrak{B}^\ast$ is $N^2$, so the use of the infinite terms is unnecessary.

For that reason other techniques apart from the ``brute force'' strategy will become very useful and they will be the subject of this section.

FirstLY, we analyze some conditions under for a UDM to be approximately be approximately represented by a quantum Markov process. We will start from a microscopic Hamiltonian and apply some approximations leading to UDMs which fulfill the divisibility property (\ref{CompLaw}). Note that, in principle, as we have already said, the reduced evolution of a quantum system is never exactly Markovian. This is because, as discussed in section \ref{sectionTemporalContinuity}, there is always a privileged time, say $t_0$, where the both systems $A$ and $B$ start to interact being then the global state a product $\rho=\rho_A\otimes\rho_B$ and defining a UDM for $\rho_A$. Since generally the global state can no longer be factorized, and
(generally) the dynamical map from there is not a UDM, the divisibility property (\ref{CompLaw}) is violated.

\subsection{Nakajima-Zwanzig equation}\label{sectionNaka-Zwan}

Let assume that $A$ is our open system while $B$ plays the role of the environment; the Hamiltonian of the whole system is given by
\begin{equation}\label{totalH}
H=H_A+H_B+V,
\end{equation}
where $H_A$ and $H_B$ are the Hamiltonians which govern the local dynamics of $A$ and $B$ respectively, and $V$ stands for the interaction between both systems.

We will follow the approach initroduced by Nakajima \cite{Nakajima} and Zwanzig \cite{Zwanzig} using projection operators, which is also explained in \cite{Haake,Fain02,BrPe02} among others. In this method we define in the combined Hilbert space ``system plus environment'' $\hilbert=\hilbert_A\otimes\hilbert_B$, two orthogonal projection operators, $\mathcal{P}$ and $\mathcal{Q}$, $\mathcal{P}^2=\mathcal{P}$, $\mathcal{Q}^2=\mathcal{Q}$ and $\mathcal{P}\mathcal{Q}=\mathcal{Q}\mathcal{P}=0$, given by
\begin{eqnarray}
\mathcal{P}\rho&=&\tr_B(\rho)\otimes\rho_B,\\
\mathcal{Q}\rho&=&(\mathds{1}-\mathcal{P})\rho,
\end{eqnarray}
where $\rho\in\hilbert$ is the whole state of the system and environment and $\rho_B\in\hilbert_B$ is a fixed state of the environment. In fact, we choose $\rho_B$ to be the physical initial state of the environment. Strictly speaking this is not necessary but the method becomes much more complicated if one chooses another state. In addition, we assume that the system $B$ is initially in thermal equilibrium, this is
\[
\rho_B=\rho_\mathrm{th}=e^{-\beta H_B}[\tr\left(e^{-\beta H_B}\right)]^{-1},
\]
where $\beta=1/T$. Note that $\mathcal{P}\rho$ give all necessary information about the reduced system state $\rho_A$; so the knowledge of the dynamics of $\mathcal{P}\rho$ implies that one knows the time evolution of the reduced system.

If we start from the von Neumann equation (\ref{Liouville}) for the whole system
\begin{equation}\label{LiouvilleABV}
\frac{d\rho(t)}{dt}=-i[H_A+H_B+V,\rho(t)],
\end{equation}
and take projection operators we get
\begin{eqnarray}
\frac{d}{dt}\mathcal{P}\rho(t)&=&-i\mathcal{P}[H_A+H_B+V,\rho(t)],\label{SchrP} \\
\frac{d}{dt}\mathcal{Q}\rho(t)&=&-i\mathcal{Q}[H_A+H_B+V,\rho(t)]. \label{SchrQ}
\end{eqnarray}
As usual in perturbation theory we shall work in interaction picture with respect to the free Hamiltonians
\[
\tilde{\rho}(t)=e^{i(H_A+H_B)t}\rho(t)e^{-i(H_A+H_B)t}.
\]

Since
\begin{eqnarray*}
\tr_B[\tilde{\rho}(t)]&=&\tr_B\left[e^{i(H_A+H_B)t}\rho(t)e^{-i(H_A+H_B)t}\right]\\
&=&e^{iH_At}\tr_B\left[e^{iH_Bt}\rho(t)e^{-iH_Bt}\right]e^{iH_At}=e^{iH_At}\tr_B[\rho(t)]e^{iH_At}=\tilde{\rho}_A(t).
\end{eqnarray*}
the projection operators are preserved under this operation, and equations (\ref{SchrP}) and (\ref{SchrQ}) can be written in interaction picture just like
\begin{eqnarray}
\frac{d}{dt}\mathcal{P}\tilde{\rho}(t)&=&-i\mathcal{P}[\tilde{V}(t),\tilde{\rho}(t)], \label{InterP}\\
\frac{d}{dt}\mathcal{Q}\tilde{\rho}(t)&=&-i\mathcal{Q}[\tilde{V}(t),\tilde{\rho}(t)]. \label{InterQ}
\end{eqnarray}

Let us use the notation $\mathcal{V}(t)\cdot\equiv-i[\tilde{V}(t),\cdot]$; then by introducing the identity $\mathds{1}=\mathcal{P}+\mathcal{Q}$ between $\mathcal{V}(t)$ and $\tilde{\rho}(t)$ the equations (\ref{InterP}) and (\ref{InterQ}) may be rewritten as
\begin{eqnarray}
\frac{d}{dt}\mathcal{P}\tilde{\rho}(t)&=&\mathcal{P}\mathcal{V}(t)\mathcal{P}\tilde{\rho}(t)+\mathcal{P}\mathcal{V}(t)\mathcal{Q}\tilde{\rho}(t),\label{InterP2}\\
\frac{d}{dt}\mathcal{Q}\tilde{\rho}(t)&=&\mathcal{Q}\mathcal{V}(t)\mathcal{P}\tilde{\rho}(t)+\mathcal{Q}\mathcal{V}(t)\mathcal{Q}\tilde{\rho}(t).
\end{eqnarray}
The formal integration of the first of these equations gives
\begin{equation}\label{projectorPInt1}
\mathcal{P}\tilde{\rho}(t)=\mathcal{P}\rho(0)+\int_{0}^{t}ds\mathcal{P}\mathcal{V}(s)\mathcal{P}\tilde{\rho}(s)+\int_{0}^{t}ds\mathcal{P}\mathcal{V}(s)\mathcal{Q}\tilde{\rho}(s),
\end{equation}
where we have set $0$ as the origin of time without losing generality. On the other hand, the solution of the second equation can be written formally as
\begin{equation}\label{Qformalsolution}
\mathcal{Q}\tilde{\rho}(t)=\mathcal{G}(t,0)\mathcal{Q}\tilde{\rho}(0)+\int_{0}^tds\mathcal{G}(t,s)\mathcal{Q}\mathcal{V}(s)\mathcal{P}\tilde{\rho}(s).
\end{equation}
This is nothing but the operational version of variation of parameters formula for ordinary differential equations (see for example \cite{Chicone,Ince}), where the solution to the homogeneous equation,
\[
\frac{d}{dt}\mathcal{Q}\tilde{\rho}(t)=\mathcal{Q}\mathcal{V}(t)\mathcal{Q}\tilde{\rho}(t),
\]
is given by the propagator
\[
\mathcal{G}(t,s)=\mathcal{T}e^{\int_s^tdt'\mathcal{Q}\mathcal{V}(t')}.
\]
Inserting (\ref{Qformalsolution}) in (\ref{projectorPInt1}) yields
\begin{eqnarray}
\mathcal{P}\tilde{\rho}(t)&=&\mathcal{P}\rho(0)+\int_{0}^{t}ds\mathcal{P}\mathcal{V}(s)\mathcal{P}\tilde{\rho}(s)+\int_{0}^{t}ds\mathcal{P}\mathcal{V}(s)\mathcal{G}(s,0)\mathcal{Q}\tilde{\rho}(0) \nonumber \\
&+&\int_{0}^{t}ds\int_{0}^sdu\mathcal{P}\mathcal{V}(s)\mathcal{G}(s,u)\mathcal{Q}\mathcal{V}(u)\mathcal{P}\tilde{\rho}(u).\label{projectorPInt2}
\end{eqnarray}
This is the integrated version of the so-called \emph{generalized Nakajima-Zwanzig equation}
\begin{eqnarray}
\frac{d}{dt}\mathcal{P}\tilde{\rho}(t)&=&\mathcal{P}\mathcal{V}(t)\mathcal{P}\tilde{\rho}(t)+\mathcal{P}\mathcal{V}(t)\mathcal{G}(t,0)\mathcal{Q}\tilde{\rho}(0) \nonumber \\
&+&\int_{0}^tdu\mathcal{P}\mathcal{V}(t)\mathcal{G}(t,u)\mathcal{Q}\mathcal{V}(u)\mathcal{P}\tilde{\rho}(u).
\end{eqnarray}

Next, two assumptions are usually made. First one is that $\mathcal{P}\mathcal{V}(t)\mathcal{P}=0$, this means,
\[
\mathcal{P}\mathcal{V}(t)\mathcal{P}\rho=-i\tr_B[\tilde{V}(t),\rho_A\otimes\rho_B]\otimes\rho_B=-i\left[\tr_B\left(\tilde{V}(t)\rho_B\right),\rho_A\right]\otimes\rho_B=0,
\]
for every $\rho$ and then every $\rho_A$, which implies $\tr_B\left(\tilde{V}(t)\rho_B\right)=0$. If this is not fulfilled (which is not the case in the most practical situations), provided that $\rho_B$ commutes with $H_B$ one can always define a new interaction Hamiltonian with a shifted origin of the energy for $A$ \cite{Cohen92,ModernCohen08}. The change
\begin{equation}\label{ceroprimermomento}
V'=V-\tr_B[V\rho_B]\otimes\mathds{1}, \quad \text{and}\quad H_A'=H_A+\tr_B[V\rho_B]\otimes\mathds{1},
\end{equation}
makes the total Hamiltonian to be the same and
\begin{eqnarray*}
\tr_B\left[\tilde{V}'(t)\rho_B\right]&=&\tr_B\left[e^{i(H_A'+H_B)t}V'e^{-i(H_A'+H_B)t}\rho_B\right]\\
&=&e^{iH_A't}\tr_B[e^{iH_Bt}Ve^{-iH_Bt}\rho_B]e^{-iH_A't}\\
&-&e^{iH_A't}\tr_B[V\rho_B]e^{-iH_A't}\tr_B[e^{iH_Bt}\rho_Be^{-iH_Bt}]\\
&=&e^{iH_A't}\tr_B[V\rho_B]e^{-iH_A't}-e^{iH_A't}\tr_B[V\rho_B]e^{-iH_A't}=0,
\end{eqnarray*}
as we wished.

The second assumption consists in accepting that initially $\rho(0)=\rho_A(0)\otimes\rho_B(0)$, which is the necessary assumption to get a UDM. Of course some skepticism may arise thinking that if the system $B$ is out of our control, there is not guaranty that this assumption is fulfilled. Nonetheless the control on the system $A$ is enough to assure this condition, for that it will be enough to prepare a pure state in $A$, for instance by making a projective measurement as argued in \cite{Whitney08}, to assure, by theorem \ref{rhoApuro} that at the initial time the global system is in a product state.

These two assumptions make to vanish the second and the third term in equation (\ref{projectorPInt2}), and the integro-differential equation yields
\begin{equation}\label{Naka-Zwan}
\frac{d}{dt}\mathcal{P}\tilde{\rho}(t)=\int_{0}^tdu\mathcal{K}(t,u)\mathcal{P}\tilde{\rho}(u),
\end{equation}
where
\[
\mathcal{K}(t,u)=\mathcal{P}\mathcal{V}(t)\mathcal{G}(t,u)\mathcal{Q}\mathcal{V}(u).
\]
If this kernel is homogeneous $\mathcal{K}(t,u)=\mathcal{K}(t-u)$, which is the case for stationary states $\rho_B$ (as exemplified in next section), this equation can be formally solved by a Laplace transformation \cite{Daviesbook76}, but this usually turns out into an intractable problem in practice.

In order to transform this integro-differential equation in an Markovian master equation one would wish that the integral on the right hand side of (\ref{Naka-Zwan}) turns into a generator of the form of (\ref{diffMarkov}). For that aim, the first intuitive guess is to try that $\mathcal{K}(t,u)$ behaves as a delta function with respect to $\tilde{\rho}(u)$. For that to be true the typical variation time $\tau_A$ of $\tilde{\rho}(u)$ (which is only due to the interaction with $B$ because we have removed the rest of the dynamics by taking integration picture) has to be much larger than some time $\tau_B$, which characterizes the speed at which the kernel $\mathcal{K}(t,u)$ is decreasing when $|t-u|\gg1$. Of course this kind of approximations intrinsically involve some assumptions on the size of $B$ and the ``strength'' of the Hamiltonians, and there is in principle no guaranty that the resulting differential equation has the form of (\ref{diffMarkov}). However, several works \cite{Davies1,Davies2,GoKo76,FrGo76,FrNoVe76} studied carefully two limiting procedures where $\tau_A/\tau_B\rightarrow\infty$ so that this turns out to be the case. These two limits are

\begin{itemize}
\item \emph{The weak coupling limit}. Let us rewrite $V\rightarrow\alpha V$, where $\alpha$ accounts for the strength of the interaction. Since for $\alpha$ small the variation of $\tilde{\rho}(u)$ is going to be slow, in the limit $\alpha\rightarrow0$ one would get $\tau_A\rightarrow\infty$, and so $\tau_A/\tau_B\rightarrow\infty$ for every fixed $\tau_B$. Of course for $\alpha=0$ there is not interaction, and so we will need to rescale the time parameter appropriately before taking the limit. This is discussed in more detail in next section \ref{sectionweakcoupling}.
\item \emph{The singular coupling limit}. This corresponds to somehow the opposite case; here we get $\tau_B\rightarrow0$ by making $\mathcal{K}(t,u)$ to approach a delta function. We will discuss this limit in the forthcoming section \ref{sectionsingularcoupling}.
\end{itemize}

We should stress that one can construct Markovian models \textit{ad hoc} which try to describe a system from a microscopic picture without involving these procedures; see for example \cite{Ziman1,Ziman2}. In this sense, these two cases are sufficient conditions to get a Markovian evolution, but not necessary.

\subsection{Weak coupling limit}\label{sectionweakcoupling}
As we said above, if we assume the weak coupling limit, the change of $\tilde{\rho}(u)$ is negligible within the typical time $\tau_B$ where the kernel $\mathcal{K}(t,u)$ is varying. To perform this limit, consider again Eq. (\ref{projectorPInt2}) under the assumptions of $\mathcal{P}\mathcal{V}(t)\mathcal{P}=0$ and $\rho(0)=\rho_A(0)\otimes\rho_B(0)$:
\begin{equation}\label{weakcoupling1}
\mathcal{P}\tilde{\rho}(t)=\mathcal{P}\rho(0)+\alpha^2\int_{0}^{t}ds\int_{0}^sdu\mathcal{P}\mathcal{V}(s)\mathcal{G}(s,u)\mathcal{Q}\mathcal{V}(u)\mathcal{P}\tilde{\rho}(u),
\end{equation}
where we have redefined $V\rightarrow\alpha V$ (so $\mathcal{V}\rightarrow\alpha \mathcal{V}$). Since in the interaction picture the whole state satisfies a von Neumann equation of the form
\begin{equation}\label{Von-NeumannInt}
\frac{d}{dt}\tilde{\rho}(t)=-i\alpha[\tilde{V}(t),\tilde{\rho}(t)],
\end{equation}
we can connect the whole state at times $s$ and $u$ (note $u\leq s$) by the unitary solution
\begin{eqnarray*}
\tilde{\rho}(s)=\tilde{U}(s,u)\tilde{\rho}(u)\tilde{U}^\dagger(s,u)&=&\tilde{\mathcal{U}}(s,u)\tilde{\rho}(u)\\
&\Rightarrow& \tilde{\rho}(u)=\tilde{U}^\dagger(s,u)\tilde{\rho}(u)\tilde{U}(s,u)\equiv\tilde{\mathcal{U}}(u,s)\tilde{\rho}(s).
\end{eqnarray*}
By introducing this operator in the last term of (\ref{weakcoupling1}) we find
\[
\mathcal{P}\tilde{\rho}(t)=\mathcal{P}\rho(0)+\alpha^2\int_{0}^{t}ds\int_{0}^sdu\mathcal{P}\mathcal{V}(s)\mathcal{G}(s,u)\mathcal{Q}\mathcal{V}(u)\mathcal{P}\tilde{\mathcal{U}}(u,s)\tilde{\rho}(s).
\]
Now the term inside the double integral admits an expansion in powers of $\alpha$ given by the time-ordered series of
\begin{eqnarray}
\mathcal{G}(s,u)&=&\mathcal{T}e^{\alpha\int_s^tdt'\mathcal{Q}\mathcal{V}(t')}\quad \text{and}, \nonumber\\
\tilde{\mathcal{U}}(u,s)&=&\left[\tilde{\mathcal{U}}(s,u)\right]^\dagger=\left[\mathcal{T}e^{\alpha\int_u^sdt'\mathcal{V}(t')}\right]^\dagger=\mathcal{T}^{\star}e^{\alpha\int_u^sdt'\mathcal{V}^\dagger(t')} \label{backwardU},
\end{eqnarray}
where the adjoint is taken over every element of the expansion and $\mathcal{T}^{\star}$ denotes the antichronological time-ordering operator. This is defined similarly as $\mathcal{T}$ in (\ref{timeordering}) but ordering in the opposite sense, this is, the operators evaluated at later times appear first in the product. Thus, at lowest order we get
\begin{eqnarray*}
\mathcal{P}\tilde{\rho}(t)&=&\mathcal{P}\rho(0)+\alpha^2\int_{0}^{t}ds\int_{0}^sdu\mathcal{P}\mathcal{V}(s)\mathcal{Q}\mathcal{V}(u)\mathcal{P}\tilde{\rho}(s)+\mathcal{O}(\alpha^3)\\
&=&\mathcal{P}\rho(0)+\alpha^2\int_{0}^{t}ds\int_{0}^sdu\mathcal{P}\mathcal{V}(s)\mathcal{V}(u)\mathcal{P}\tilde{\rho}(s)+\mathcal{O}(\alpha^3).
\end{eqnarray*}
where we have used again that $\mathcal{P}\mathcal{V}(t)\mathcal{P}=0$. After a change of variable $u\rightarrow s-u$, writing explicitly the projector $\mathcal{P}$ and the operators $\mathcal{V}$, we conclude that
\begin{equation}\label{weakcoupling2}
\tilde{\rho}_A(t)=\rho_A(0)-\alpha^2\int_{0}^{t}ds\int_{0}^sdu\tr_B\left[\tilde{V}(s),[\tilde{V}(s-u),\tilde{\rho}_A(s)\otimes\rho_B]\right]+\mathcal{O}(\alpha^3).
\end{equation}

\subsubsection{Decomposition of $V$} \label{sectionDecompositionV}
In order to continue with the derivation, we write the interaction Hamiltonian as
\begin{equation}\label{Vdesc}
V=\sum_{k}A_{k}\otimes B_{k},
\end{equation}
with $A_{k}^\dagger=A_{k}$ and $B_{k}^\dagger=B_{k}$. Note that since $V^\dagger=V$, one can always chose $A_{k}$ and $B_{k}$ to be self-adjoint. Indeed, assume that $V$ is a sum of products of any general operators $X_{k}$ and $Y_{k}$, it will be written then like
\begin{eqnarray}
V=\sum_{k=1}^nX_{k}^\dagger\otimes Y_{k}+X_{k}\otimes Y_{k}^\dagger. \label{VdescNoH}
\end{eqnarray}
Any operator can be decomposed as
\[
X_{k}=X^{(a)}_{k}+iX^{(b)}_{k},
\]
where
\[
X^{(a)}_{k}=\frac{(X_{k}^\dagger+X_{k})}{2},\quad X^{(b)}_{k}=\frac{i(X_{k}^\dagger-X_{k})}{2}
\]
are self-adjoint. This, and the same kind of decomposition for $Y_{k}=Y^{(a)}_{k}+iY^{(b)}_{k}$ inserted in equation (\ref{VdescNoH}) gives
\begin{eqnarray*}
V&=&\sum_{k=1}^n\left(X^{(a)}_{k}-iX^{(b)}_{k}\right)\otimes \left(Y^{(a)}_{k}+iY^{(b)}_{k}\right)\\
&+&\left(X^{(a)}_{k}+iX^{(b)}_{k}\right)\otimes \left(Y^{(a)}_{k}-iY^{(b)}_{k}\right)\\
&=&2\sum_{k=1}^nX^{(a)}_k\otimes Y^{(a)}_{k}+X^{(b)}_k\otimes Y^{(b)}_{k},
\end{eqnarray*}
so one can make the choice
\begin{eqnarray*}
A_{k}&=&2X^{(a)}_k, \quad B_{k}=2Y^{(a)}_k, \quad \text{for }k=1,\ldots,n\\
A_{k}&=&2X^{(b)}_{k-n}, \quad B_{k}=2Y^{(b)}_{k-n}, \quad \text{for }k=n+1,\dots,2n
\end{eqnarray*}
to arrive at (\ref{Vdesc}).

Now we focus on the operators $A_{k}$, let us assume for the moment that the spectrum of $H_A$ is discrete, and let $|\psi _\varepsilon\rangle$ be an eigenstate associate with the eigenvalue $\varepsilon$. We define
\[
A_{k}(\omega)=\sum_{\varepsilon'-\varepsilon=\omega}|\psi_\varepsilon\rangle\langle\psi_\varepsilon|A_{k}|\psi_{\varepsilon'}\rangle\langle\psi_{\varepsilon'}|,
\]
where the sum is over every $\varepsilon'$ and $\varepsilon$ such that their difference is $\omega$. The operator so defined is an eigenoperator of $\mathcal{L}^{\mathcal{A}}=i[H_A,\cdot]$ with eigenvalue $-i\omega$; indeed,
\[
\mathcal{L}^{\mathcal{A}}A_{k}(\omega)=i[H_A,A_{k}(\omega)]=i\sum_{\varepsilon'-\varepsilon=\omega}(\varepsilon-\varepsilon')|\psi_\varepsilon\rangle\langle\psi_\varepsilon|A_{k}|\psi_{\varepsilon'}\rangle\langle\psi_{\varepsilon'}|=-i\omega A_{k}(\omega).
\]
As a result,
\[
\mathcal{L}^{\mathcal{A}}A_{k}^\dagger(\omega)=i\omega A_{k}^\dagger(\omega),
\]
with
\[
A_{k}^\dagger(\omega)=A_{k}(-\omega),
\]
and this holds for every $k$. In addition, it is straightforward to verify that
\begin{equation}\label{commutinglamd}
[H_A,A_{k}^\dagger(\omega) A_\ell(\omega)]=0.
\end{equation}

On the other hand, to write $A_{k}(\omega)$ and $A_{k}^\dagger(\omega)$ in the interaction picture with respect to $H_A$ is quite easy:
\begin{eqnarray*}
e^{\mathcal{L}^\mathcal{A} t}A_{k}(\omega)&=&e^{iH_A t}A_{k}(\omega)e^{-iH_A t}=e^{-i\omega t}A_{k}(\omega),\\
e^{\mathcal{L}^\mathcal{A} t}A_{k}^\dagger(\omega)&=&e^{iH_A t}A_{k}^\dagger(\omega)e^{-iH_A t}=e^{i\omega t}A_{k}^\dagger(\omega).
\end{eqnarray*}
Moreover, note that
\[
\sum_\omega A_k(\omega)=\sum_{\varepsilon',\varepsilon}|\psi_\varepsilon\rangle\langle\psi_\varepsilon|A_{k}|\psi_{\varepsilon'}\rangle\langle\psi_{\varepsilon'}|=A_k,
\]
and similarly that
\[
\sum_\omega A_k^\dagger(\omega)=A_k.
\]
Thus, by introducing these two decompositions for $A_k$ in (\ref{Vdesc}) and taking the interaction picture, one gets
\begin{equation}\label{Veigendesc}
\tilde{V}(t)=\sum_{\omega,k} e^{-i\omega t}A_k(\omega)\otimes \tilde{B}_k(t)=\sum_{\omega,k} e^{i\omega t}A_k^\dagger(\omega)\otimes \tilde{B}^\dagger_k(t).
\end{equation}

Now, coming back to the equation (\ref{weakcoupling2}) the use of the above first equality with $\tilde{V}(s-u)$ and the second one with $\tilde{V}(s)$ yields
\begin{eqnarray}
\tilde{\rho}_A(t)=\rho_A(0)&+&\alpha^2\int_{0}^{t}ds
\sum_{\omega,\omega'}\sum_{k,\ell}e^{i(\omega'-\omega)s}\Gamma_{k\ell}^s(\omega)[A_\ell(\omega)\tilde{\rho}_A(s),A_k^\dagger(\omega')]\nonumber\\
&+&e^{i(\omega-\omega')s}\Gamma_{\ell k}^{s\ast}(\omega)[A_\ell(\omega'),\tilde{\rho}_A(s)A_k^\dagger(\omega)]+\mathcal{O}(\alpha^3).
\end{eqnarray}
Here we have introduced the quantities
\begin{eqnarray}\label{leftFourier}
\Gamma_{k\ell}^s(\omega)&=&\int_0^s du e^{i\omega u}\mathrm{Tr}\left[\tilde{B}_k(s)\tilde{B}_\ell(s-u)\rho_B\right]\nonumber\\
&=&\int_0^s du e^{i\omega u}\mathrm{Tr}\left[\tilde{B}_k(u) B_\ell\rho_B\right],
\end{eqnarray}
where the last step is justified because $\rho_B$ commutes with $e^{iH_Bt}$. Next, we take a rescaled time $\tau=\alpha^2 t$ and $\sigma=\alpha^2 s$, obtaining
\begin{eqnarray*}
\tilde{\rho}_A(t)&=&\tilde{\rho}_A(\tau/\alpha^2)\\
&=&\rho_A(0)+\int_{0}^{\tau}d\sigma
\sum_{\omega,\omega'}\sum_{k,\ell}e^{i(\omega'-\omega)\sigma/\alpha^2}\Gamma_{k\ell}^{\sigma/\alpha^2}(\omega)[A_\ell(\omega)\tilde{\rho}_A(\sigma/\alpha^2),A_k^\dagger(\omega')]\\
&+&e^{i(\omega-\omega')\sigma/\alpha^2}\Gamma_{\ell k}^{\sigma/\alpha^2\ast}(\omega)[A_\ell(\omega'),\tilde{\rho}_A(\sigma/\alpha^2)A_k^\dagger(\omega)]+\mathcal{O}(\alpha).
\end{eqnarray*}
From this equation we see that in the limit $\alpha\rightarrow0$ (keeping finite $\tau$ and $\sigma$) the time scale of change of $\tilde{\rho}_A(\tau/\alpha^2)$ is just $\tau$. This is merely because the dependency with $\alpha^2$ of $\tilde{\rho}_A(\tau/\alpha^2)$ enters only in the free evolution which we remove by taking the interaction picture. The remaining time scale due to the interaction with the environment is $\tau$, as expressed in the above equation (the integral runs from 0 to $\tau$), so we write
\[
\lim_{\alpha\rightarrow0}\tilde{\rho}_A(t)=\lim_{\alpha\rightarrow0}e^{iH_A\tau/\alpha^2}\rho_A(\tau/\alpha^2)e^{-iH_A\tau/\alpha^2}\equiv\tilde{\rho}_A(\tau),
\]
and thus
\begin{eqnarray}
\tilde{\rho}_A(\tau)&=&\rho_A(0)+\lim_{\alpha\rightarrow0}\int_{0}^{\tau}d\sigma
\sum_{\omega,\omega'}\sum_{k,\ell}e^{i(\omega'-\omega)\sigma/\alpha^2}\Gamma_{k\ell}^{\sigma/\alpha^2}(\omega)[A_\ell(\omega)\tilde{\rho}_A(\sigma/\alpha^2),A_k^\dagger(\omega')]\nonumber\\
&+&e^{i(\omega-\omega')\sigma/\alpha^2}\Gamma_{\ell k}^{\sigma/\alpha^2\ast}(\omega)[A_\ell(\omega'),\tilde{\rho}_A(\sigma/\alpha^2)A_k^\dagger(\omega)]+\mathcal{O}(\alpha)\nonumber\\
&=&\rho_A(0)+\lim_{\alpha\rightarrow0}\int_{0}^{\tau}d\sigma
\sum_{\omega,\omega'}\sum_{k,\ell}e^{i(\omega'-\omega)\sigma/\alpha^2}\Gamma_{k\ell}^{\sigma/\alpha^2}(\omega)[A_\ell(\omega)\tilde{\rho}_A(\sigma),A_k^\dagger(\omega')]\nonumber\\
&+&e^{i(\omega-\omega')\sigma/\alpha^2}\Gamma_{\ell k}^{\sigma/\alpha^2\ast}(\omega)[A_\ell(\omega'),\tilde{\rho}_A(\sigma)A_k^\dagger(\omega)]. \label{weakcoupling2.1}
\end{eqnarray}

Before going further in the derivation, a mathematical result which will be particularly useful from now on is the following.
\begin{proposition}[Riemann-Lebesgue lemma]\label{propAlfredo} Let $f(t)$ be integrable in $[a,b]$ then
\[
\lim_{x\rightarrow\infty}\int_a^b e^{ix t}f(t)dt=0.
\]
\end{proposition}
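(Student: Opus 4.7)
The plan is to establish the result first for the simplest possible integrands and then bootstrap to general integrable $f$ by a density argument. Concretely, I would proceed in three stages: indicator functions of subintervals, then step functions, then arbitrary integrable $f$.

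First, for an indicator $f=\mathds{1}_{[c,d]}$ with $a\le c\le d\le b$, a direct computation gives
\[
\int_a^b e^{ixt}\mathds{1}_{[c,d]}(t)\,dt=\frac{e^{ixd}-e^{ixc}}{ix},
\]
whose modulus is bounded by $2/|x|$ and therefore tends to $0$ as $x\to\infty$. By linearity, the same conclusion holds for any step function $g=\sum_{j=1}^{n}c_j\mathds{1}_{[a_j,b_j]}$, since the integral is a finite linear combination of terms each of which vanishes in the limit.

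Second, I would invoke the density of step functions in the space of integrable functions on $[a,b]$: given $\varepsilon>0$, choose a step function $g$ with $\int_a^b|f(t)-g(t)|\,dt<\varepsilon/2$. Then the triangle inequality gives
\[
\left|\int_a^b e^{ixt}f(t)\,dt\right|\le \int_a^b|f(t)-g(t)|\,dt+\left|\int_a^b e^{ixt}g(t)\,dt\right|<\frac{\varepsilon}{2}+\left|\int_a^b e^{ixt}g(t)\,dt\right|.
\]
By the first stage, the second term can be made smaller than $\varepsilon/2$ for all $x$ sufficiently large, and hence the whole expression is less than $\varepsilon$. Since $\varepsilon$ is arbitrary, the limit is zero.

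The only real obstacle is the density step, which is really an appeal to a standard result from measure theory (approximation of $L^1$ functions by simple/step functions). In the Riemann-integrable setting used implicitly in this paper it is even more elementary: any function integrable on $[a,b]$ can be sandwiched between upper and lower Darboux step functions whose integrals differ by less than $\varepsilon/2$, so the step-function approximation is immediate. Everything else is just the oscillation estimate $|e^{ixd}-e^{ixc}|/|x|\le 2/|x|$ and the triangle inequality, so no deeper machinery is required.
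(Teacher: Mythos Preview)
Your proof is correct and is in fact the standard textbook argument: explicit computation for indicators, linearity for step functions, then an $L^1$-density (or Darboux-sandwich) argument for general integrable $f$. Nothing is missing.

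The paper, however, takes a quite different route. It extends $f$ by zero to all of $\mathds{R}$, recognizes the integral as a Fourier transform $\hat g(x)$, asserts that $g$ is square integrable on the bounded interval, invokes Parseval's theorem to conclude $\hat g\in L^2(\mathds{R})$, and from this infers $\hat g(x)\to 0$. Your approach is purely $L^1$-based and entirely self-contained: the only external input is the elementary approximation of integrable functions by step functions, and the decay comes from the explicit bound $2/|x|$. The paper's approach is shorter in appearance but leans on heavier machinery (the Fourier--Plancherel isometry) and, strictly speaking, contains gaps at the level of generality claimed: integrability on $[a,b]$ does not by itself guarantee square integrability, and membership in $L^2$ alone does not force pointwise decay at infinity without further argument. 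So your argument is both more elementary and more robust; the paper's has the virtue of brevity and of situating the result in the Fourier-analytic framework that the surrounding text is already using.
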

\begin{proof} There are several methods to prove this, here we define $g(t)=f(t)[\theta(t-a)-\theta(t-b)]$ and so
\[
\int_a^b e^{ix t}f(t)dt=\int_{-\infty}^{\infty} e^{ix t}g(t)dt=\hat{g}(x),
\]
where $\hat{g}(x)$ denotes the Fourier transform of $g(t)$. Now it is evident that $g(t)$ is square integrable, then as the Fourier transform preserve norms \cite{reedsimon2} (this is sometimes referred to as Parseval's theorem):
\[
\int_{-\infty}^{\infty} |g(t)|^2dt=\int_{-\infty}^{\infty} |\hat{g}(x)|^2dx,
\]
which implies that $\hat{g}(x)$ is also square integrable and then $\lim_{x\rightarrow\infty}\hat{g}(x)=0$.

\end{proof}

Because of this result (applied on the Banach space $\mathfrak{B}$, to be concise) in the limit $\alpha\rightarrow0$ the oscillatory factors in (\ref{weakcoupling2.1}) with $\omega'\neq\omega$ are going to vanish. This so-called \emph{secular approximation} is done in the same spirit that well known rotating wave approximation in the context of quantum optics \cite{Fain02,BrPe02,GardinerZoller04,Benatti05,ModernCohen08}. We obtain
\begin{eqnarray}
\tilde{\rho}_A(\tau)=\rho_A(0)&+&\int_{0}^{\tau}d\sigma
\sum_{\omega}\sum_{k,\ell}\Gamma_{k\ell}^{\infty}(\omega)[A_\ell(\omega)\tilde{\rho}_A(\sigma),A_k^\dagger(\omega)]\nonumber\\
&+&\Gamma_{\ell k}^{\infty\ast}(\omega)[A_\ell(\omega),\tilde{\rho}_A(\sigma)A_k^\dagger(\omega)]\nonumber\\
&\equiv&\rho_A(0)+\int_{0}^{\tau}d\sigma\tilde{\mathcal{L}}\left[\tilde{\rho}_A(\sigma)\right].\label{weakcoupling3}
\end{eqnarray}
Here the coefficients are given by the one-sided Fourier transform
\[
\Gamma_{k\ell}^{\infty}(\omega)=\int_0^\infty du e^{i\omega u}\mathrm{Tr}\left[\tilde{B}_k(u) B_\ell\rho_B\right].
\]
\subsubsection{Correlation functions}\label{sectionCorrFunc}
In order to make convergent the above integrals, the environmental correlation functions $\mathrm{Tr}\left[\tilde{B}_k(u) B_\ell\rho_B\right]$ have to decrease as $u$ is increasing. However, if we use the corresponding eigenoperator decomposition of $B_k$ for the Hamiltonian $H_B$, we find
\[
\mathrm{Tr}\left[\tilde{B}_k(u) B_\ell\rho_B\right]=\sum_{\omega} e^{-i\omega u} \mathrm{Tr}\left[B_k(\omega) B_\ell\rho_B\right],
\]
which shows that the correlation functions are periodic in $u$ and their integral extended to infinity will diverge. The only possible loophole to break the periodicity is the assumption that the system $B$ has infinite degrees of freedom, in such a way that $H_B$ has a continuous spectrum. Then everything is essentially the same but the sums are substituted by integrals, i.e. the decomposition in eigenoperators will be
\begin{equation}\label{Beigendecomp}
B_k=\int_{-a}^{a}d\omega B_k(\omega),
\end{equation}
where $a$ is the maximum eigenfrequency (it can be infinite). So the correlation functions become
\[
\mathrm{Tr}\left[\tilde{B}_k(u) B_\ell\rho_B\right]=\int_{-a}^{a}d\omega e^{-i\omega u} \mathrm{Tr}\left[B_k(\omega) B_\ell\rho_B\right],
\]
and of course now they are not periodic as
\[
\lim_{u\rightarrow\infty}\mathrm{Tr}\left[\tilde{B}_k(u) B_\ell\rho_B\right]=0,
\]
because of the proposition (\ref{propAlfredo}).

\subsubsection{Davies' theorem}
We have just seen that a necessary condition for the weak limit to exist is that the environmental system $B$ has infinite degrees of freedom. However this is not a sufficient condition because the decreasing behavior of the correlation functions may be not fast enough for the one-sided Fourier transform to exist. A sufficient condition was pointed out by Davies.
\begin{theorem}[Davies] If there exist some $\epsilon>0$ such that
\begin{equation}\label{Daviescondition}
\int_0^\infty dt \left|\mathrm{Tr}\left[\tilde{B}_k(t) B_\ell\rho_B\right]\right|(1-t)^\epsilon<\infty,
\end{equation}
then the weak coupling is strictly well defined (for a bounded interaction $V$) and
\[
\lim_{\alpha\rightarrow0, \tau=\alpha^2t}\|\tilde{\mathcal{E}}_{(\tau,0)}\left[\rho_A(0)\right]-e^{\tilde{\mathcal{L}}\tau}\left[\rho_A(0)\right]\|_1=0
\]
for all $\rho_A(0)$ and uniformly in every finite interval $[0,\tau_0]$, where $\tilde{\mathcal{E}}_{(\tau,0)}$ denotes the exact reduced dynamics in the interaction picture.
\end{theorem}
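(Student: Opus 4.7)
The plan is to compare the exact reduced dynamics $\tilde{\mathcal{E}}_{(\tau,0)}$ with the Markovian candidate $e^{\tilde{\mathcal{L}}\tau}$ by controlling every order of the Dyson expansion in $\alpha$ on the rescaled time scale $\tau=\alpha^2 t$. First I would iterate the Nakajima--Zwanzig identity (\ref{projectorPInt2}), or equivalently solve (\ref{Von-NeumannInt}) by a Dyson series for $\tilde\rho(t)$ and then trace out $B$, to obtain a representation
\[
\tilde{\rho}_A(\tau/\alpha^2)=\sum_{n=0}^{\infty}\alpha^{2n}\,\mathcal{R}_{2n}(\tau,\alpha)[\rho_A(0)],
\]
where only even orders survive thanks to $\mathcal{P}\mathcal{V}(t)\mathcal{P}=0$, and each $\mathcal{R}_{2n}(\tau,\alpha)$ is a $2n$-fold time integral of an $n$-point bath correlation function multiplied by oscillatory phases $e^{i\Omega\,\sigma_j/\alpha^2}$ with $\Omega$ ranging over the Bohr differences of $H_A$ provided by the decomposition (\ref{Veigendesc}).

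The second step is to show that the lowest non-trivial term ($n=1$) reproduces $\int_{0}^{\tau}\tilde{\mathcal{L}}[\tilde\rho_A(\sigma)]\,d\sigma$ in the limit $\alpha\to 0$; this is precisely the computation already carried out in (\ref{weakcoupling2.1})--(\ref{weakcoupling3}) via Riemann--Lebesgue (Proposition \ref{propAlfredo}) applied to the non-secular oscillatory factors, together with the extension of the upper limit in (\ref{leftFourier}) to $+\infty$, which is legitimate because (\ref{Daviescondition}) ensures absolute convergence of the one-sided Fourier transforms $\Gamma_{k\ell}^{\infty}(\omega)$. Iterating this argument formally identifies the candidate limit $e^{\tilde{\mathcal{L}}\tau}[\rho_A(0)]$, where $\tilde{\mathcal{L}}$ is the generator built in (\ref{weakcoupling3}).

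The third and central step is to bound the remainder $\sum_{n\ge 2}\alpha^{2n}\mathcal{R}_{2n}(\tau,\alpha)$ in trace norm, uniformly for $\tau\in[0,\tau_0]$. Using the decomposition (\ref{Vdesc}) and expanding the $n$-point bath correlators into their connected (cluster) components, the hypothesis (\ref{Daviescondition}) gives an integrable weight $(1+t)^{\epsilon}$ on each cluster, which compensates for one diverging time variable per cluster after the change of variables $\sigma_j=\alpha^2 s_j$. This yields an estimate of the form $\|\alpha^{2n}\mathcal{R}_{2n}(\tau,\alpha)\|_1\le C_n\,\alpha^{2\epsilon}\,\tau_0^{n-1}$ for constants $C_n$ depending only on $\|V\|$ and the integrals in (\ref{Daviescondition}), and summation of the geometric majorant gives $\|\tilde{\mathcal{E}}_{(\tau,0)}[\rho_A(0)]-e^{\tilde{\mathcal{L}}\tau}[\rho_A(0)]\|_1=\mathcal{O}(\alpha^{2\epsilon})$ uniformly on $[0,\tau_0]$, which is the claim.

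The hard part is precisely this uniform bound, not the second-order computation. Since the physical time $t=\tau/\alpha^2$ diverges as $\alpha\to 0$, a naive application of the Dyson bound $(\alpha\|V\| t)^{n}/n!$ blows up; only the cancellations produced jointly by the oscillating phases $e^{i\Omega\sigma/\alpha^{2}}$ and by the clustering of bath correlators (which here is guaranteed by the decay hypothesis (\ref{Daviescondition}), and in the infinite-dimensional bath setting of Section \ref{sectionCorrFunc} is compatible with a KMS state $\rho_{\mathrm{th}}$) tame the remainder. An equivalent and arguably cleaner route, which is the one actually followed by Davies, is to pass to Laplace space: one shows, again using (\ref{Daviescondition}), that the resolvent of the Nakajima--Zwanzig kernel converges in operator norm to $(z-\tilde{\mathcal{L}})^{-1}$ on the half-plane $\Re z>0$, and then invokes the Trotter--Kato approximation theorem for semigroups to promote resolvent convergence to uniform convergence of the semigroups on compact time intervals, thereby delivering exactly the statement of the theorem.
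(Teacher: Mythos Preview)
Your proposal goes far beyond what the paper actually does. The paper's ``proof'' explicitly declares that establishing the theorem under condition~(\ref{Daviescondition}) is \emph{beyond its scope} and refers the reader to Davies' original articles. All the paper offers in the proof environment is the observation that, once the heuristic passage to~(\ref{weakcoupling3}) is accepted, one recognises the integrated form of $\tfrac{d}{d\tau}\tilde\rho_A=\tilde{\mathcal L}\tilde\rho_A$ and hence $\tilde\rho_A(\tau)=e^{\tilde{\mathcal L}\tau}\tilde\rho_A(0)$. In other words, the paper supplies only your ``second step'' (the formal second-order identification via Riemann--Lebesgue and the extension of the time integral to infinity), and none of the analysis controlling the remainder.

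Your outline, by contrast, is a genuine sketch of the rigorous argument: you correctly identify that the naive Dyson bound diverges because $t=\tau/\alpha^2\to\infty$, and that the real content of the theorem is the uniform control of the higher-order terms using the decay hypothesis~(\ref{Daviescondition}) together with the oscillatory cancellations. Two small caveats. First, the attribution of the Laplace/Trotter--Kato route to Davies is not quite accurate: Davies' original proofs proceed by direct norm estimates on the iterated integrals (combined with an averaging/spectral-projection argument that implements the secular approximation rigorously), rather than via resolvent convergence; the Trotter--Kato formulation is a later repackaging. Second, your remainder bound $\|\alpha^{2n}\mathcal R_{2n}\|_1\le C_n\alpha^{2\epsilon}\tau_0^{\,n-1}$ with a summable $C_n$ is the right shape, but obtaining it honestly requires more than clustering of \emph{bath} correlators: one must also show that the combinatorics of linked time-orderings does not spoil summability, and that the non-secular phases produce the extra $\alpha$-power uniformly in $n$. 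These are exactly the technical points that occupy Davies' papers and that the present text deliberately omits.
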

\begin{proof} To prove the condition (\ref{Daviescondition}) is beyond the scope of this work, it was developed in \cite{Davies1,Davies2}. Under (\ref{Daviescondition}) the convergency to the above result is clear from equation (\ref{weakcoupling3}) which is the integrated version of
\[
\frac{d\tilde{\rho}_A(\tau)}{d\tau}=\tilde{\mathcal{L}}\tilde{\rho}_A(\tau)\Rightarrow \tilde{\rho}_A(\tau)= e^{\tilde{\mathcal{L}}\tau}\left[\tilde{\rho}_A(0)\right].
\]
\end{proof}

On one hand, note that the method works because there are no eigenfrequencies, $\omega$, arbitrary close to one another, as a result of the discreteness of the spectrum of $H_A$, otherwise the secular approximation is jeopardized. For the infinite dimensional case where this is not satisfied, recently D. Taj and F. Rossi \cite{Taj} have proposed another approximation method which substitutes the secular approximation and also leads to a completely positive semigroup.

On the other hand, an important consequence of the secular approximation is that it guaranties that the generator $\tilde{\mathcal{L}}$ has the correct form of a Markovian homogenous master equation (theorem \ref{TeoremaMARKOV}). To see this, let us decompose the matrices $\Gamma_{k\ell}^{\infty}(\omega)$ in sum of Hermitian and anti-Hermitian parts
\begin{equation}\label{Gammadescomposition}
\Gamma_{k\ell}^{\infty}(\omega)=\frac{1}{2}\gamma_{k\ell}(\omega)+iS_{k\ell}(\omega),
\end{equation}
where the coefficients
\[
S_{k\ell}(\omega)=\frac{1}{2i}[\Gamma_{k\ell}^{\infty}(\omega)-\Gamma_{\ell k}^{\infty\ast}(\omega)],
\]
and
\begin{eqnarray}
\gamma_{k \ell}(\omega)&=&\Gamma_{k \ell}^{\infty}(\omega)+\Gamma_{\ell k}^{\infty\ast}(\omega)\nonumber \\
&=&\int_{0}^{\infty}due^{i\omega u}\mathrm{Tr}\left[\tilde{B}_k(u) B_\ell\rho_\mathrm{th}\right]+\int_{0}^{\infty}due^{-i\omega u}\mathrm{Tr}\left[\tilde{B}_k(-u) B_\ell\rho_\mathrm{th}\right]\nonumber \\
&=&\int_{-\infty}^{\infty}due^{i\omega u}\mathrm{Tr}\left[\tilde{B}_k(u) B_\ell\rho_\mathrm{th}\right]\label{gammapequenha},
\end{eqnarray}
form Hermitian matrices. In terms of these quantities, the generator $\tilde{\mathcal{L}}$ can be written as
\begin{eqnarray*}
\tilde{\mathcal{L}}\left[\tilde{\rho}_A(\tau)\right]=&-&i[H_{\mathrm{LS}},\tilde{\rho}_A(\tau)]\\
&+&\sum_{\omega}\sum_{k,\ell}\gamma_{k\ell}(\omega)\left[ A_\ell(\omega)\tilde{\rho}_A(\tau)A_k^\dagger(\omega)-\frac{1}{2}\{A_k^\dagger(\omega)A_\ell(\omega),\tilde{\rho}_A(\tau)\}\right],
\end{eqnarray*}
where the Hamiltonian part is given by
\begin{equation}\label{Hlambshift}
H_{\mathrm{LS}}=\sum_\omega\sum_{k,\ell}S_{k\ell}(\omega)A_k^\dagger(\omega)A_\ell(\omega).
\end{equation}
In order to prove that $\tilde{\mathcal{L}}$ has the form of a generator of a Markovian homogenous evolution, it is required just to show that the matrix $\gamma_{k\ell}(\omega)$ is positive semidefinite for all $\omega$. This is a consequence of the Bochner's theorem \cite{reedsimon1,reedsimon2}, which asserts that the Fourier transform of a function of ``positive type'' is a positive quantity. Note that a function $f(t)$ is of positive type if for any $t_m$ and $t_n$ the matrix constructed as $f_{mn}=f(t_m-t_n)$ is positive semidefinite.

For any vector $\mathbf{v}$ we have
\begin{eqnarray*}
(\mathbf{v},\boldsymbol{\gamma}\mathbf{v})&=&\sum_{k,\ell}v_k^\ast\gamma_{k\ell}(\omega)v_\ell=\int_{-\infty}^{\infty}due^{i\omega u}\sum_{k,\ell}v_k^\ast\mathrm{Tr}\left[\tilde{B}_k(u) B_\ell\rho_\mathrm{th}\right]v_\ell\\
&=&\int_{-\infty}^{\infty}due^{i\omega u}\sum_{k,\ell}v_k^\ast\mathrm{Tr}\left[e^{iH_Bu}B_k e^{-iH_Bu} B_\ell\rho_\mathrm{th}\right]v_\ell\\
&=&\int_{-\infty}^{\infty}due^{i\omega u}\mathrm{Tr}\left[e^{iH_Bu}C^\dagger e^{-iH_Bu} C\rho_\mathrm{th}\right],
\end{eqnarray*}
where $C=\sum_k B_k v_k$.

\begin{proposition} The function $f(t)=\mathrm{Tr}\left[e^{iH_Bt}C^\dagger e^{-iH_Bt} C\rho_\mathrm{th}\right]$ is of ``positive type'' and then the matrix $\gamma_{k\ell}$ is positive semidefinite $(\mathbf{v},\boldsymbol{\gamma}\mathbf{v})\geq0$.
\end{proposition}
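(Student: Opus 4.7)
My plan is to verify the positive-type property directly from the definition and then invoke Bochner's theorem to deduce positivity of $\boldsymbol{\gamma}$. Concretely, I need to show that for any choice of times $t_1,\dots,t_n$ and complex coefficients $c_1,\dots,c_n$, one has $\sum_{m,n} c_m^{\ast} c_n f(t_m-t_n)\geq 0$.

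The key algebraic step is to rewrite $f(t_m-t_n)$ as the ``two-point'' correlation $\text{Tr}[\tilde C^\dagger(t_m)\tilde C(t_n)\rho_{\mathrm{th}}]$, where $\tilde C(t)=e^{iH_Bt}Ce^{-iH_Bt}$. To do this I would start from
\[
f(t_m-t_n)=\text{Tr}\!\left[e^{iH_B(t_m-t_n)}C^\dagger e^{-iH_B(t_m-t_n)}C\rho_{\mathrm{th}}\right],
\]
insert the identity $e^{-iH_Bt_n}e^{iH_Bt_n}=\mathds{1}$ on the left, cycle $e^{-iH_Bt_n}$ through the trace, and finally commute it across $\rho_{\mathrm{th}}$ (which is permitted because $[\rho_{\mathrm{th}},H_B]=0$). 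This yields precisely $\text{Tr}[\tilde C^\dagger(t_m)\tilde C(t_n)\rho_{\mathrm{th}}]$.

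Once this is in hand, define $D=\sum_n c_n\tilde C(t_n)$. Linearity of the trace gives
\[
\sum_{m,n} c_m^{\ast} c_n f(t_m-t_n)=\text{Tr}\!\left[D^\dagger D\,\rho_{\mathrm{th}}\right].
\]
Since $D^\dagger D\geq 0$ and $\rho_{\mathrm{th}}\geq 0$, the trace of their product is non-negative (using e.g. $\text{Tr}(D^\dagger D\,\rho_{\mathrm{th}})=\text{Tr}(\rho_{\mathrm{th}}^{1/2}D^\dagger D\,\rho_{\mathrm{th}}^{1/2})\geq 0$). Hence $f$ is of positive type.

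With $f$ of positive type, Bochner's theorem identifies its Fourier transform with a non-negative measure, so in particular $\int_{-\infty}^{\infty}e^{i\omega u}f(u)\,du\geq 0$ for all $\omega$. Recalling the earlier computation that $(\mathbf{v},\boldsymbol{\gamma}(\omega)\mathbf{v})=\int_{-\infty}^{\infty}e^{i\omega u}f(u)\,du$ with $C=\sum_k v_k B_k$, this proves $(\mathbf{v},\boldsymbol{\gamma}(\omega)\mathbf{v})\geq 0$ for every $\mathbf{v}$. I expect no genuine obstacle here: the only subtle point is the cyclic/commutation manipulation that converts the one-argument correlator $f(t_m-t_n)$ into a manifestly Gram-type expression, and that step relies only on $[\rho_{\mathrm{th}},H_B]=0$.
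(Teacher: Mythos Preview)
Your argument is correct, and it differs in style from the paper's own proof. Where you use the operator identity $f(t_m-t_n)=\tr[\tilde C^\dagger(t_m)\tilde C(t_n)\rho_{\mathrm{th}}]$ (which follows from $[\rho_{\mathrm{th}},H_B]=0$) to recognise $\sum_{m,n}c_m^\ast c_n f(t_m-t_n)$ as a Gram form $\tr[D^\dagger D\,\rho_{\mathrm{th}}]$, the paper instead inserts the spectral resolution of $H_B$ and of $\rho_{\mathrm{th}}$ to write $f(t)=\int d\varepsilon\,d\varepsilon'\,e^{i(\varepsilon-\varepsilon')t}\,p_\varepsilon\,|\langle\varphi_{\varepsilon'}|C|\varphi_\varepsilon\rangle|^2$ and then obtains positivity of $\sum_{m,n}w_m^\ast f(t_m-t_n)w_n$ as an integral of $|\sum_m w_m^\ast e^{i(\varepsilon-\varepsilon')t_m}|^2 p_\varepsilon |\langle\varphi_{\varepsilon'}|C|\varphi_\varepsilon\rangle|^2\geq 0$. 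Your route is shorter and basis-free, relying only on stationarity and positivity of $\rho_{\mathrm{th}}$; the paper's route is more explicit and displays the underlying spectral content. Both then appeal to Bochner's theorem for the conclusion $(\mathbf{v},\boldsymbol{\gamma}\mathbf{v})\geq0$.
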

\begin{proof} The positivity of $(\mathbf{v},\boldsymbol{\gamma}\mathbf{v})\geq0$ follows from the Bochner's theorem if $f(t)$ is of ``positive type'' because it is its Fourier transform. For the proof of the first statement, let us take the trace in the eigenbasis of $H_B$ (remember that the spectrum is taken to be continuous):
\[
f(t)=\int d\varepsilon \langle \varphi_\varepsilon| e^{iH_Bt}C^\dagger e^{-iH_Bt} C\rho_\mathrm{th}|\varphi_\varepsilon\rangle
=\int d\varepsilon e^{i\varepsilon t} p_{\varepsilon} \langle \varphi_\varepsilon| C^\dagger e^{-iH_Bt} C|\varphi_\varepsilon\rangle,
\]
here we have used that
\[
\rho_\mathrm{th}|\varphi_\varepsilon\rangle=\frac{e^{-\beta\varepsilon}}{\tr\left[e^{(-\beta H_B)}\right]}|\varphi_\varepsilon\rangle\equiv p_\varepsilon|\varphi_\varepsilon\rangle,
\]
with $p_\varepsilon\geq0$ since $\rho_\mathrm{th}$ is a positive operator. By introducing the identity $\mathds{1}=\int d\varepsilon' |\varphi_{\varepsilon'}\rangle\langle \varphi_{\varepsilon'}|$,
\begin{eqnarray*}
f(t)&=&\int d\varepsilon d\varepsilon' e^{i\varepsilon t} p_{\varepsilon} \langle \varphi_\varepsilon| C^\dagger e^{-iH_Bt}|\varphi_{\varepsilon'}\rangle\langle \varphi_{\varepsilon'}| C|\varphi_\varepsilon\rangle\\
&=&\int d\varepsilon d\varepsilon' e^{i(\varepsilon-\varepsilon') t} p_{\varepsilon} \left|\langle\varphi_{\varepsilon'}| C|\varphi_\varepsilon\rangle\right|^2.
\end{eqnarray*}
Finally we take another arbitrary vector $\mathbf{w}$ and the inner product
\begin{eqnarray*}
(\mathbf{w},\mathbf{f}\mathbf{w})&=&\sum_{m,n}w_m^\ast f(t_m-t_n)w_n\\
&=&\int d\varepsilon d\varepsilon' \sum_{m,n}w_m^\ast e^{i(\varepsilon-\varepsilon') (t_m-t_n)}w_n p_{\varepsilon} \left|\langle\varphi_{\varepsilon'}| C|\varphi_\varepsilon\rangle\right|^2\\
&=&\int d\varepsilon d\varepsilon' \left|\sum_{m}w_m^\ast e^{i(\varepsilon-\varepsilon') t_m}\right|^2 p_{\varepsilon} \left|\langle\varphi_{\varepsilon'}| C|\varphi_\varepsilon\rangle\right|^2\geq0,
\end{eqnarray*}
because the integrand is positive on the whole domain of integration.

\end{proof}

As a conclusion of these results, if for a small, but finite $\alpha$, we substitute the exact reduced dynamics $\tilde{\mathcal{E}}_{(t,0)}$ by the Markovian one $e^{\tilde{\mathcal{L}}\tau}=e^{\alpha^2\tilde{\mathcal{L}}t}$, the error which we make is bounded and tends to zero as $\alpha$ decrease provided that the assumptions of the Davies' theorem are fulfilled.

Of course one can write $\alpha^2\tilde{\mathcal{L}}$ and immediately go back to Schr\"odinger picture by using the unitary operator $e^{iH_At}$ to obtain
\begin{eqnarray}\label{weakcoupling4}
\frac{d\rho_A(t)}{dt}&=&-i[H_A+\alpha^2H_{\mathrm{LS}},\rho_A(t)] \\
&+&\alpha^2\sum_{\omega}\sum_{k,\ell}\gamma_{k\ell}(\omega)\left[ A_\ell(\omega)\rho_A(t)A_k^\dagger(\omega)-\frac{1}{2}\{A_k^\dagger(\omega)A_\ell(\omega),\rho_A(t)\}\right]\nonumber.
\end{eqnarray}
Note that because of (\ref{commutinglamd}), the Hamiltonian $H_{\mathrm{LS}}$ commutes with $H_A$, so it just produces a shift in the energy levels of the system $A$. Thus this is sometimes called Lamb shift Hamiltonian. However, note that $H_{\mathrm{LS}}$ is not only influenced by an environment in the vacuum state (zero temperature) but also by thermal fluctuations.

\subsubsection{Explicit expressions for $\gamma_{k\ell}$ and $S_{k\ell}$} \label{sectionexplictgamma}
We can give explicit expressions for $\gamma_{k\ell}$ and $S_{k\ell}$ by using the decomposition of the operators $B_k$ in eigenoperators. For the decay rates
\begin{eqnarray}\label{explicitgamma}
\gamma_{k\ell}(\omega)&=&\int_{-\infty}^{\infty}due^{i\omega u}\mathrm{Tr}\left[\tilde{B}_k(u) B_\ell\rho_\mathrm{th}\right]\nonumber \\
&=&\int_{-a}^a d\omega'\int_{-\infty}^{\infty}due^{i(\omega-\omega') u}\mathrm{Tr}\left[B_k(\omega') B_\ell\rho_\mathrm{th}\right]\nonumber \\
&=&2\pi\int_{-a}^a d\omega'\delta(\omega-\omega')\mathrm{Tr}\left[B_k(\omega') B_\ell\rho_\mathrm{th}\right]\nonumber \\
&=&2\pi\mathrm{Tr}\left[B_k(\omega) B_\ell\rho_\mathrm{th}\right],
\end{eqnarray}
provided that $\omega\in(-a,a)$. So $\gamma_{k\ell}(\omega)$ is just proportional to the correlation function of the eigenoperator $B_k(\omega)$ with the same frequency $\omega$, unless this frequency is outside of the spectrum of $[H_B,\cdot]$, then $\gamma_{k\ell}(\omega)$ will be zero.

To get an explicit expression for the shifts $S_{k\ell}$ is a little bit more complicated. From (\ref{Gammadescomposition}) we have
\[
S_{k\ell}(\omega)=-i\Gamma_{k\ell}^{\infty}(\omega)+\frac{i}{2}\gamma_{k \ell}(\omega).
\]
The substitution of the decomposition in eigenoperators in $\Gamma_{k,\ell}^{\infty}(\omega)$ and the previous result (\ref{explicitgamma}) for $\gamma_{k,\ell}(\omega)$ yields
\begin{equation}
S_{k\ell}(\omega)=-i\int_{-a}^a d\omega' \int_0^\infty du e^{i(\omega-\omega') u}\mathrm{Tr}\left[B_k(\omega') B_\ell\rho_B\right]+\pi i\mathrm{Tr}\left[B_k(\omega) B_\ell\rho_\mathrm{th}\right]. \label{PrincipalValue1}
\end{equation}
Note that the integral $\int_0^\infty du e^{i(\omega-\omega') u}$ in not well-defined. Similarly it happens with $\int_{-\infty}^\infty du e^{i(\omega-\omega') u}$, but we know this integral is proportional to the delta function. The mathematical theory which provides a meaningful interpretation to these integrals is the theory of distributions or generalized functions. Here we will not dwell on it, references \cite{reedsimon1,Boccara} provide introductions and further results can be found in \cite{Schwartz,Vladimirov}. For us it is enough to show that these integrals can have a meaning if we define them as a limit of ordinary functions. For instance if we define
\[
\int_{-\infty}^\infty du e^{i(\omega-\omega') u}=\lim_{\epsilon\rightarrow 0^+} \int_{0}^\infty du e^{i(\omega-\omega') u-\epsilon u}+\lim_{\epsilon\rightarrow 0^-} \int_{-\infty}^0 du e^{i(\omega-\omega') u+\epsilon u},
\]
by making the elementary integrals,
\begin{eqnarray*}
\int_{-\infty}^\infty du e^{i(\omega-\omega') u}&=&\lim_{\epsilon\rightarrow 0^+}\frac{i}{(\omega-\omega')+\epsilon i}+\lim_{\epsilon\rightarrow 0^-} \frac{i}{\epsilon i-(\omega-\omega')}\\
&=&\lim_{\epsilon\rightarrow 0}\frac{i}{(\omega-\omega')+\epsilon i}+\frac{i}{\epsilon i-(\omega-\omega')}\\
&=&\lim_{\epsilon\rightarrow 0}\frac{2 \epsilon}{\epsilon^2 + (\omega-\omega')^2}\equiv2\pi \delta(\omega-\omega').
\end{eqnarray*}
We identify the well-known family of functions $f_\epsilon(\omega-\omega')=\frac{\epsilon}{\pi[\epsilon^2 + (\omega-\omega')^2]}$ which tend to the delta function as $\epsilon\rightarrow0$ \cite{Arfken} because for any (regular enough) integrable function $g(\omega)$
\[
\lim_{\epsilon\rightarrow 0}\int_a^b d\omega' f_\epsilon(\omega-\omega')g(\omega')=g(\omega), \quad (a<\omega<b).
\]
So, if we do the same procedure but just with the one-sided Fourier transform $\int_0^\infty du e^{i(\omega-\omega') u}$, we obtain in equation (\ref{PrincipalValue1})
\[
S_{k\ell}(\omega)=\lim_{\epsilon\rightarrow 0^+}\int_{-a}^a d\omega' \frac{1}{(\omega-\omega')+\epsilon i} \mathrm{Tr}\left[B_k(\omega') B_\ell\rho_B\right]+\pi i\mathrm{Tr}\left[B_k(\omega) B_\ell\rho_\mathrm{th}\right].
\]
Multiplying the integral by the complex conjugate of the denominator we obtain
\begin{eqnarray*}
\lim_{\epsilon\rightarrow 0^+}\int_{-a}^a d\omega' \frac{\mathrm{Tr}\left[B_k(\omega') B_\ell\rho_B\right]}{(\omega-\omega')+\epsilon i} &=&\lim_{\epsilon\rightarrow 0^+}\int_{-a}^a d\omega' \tfrac{(\omega-\omega')}{(\omega-\omega')^2+\epsilon^2} \mathrm{Tr}\left[B_k(\omega') B_\ell\rho_B\right]\\
&-&\lim_{\epsilon\rightarrow 0^+}i\int_{-a}^a d\omega' \tfrac{\epsilon}{(\omega-\omega')^2+\epsilon^2} \mathrm{Tr}\left[B_k(\omega') B_\ell\rho_B\right]\\
&=&\lim_{\epsilon\rightarrow 0^+}\int_{-a}^a d\omega' \tfrac{(\omega-\omega')}{(\omega-\omega')^2+\epsilon^2} \mathrm{Tr}\left[B_k(\omega') B_\ell\rho_B\right]\\
&-&i\pi\mathrm{Tr}\left[B_k(\omega) B_\ell\rho_B\right].
\end{eqnarray*}
Therefore the last term is canceled in the shifts, which read
\[
S_{k\ell}(\omega)=\lim_{\epsilon\rightarrow 0^+}\int_{-a}^a d\omega' \tfrac{(\omega-\omega')}{(\omega-\omega')^2+\epsilon^2} \mathrm{Tr}\left[B_k(\omega') B_\ell\rho_B\right].
\]
Now we decompose the integration interval $(-a,a)$ in three subintervals $(-a,\omega-\delta)$, $(\omega-\delta,\omega+\delta)$ and $(\omega+\delta,a)$ and take the limit $\delta\rightarrow0$
\begin{eqnarray*}
S_{k\ell}(\omega)&=&\lim_{\epsilon,\delta\rightarrow 0^+}\int_{-a}^{\omega-\delta} d\omega' \tfrac{(\omega-\omega')}{(\omega-\omega')^2+\epsilon^2} \mathrm{Tr}\left[B_k(\omega') B_\ell\rho_B\right]\\
&+&\lim_{\epsilon,\delta\rightarrow 0^+}\int_{\omega+\delta}^{a} d\omega' \tfrac{(\omega-\omega')}{(\omega-\omega')^2+\epsilon^2} \mathrm{Tr}\left[B_k(\omega') B_\ell\rho_B\right]\\
&+&\lim_{\epsilon,\delta\rightarrow 0^+}\int_{\omega-\delta}^{\omega+\delta} d\omega' \tfrac{(\omega-\omega')}{(\omega-\omega')^2+\epsilon^2} \mathrm{Tr}\left[B_k(\omega') B_\ell\rho_B\right].
\end{eqnarray*}
The last integral just goes to zero because for any $\epsilon$ as small as we want the integrand is an integrable function without singularities. Whereas, since $\lim_{\epsilon\rightarrow 0^+}\tfrac{(\omega-\omega')}{(\omega-\omega')^2+\epsilon^2}=\tfrac{1}{(\omega-\omega')}$, the limiting process in $\delta$ is just the definition of the Cauchy principal value of the integral \cite{Arfken}:
\begin{eqnarray}\label{explicitshift}
S_{k\ell}(\omega)&=&\lim_{\delta\rightarrow 0^+}\left\{\int_{-a}^{\omega-\delta} d\omega' \frac{\mathrm{Tr}\left[B_k(\omega') B_\ell\rho_B\right]}{(\omega-\omega')}+\int_{\omega+\delta}^{a} d\omega' \frac{\mathrm{Tr}\left[B_k(\omega') B_\ell\rho_B\right]}{(\omega-\omega')}\right\}\nonumber\\
&=&\mathrm{P.V.}\int_{-a}^{a} d\omega' \frac{\mathrm{Tr}\left[B_k(\omega') B_\ell\rho_B\right]}{(\omega-\omega')}.
\end{eqnarray}
In theory of distributions, this result is sometimes referred as Sochozki's formulae \cite{Vladimirov}.

\subsubsection{Typical examples}\label{sectionExamples}
It is not worth to dwell on examples of the application of the above results, as in the books are plenty of them. Just for sake of comparison, we sketch some of the most common cases.
\begin{itemize}
\item \underline{Two-level system damped by a bath of harmonic oscillators}. The total Hamiltonian is
\begin{eqnarray*}
H&=&H_{\mathrm{sys}}+H_{\mathrm{bath}}+V\\
&\equiv&\frac{\omega_0}{2}\sigma_z+\int_0^{\omega_\textrm{max}}d\omega a^\dagger_\omega a_\omega+\int_0^{\omega_\textrm{max}}d\omega h(\omega)\left(\sigma_+a_\omega+\sigma_-a_\omega^\dagger\right),
\end{eqnarray*}
where $\sigma_z$ is the corresponding Pauli matrix, $\sigma_\pm=(\sigma_x\pm i\sigma_y)/2$, and $a_\omega$ accounts for the continuous bosonic operators $[a_\omega,a^\dagger_{\omega'}]=\delta(\omega-\omega')$; where $\omega$ is the frequency of each mode. The upper limit $\omega_\textrm{max}$ is the maximum frequency present in the bath of harmonic oscillators, it could be infinity provided that the coupling function $h(\omega)$ decreases fast enough. We have taken the continuous limit from the beginning, however it is common to do it in the course of the computation.

The interaction Hamiltonian may be written in the form $V=A_1\otimes B_1+A_2\otimes B_2$, where it is easy to check that the system operators and their eigendecompositions are given by
\begin{eqnarray*}
A_1&=&\sigma_x=\sigma_++\sigma_-, \quad A_1(\omega_0)=\sigma_-,\ A_1(-\omega_0)=\sigma_+,\\
A_2&=&\sigma_y=i(\sigma_+-\sigma_-), \quad A_2(\omega_0)=-i\sigma_-,\ A_2(-\omega_0)=i\sigma_+.
\end{eqnarray*}
Similarly for the bath operators,
\[
B_1=\int_0^{\omega_\textrm{max}}d\omega h(\omega)\tfrac{(a_\omega+a_\omega^\dagger)}{2}=\int_{-\omega_\textrm{max}}^{\omega_\textrm{max}}d\omega B_1(\omega),
\]
with $B_1(\omega)=\tfrac{h(\omega)a_\omega}{2}$,  $B_1(-\omega)=\tfrac{h(\omega)a_\omega^\dagger}{2}$, and
\[
B_2=\int_0^{\omega_\textrm{max}}d\omega h(\omega)\tfrac{i(a_\omega^\dagger-a_\omega)}{2}=\int_{-\omega_\textrm{max}}^{\omega_\textrm{max}}d\omega B_2(\omega),
\]
with $B_2(\omega)=\tfrac{-ih(\omega)a_\omega}{2}$, $B_2(-\omega)=\tfrac{ih(\omega)a_\omega^\dagger}{2}$.

The decay rates are given by $\gamma_{k,\ell}(\omega_0)=2\pi\tr[B_k(\omega_0)B_{\ell}\rho_B]$, particularly
\begin{eqnarray*}
\gamma_{11}(\omega_0)&=&2\pi\tr[B_1(\omega_0)B_{1}\rho_B]\\
&=&\frac{\pi}{2}h(\omega_0)\int_0^{\omega_\textrm{max}}d\omega h(\omega)\tr[a_{\omega_0}(a_\omega+a_\omega^\dagger)\rho_B]\\
&=&\frac{\pi}{2}h^2(\omega_0)\tr[a_{\omega_0}a_{\omega_0}^\dagger\rho_B]=\frac{\pi}{2}J(\omega_0)[\bar{n}(\omega_0)+1].
\end{eqnarray*}
Here $J(\omega_0)=h^2(\omega_0)$ is the so-called spectral density of the bath (which accounts for the strength of the coupling per frequency) at frequency $\omega_0$ and $\bar{n}(\omega_0)$ is the mean number of bosons in the thermal state $\rho_B$ of the bath with frequency $\omega_0$; i.e. the expected number of particles of the Bose-Einstein statistics
\[
\bar{n}(\omega_0)=\left[e^{(\omega_0/T)}-1\right]^{-1},
\]
where $T$ is the temperature of the bath. For the rest of the elements one obtains essentially the same
\[
\boldsymbol{\gamma}(\omega_0)=\frac{\pi}{2}J(\omega_0)[\bar{n}(\omega_0)+1]\left(
\begin{array}{cc}
1 & i\\
-i  & 1
\end{array}\right).
\]
Similarly for $\gamma_{k,\ell}(-\omega_0)$,
\[
\boldsymbol{\gamma}(-\omega_0)=\frac{\pi}{2}J(\omega_0)\bar{n}(\omega_0)\left(
\begin{array}{cc}
1 & i\\
-i  & 1
\end{array}\right).
\]
Let us make the analog for the shifts $S_{k\ell}$; the first element is given by
\begin{eqnarray*}
S_{11}(\omega_0)&=&\mathrm{P.V.}\int_{-\omega_\textrm{max}}^{\omega_\textrm{max}} d\omega' \frac{\mathrm{Tr}\left[B_1(\omega') B_1\rho_B\right]}{(\omega_0-\omega')}\\
&=&\mathrm{P.V.}\int_{0}^{\omega_\textrm{max}} d\omega' \frac{\mathrm{Tr}\left[B_k(\omega') B_\ell\rho_B\right]}{(\omega_0-\omega')}\\
&+&\mathrm{P.V.}\int_{0}^{\omega_\textrm{max}} d\omega' \frac{\mathrm{Tr}\left[B_k(-\omega') B_\ell\rho_B\right]}{(\omega_0+\omega')}\\
&=&\frac{1}{4}\mathrm{P.V.}\int_{0}^{\omega_\textrm{max}} d\omega' J(\omega')\left[\frac{\bar{n}(\omega')+1}{(\omega_0-\omega')}+\frac{\bar{n}(\omega')}{(\omega_0+\omega')}\right],
\end{eqnarray*}
and the rest of the elements are
\begin{eqnarray*}
S_{22}(\omega_0)&=&S_{11}(\omega_0),\\
S_{12}(\omega_0)&=&S_{21}^{\ast}(\omega_0)=\frac{i}{4}\mathrm{P.V.}\int_{0}^{\omega_\textrm{max}} d\omega' J(\omega')\left[\frac{\bar{n}(\omega')+1}{(\omega_0-\omega')}-\frac{\bar{n}(\omega')}{(\omega_0+\omega')}\right].\\
\end{eqnarray*}
For the shifts with $-\omega_0$:
\begin{eqnarray*}
S_{11}(-\omega_0)&=&S_{22}(-\omega_0)=\frac{-1}{4}\mathrm{P.V.}\int_{0}^{\omega_\textrm{max}} d\omega' J(\omega')\left[\frac{\bar{n}(\omega')+1}{(\omega_0+\omega')}+\frac{\bar{n}(\omega')}{(\omega_0-\omega')}\right],\\
S_{12}(-\omega_0)&=&S_{21}^{\ast}(-\omega_0)=\frac{-i}{4}\mathrm{P.V.}\int_{0}^{\omega_\textrm{max}} d\omega' J(\omega')\left[\frac{\bar{n}(\omega')+1}{(\omega_0+\omega')}+\frac{\bar{n}(\omega')}{(\omega_0-\omega')}\right].\\
\end{eqnarray*}
Therefore according to (\ref{Hlambshift}) the shift Hamiltonian is
\begin{eqnarray*}
H_{\mathrm{LS}}&=&\mathrm{P.V.}\int_{0}^{\omega_\textrm{max}} d\omega' \frac{J(\omega')\bar{n}(\omega')}{(\omega_0-\omega')}[\sigma_+,\sigma_-]+\mathrm{P.V.}\int_{0}^{\omega_\textrm{max}} d\omega' \frac{J(\omega')}{(\omega_0-\omega')}\sigma_+\sigma_-\\
&=&\mathrm{P.V.}\int_{0}^{\omega_\textrm{max}} d\omega' \frac{J(\omega')\bar{n}(\omega')}{(\omega_0-\omega')}\sigma_z+\mathrm{P.V.}\int_{0}^{\omega_\textrm{max}} d\omega' \frac{J(\omega')}{(\omega_0-\omega')}\frac{(\sigma_z+\mathds{1})}{2}\\
&\equiv&\left(\Delta'+\frac{\Delta}{2}\right)\sigma_z,
\end{eqnarray*}
where $\Delta'$ is the integral depending on $\bar{n}(\omega')$ which leads to a shift due to the presence of the thermal field (Stark-like shift), and $\Delta$ is the second integral independent of $\bar{n}(\omega')$ which lead to a Lamb-like shift effect. Of course the identity $\mathds{1}$ does not contribute to the dynamics as it commutes with any $\rho$, so we can forget it.

For the total equation we immediately obtain
\begin{eqnarray*}
\frac{d\rho_{\mathrm{sys}}(t)}{dt}&=&-i\left[\left(\frac{\omega_0+\Delta}{2}+\Delta'\right)\sigma_z,\rho_{\mathrm{sys}}(t)\right]\\
&+&\Gamma[\bar{n}(\omega_0)+1]\left[ \sigma_-\rho_{\mathrm{sys}}(t)\sigma_+-\frac{1}{2}\{\sigma_+\sigma_-,\rho_{\mathrm{sys}}(t)\}\right]\\
&+&\Gamma\bar{n}(\omega_0)\left[ \sigma_+\rho_{\mathrm{sys}}(t)\sigma_--\frac{1}{2}\{\sigma_-\sigma_+,\rho_{\mathrm{sys}}(t)\}\right],
\end{eqnarray*}
where $\Gamma=2\pi J(\omega_0)$.
\item \underline{Harmonic oscillator damped by a bath of harmonic oscillators}. This case is actually quite similar to the previous one, but the two-level system is now substituted by another harmonic oscillator:
\begin{eqnarray*}
H&=&H_{\mathrm{sys}}+H_{\mathrm{bath}}+V\\
&\equiv&\omega_0a^\dagger a+\int_0^{\omega_\textrm{max}}d\omega a^\dagger_\omega a_\omega+\int_0^{\omega_\textrm{max}}d\omega h(\omega)\left(a^\dagger a_\omega+aa_\omega^\dagger\right).
\end{eqnarray*}
The bath operators $B_1$ and $B_2$ are the same as above and the system operators are now
\begin{eqnarray*}
A_1&=&a^\dagger+a, \quad A_1(\omega_0)=a,\ A_1(-\omega_0)=a^\dagger,\\
A_2&=&i(a^\dagger-a), \quad A_2(\omega_0)=-ia,\ A_2(-\omega_0)=ia^\dagger.
\end{eqnarray*}
Thus by mimicking the previous steps we obtain
\begin{eqnarray*}
\frac{d\rho_{\mathrm{sys}}(t)}{dt}&=&-i\left[\left(\omega_0+\Delta\right)\sigma_z,\rho_{\mathrm{sys}}(t)\right]\\
&+&\Gamma[\bar{n}(\omega_0)+1]\left[ a\rho_{\mathrm{sys}}(t)a^\dagger-\frac{1}{2}\{a^\dagger a,\rho_{\mathrm{sys}}(t)\}\right]\\
&+&\Gamma\bar{n}(\omega_0)\left[a^\dagger \rho_{\mathrm{sys}}(t)a-\frac{1}{2}\{a a^\dagger,\rho_{\mathrm{sys}}(t)\}\right].
\end{eqnarray*}
Note in this case that, on one hand, the Stark-like shift $\Delta'$ does not contribute because of the commutation rule of $a$ and $a^\dagger$; and, on the other hand, the dimension of the system is infinite and the interaction with the environment is mediated by an unbounded operator. However as long as domain problems do not arise one expects that this is the correct approximation to the dynamics (and actually it is, see for instance \cite{Rivas09}).
\item \underline{Pure dephasing of a two-level system}. In this case the interaction commutes with the system Hamiltonian and so the populations of the initial system density matrix remain invariant,
\begin{eqnarray*}
H&=&H_{\mathrm{sys}}+H_{\mathrm{bath}}+V\\
&\equiv&\frac{\omega_0}{2}\sigma_z+\int_0^{\omega_\textrm{max}}d\omega a^\dagger_\omega a_\omega+\int_0^{\omega_\textrm{max}}d\omega h(\omega)\sigma_z\left(a_\omega+a_\omega^\dagger\right).
\end{eqnarray*}
This is an interesting case to explore because it involves some subtleties, but it is exactly solvable (see for example chapter 4 of \cite{BrPe02}, \cite{Zueco}, and references therein). The interaction Hamiltonian is decomposed as $V=A\otimes B$, where $A=A(\omega=0)=\sigma_z$ ($\sigma_z$ is eigenoperator of $[\sigma_z,\cdot]$ with zero eigenvalue!) and
\[
B=\int_{-\omega_\textrm{max}}^{\omega_\textrm{max}}d\omega B(\omega), \text{ with}\left\{
\begin{array}{l}
B(\omega)=h(\omega)a_\omega, \\
B(-\omega)=h(\omega) a_\omega^\dagger,
\end{array}
\right. \text{for }\omega>0.
\]
So some problem arises when we calculate the decay rates because $B(\omega=0)$ is not well-defined. The natural solution to this is is to understand $\omega=0$ as a limit in such a way that
\[
\gamma(0)=2\pi\lim_{\omega\rightarrow0}\tr[B(\omega)B\rho_B],
\]
but then there is a further problem. Depending on which side we approach to 0 the function $\tr[B(\omega)B\rho_B]$ takes the value $2\pi J(\omega)[\bar{n}(\omega)+1]$ for $\omega>0$ or $2\pi J(|\omega|)\bar{n}(|\omega|)$ for $\omega<0$. Thus the limit in principle would not be well-defined, except in the high temperature regime where $\bar{n}(|\omega|)\simeq \bar{n}(|\omega|)+1$. However both limits give the same result provided that one assumes $\lim_{\omega\rightarrow0}J(|\omega|)=0$. This is a natural physical assumption because a mode with 0 frequency does not have any energy and influence on the system. Thus if $\lim_{\omega\rightarrow0}J(|\omega|)=0$ we find
\[
\gamma(0)=2\pi\lim_{\omega\rightarrow0} J(|\omega|)\bar{n}(|\omega|).
\]
Actually, since $\bar{n}(|\omega|)$ goes to infinity as $1/|\omega|$ when $\omega$ decreases, if $J(|\omega|)$ does not tend to zero linearly in $\omega$ (this is called Ohmic-type spectral density \cite{Weiss08}) this constant $\gamma(0)$ will be either 0 or infinity. This fact restricts quite a lot the spectral densities which can be treated for the pure dephasing problem in the weak coupling framework with significant results (which is not a major problem because this model is exactly solvable as we have already pointed out). On the other hand, shifts do not occur because $H_{\mathrm{LS}}\propto\sigma_z\sigma_z=\mathds{1}$, and finally the master equations is
\[
\frac{d\rho_{\mathrm{sys}}(t)}{dt}=-i\left[\frac{\omega_0}{2}\sigma_z,\rho_{\mathrm{sys}}(t)\right]+\gamma(0)\left[\sigma_z\rho_{\mathrm{sys}}(t)\sigma_z-\rho_{\mathrm{sys}}(t)\right].
\]

\end{itemize}

\subsubsection{Some remarks on the secular approximation and positivity preserving requirement}
As we have seen, the weak coupling limit provides a rigorous mathematical procedure to derive Markovian master equations. Nonetheless, in solid state or chemical physics the interactions are typically stronger than for electromagnetic environments, and it is sometimes preferred to use the perturbative treatment without performing the secular approximation. However, this method jeopardizes the positivity of the dynamics \cite{DumckeandSpohn} (see also \cite{Whitney08,ZhaoChen02}) and several tricks have been proposed to heal this drawback. For example, one possibility is taking in consideration just the subset of the whole possible states of the system which remain positive in the evolution \cite{Silbey}, or describing the evolution by the inclusion of a ``slippage'' operator \cite{Gaspard}. These proposals can be useful just in some situations; in this regard it seems they do not work well for multipartite systems \cite{BenattiSlipagge1,BenattiSlipagge2}.

We have already explained in detail the requirement of the complete positivity for a universal dynamics. The option to proceed without it may be useful but risky, and actually the risk seems to be quite high as the positivity of the density matrix supports the whole consistency of the quantum theory. We believe the secular approximation should be used as it is mathematically correct in the weak coupling limit. In case of the failure of a completely positive semigroup to describe correctly the dynamics of the physical system, it is probably more appropriate the use of non-Markovian methods as the ``dynamical coarse graining'' (see section \ref{sectionDCG}) which preserves complete positivity.

\subsubsection{Failure of the assumption of factorized dynamics} \label{sectionFactorizedRev}

In some derivations of the weak coupling limit it is common to use the following argument. One makes the formal integration of the von Neumann equation in the interaction picture (\ref{Von-NeumannInt})
\[
\frac{d}{dt}\tilde{\rho}(t)=-i\alpha[\tilde{V}(t),\tilde{\rho}(t)],
\]
to give
\[
\tilde{\rho}(t)=\rho(0)-i\alpha\int_0^tds[\tilde{V}(s),\tilde{\rho}(s)].
\]
By iterating this equation twice one gets
\[
\tilde{\rho}(t)=\rho(0)-i\alpha\int_0^tds[\tilde{V}(t),\rho(0)]-\alpha^2\int_0^tds\int_0^sdu[\tilde{V}(s),[\tilde{V}(u),\tilde{\rho}(u)]].
\]
After taking partial trace, under the usual aforementioned assumptions of $\rho(0)=\rho_A(0)\otimes\rho_B(0)$, and $\tr_B\left[\tilde{V}(t)\rho_B(0)\right]=0$, this equation is simplified to
\[
\tilde{\rho}_A(t)=\rho_A(0)-\alpha^2\int_0^tds\int_0^sdu\tr_B[\tilde{V}(s),[\tilde{V}(u),\tilde{\rho}(u)]].
\]
Now, one argues that the system $B$ is typically much larger than $A$ and so, provided that the interaction is weak, the state of $B$ remains unperturbed by the presence of the $A$, and the whole state can be approximated as $\tilde{\rho}(u)\approx\tilde{\rho}_A(u)\otimes\rho_B(u)$. By making this substitution in the above equation one arrives at equation (\ref{weakcoupling2}) (modulo some small details which one can consult in most of the textbooks of the references), and henceforth proceed to derive the Markovian master equation without any mention to projection operators, etc.

This method can be considered as an effective model in order to obtain equation (\ref{weakcoupling2}), however it does not make any sense to assume that the physical state of the system factorizes for all times. Otherwise how would it be possible the interaction between both subsystems?. Moreover, why not making the substitution $\tilde{\rho}(u)\approx\tilde{\rho}_A(u)\otimes\rho_B(u)$ directly in the von Neumann equation (\ref{Von-NeumannInt}) instead of in its integrated and iterated version?.

The validity of the substitution $\tilde{\rho}(u)\approx\tilde{\rho}_A(u)\otimes\rho_B(u)$ is justified because of the factorized initial condition and the perturbative approach of the weak coupling limit. Of course we pointed out
previously
that the weak coupling approach makes sense only if the coupling is small and the environment has infinite degrees of freedom. This fits in the above argument about the size of $B$, but it can dismissed even by numerical simulations (see \cite{Rivas09}) that the real total state $\tilde{\rho}(u)$ is close to $\tilde{\rho}_A(u)\otimes\rho_B(u)$. It should be therefore be considered as an ansatz to arrive at the correct equation rather than a physical requirement on the evolution.

\subsubsection{Steady state properties} \label{sectionSteadyweak}
In this section we shall analyze the steady state properties of the Markovian master equation obtained in the weak coupling limit (\ref{weakcoupling4}). The most important result is that the thermal state of the system
\[
\rho_A^{\mathrm{th}}=\frac{e^{(-H_A/T)}}{\tr[e^{(-H_A/T)}]},
\]
with the same temperature T as the bath is a steady state. To see this one notes that the bath correlation functions satisfies the so-called Kubo-Martin-Schwinger (KMS) condition \cite{KMSpapers}, namely
\begin{eqnarray}\label{KMS}
\langle \tilde{B}_k(u) B_\ell \rangle&=&\tr\left[\rho_\mathrm{th}\tilde{B}_k(u) B_\ell\right]=\frac{1}{\tr\left(e^{-\beta H_A}\right)}\tr\left[e^{iH_B(u+i\beta)}B_k e^{-iH_Bu} B_\ell\right]\nonumber \\
&=&\frac{1}{\tr\left(e^{-\beta H_A}\right)}\tr\left[B_\ell e^{iH_B(u+i\beta)}B_k e^{-iH_B(u+i\beta)}e^{-\beta H_B}\right]\nonumber\\
&=&\tr\left[B_\ell e^{iH_B(u+i\beta)}B_k e^{-iH_B(u+i\beta)} \rho_\mathrm{th}\right]=\langle B_\ell \tilde{B}_k(u+i\beta) \rangle.
\end{eqnarray}
By decomposing $\tilde{B}_k$ in eigenoperators $B(\omega')$ of $[H_B,\cdot]$, we obtain, similarly to (\ref{explicitgamma}), the following relation between their Fourier transforms
\begin{eqnarray}
\gamma_{k\ell}(\omega)&=&\int_{-\infty}^{\infty}due^{i\omega u}\mathrm{Tr}\left[\tilde{B}_k(u) B_\ell\rho_\mathrm{th}\right]\nonumber \\
&=&\int_{-\infty}^{\infty}due^{i \omega u}\mathrm{Tr}\left[B_\ell \tilde{B}_k(u+i\beta)\rho_\mathrm{th}\right]\nonumber \\
&=&\int_{-a}^a d\omega'e^{\beta\omega'}\int_{-\infty}^{\infty}due^{i(\omega-\omega') u}\mathrm{Tr}\left[B_\ell B_k(\omega') \rho_\mathrm{th} \right]\nonumber \\
&=&2\pi\int_{-a}^a d\omega'e^{\beta\omega'}\delta(\omega-\omega')\mathrm{Tr}\left[B_\ell B_k(\omega') \rho_\mathrm{th}\right]\nonumber \\
&=&2\pi e^{\beta\omega}\mathrm{Tr}\left[B_\ell B_k(\omega)\rho_\mathrm{th}\right]=2\pi e^{\beta\omega}\left\{\mathrm{Tr}\left[B_k^\dagger(\omega) B_\ell \rho_\mathrm{th}\right]\right\}^\ast\nonumber\\
&=&2\pi e^{\beta\omega}\left\{\mathrm{Tr}\left[B_k(-\omega) B_\ell \rho_\mathrm{th}\right]\right\}^\ast\nonumber\\
&=&e^{\beta\omega}\gamma_{k\ell}^\ast(-\omega)=e^{\beta\omega}\gamma_{\ell k}(-\omega).
\end{eqnarray}
On the other hand, since $A(\omega')$ is an eigenoperator of $[H_A,\cdot]$, one easily obtains the relations
\begin{eqnarray}
\rho_A^{\mathrm{th}}A_k(\omega)&=&e^{\beta\omega}A_k(\omega)\rho_A^{\mathrm{th}},\\
\rho_A^{\mathrm{th}}A_k^\dagger(\omega)&=&e^{-\beta\omega}A_k^\dagger(\omega)\rho_A^{\mathrm{th}}.
\end{eqnarray}
Now one can check that $\frac{d\rho_A^{\mathrm{th}}}{dt}=0$ according to equation (\ref{weakcoupling4}). Indeed, it is obvious that $\rho_A^{\mathrm{th}}$ commutes with the Hamiltonian part, for the remaining one we have
\begin{eqnarray*}
\frac{d\rho_A^{\mathrm{th}}}{dt}&=&\sum_{\omega}\sum_{k,\ell}\gamma_{k\ell}(\omega)\left[ A_\ell(\omega)\rho_A^{\mathrm{th}}A_k^\dagger(\omega)-\frac{1}{2}\{A_k^\dagger(\omega)A_\ell(\omega),\rho_A^{\mathrm{th}}\}\right]\\
&=&\sum_{\omega}\sum_{k,\ell}\gamma_{k\ell}(\omega)\left[ e^{-\beta\omega}A_\ell(\omega)A_k^\dagger(\omega)\rho_A^{\mathrm{th}}-A_k^\dagger(\omega)A_\ell(\omega)\rho_A^{\mathrm{th}}\right]\\
&=&\sum_{\omega}\sum_{k,\ell}\left[ \gamma_{k\ell}(\omega)e^{-\beta\omega}A_\ell^\dagger(-\omega)A_k(-\omega)-\gamma_{k\ell}(\omega)A_k^\dagger(\omega)A_\ell(\omega)\right]\rho_A^{\mathrm{th}}\\
&=&\sum_{\omega}\sum_{k,\ell}\left[ \gamma_{\ell k}(-\omega)A_\ell^\dagger(-\omega)A_k(-\omega)-\gamma_{k\ell}(\omega)A_k^\dagger(\omega)A_\ell(\omega)\right]\rho_A^{\mathrm{th}}=0,
\end{eqnarray*}
as the sum in $\omega$ goes from $-\omega_{\textrm{max}}$ to $\omega_{\textrm{max}}$ for some maximum difference between energies $\omega_{\textrm{max}}$. This proves that $\rho_A^{\mathrm{th}}$ is a steady state. Of course this does not implies the convergency of any initial state to it (an example is the pure dephasing equation derived in section \ref{sectionExamples}). Sufficient conditions for that were given in the section \ref{sectionSteady}.

\subsection{Singular coupling limit} \label{sectionsingularcoupling}
As explained in section \ref{sectionNaka-Zwan}, the singular coupling limit is somehow the complementary situation to the weak coupling limit, since in this case we get $\tau_B\rightarrow0$ by making the correlation functions to approach a delta function. However, as a difference with the weak coupling limit, the singular coupling limit is not so useful in practice because as its own name denotes, it requires some ``singular'' situation. For that reason and for sake of comparison, we sketch here the ideas about this method from the concepts previously defined. For further details the reader will be referred to the literature.

There are several ways to implement the singular coupling limit, one consists in rescaling the total Hamiltonian as \cite{AlickiLendi87,BrPe02,Benatti05}
\[
H=H_A+\alpha^{-2}H_B+\alpha^{-1}V,
\]
and perform the ``singular limit'' by taking $\alpha\rightarrow0$. Starting from the equation (\ref{projectorPInt2}) under the assumptions of $\mathcal{P}\mathcal{V}(t)\mathcal{P}=0$ and $\rho(0)=\rho_A(0)\otimes\rho_B(0)$ we have
\begin{equation}\label{singularcoupling1}
\mathcal{P}\tilde{\rho}(t)=\mathcal{P}\rho(0)+\frac{1}{\alpha^2}\int_{0}^{t}ds\int_{0}^sdu\mathcal{P}\mathcal{V}(s)\mathcal{G}(s,u)\mathcal{Q}\mathcal{V}(u)\mathcal{P}\tilde{\rho}(u).
\end{equation}
For a moment, let us take a look to the expression at first order in the expansion of $\mathcal{G}(s,u)$:
\[
\mathcal{P}\tilde{\rho}(t)=\mathcal{P}\rho(0)+\frac{1}{\alpha^2}\int_{0}^{t}ds\int_{0}^sdu\mathcal{P}\mathcal{V}(s)\mathcal{V}(u)\mathcal{P}\tilde{\rho}(u)+\mathcal{O}(\alpha^{-3}),
\]
which we rewrite as
\[
\tilde{\rho}_A(t)=\rho_A(0)-\frac{1}{\alpha^2}\int_{0}^{t}ds\int_{0}^sdu\left[\tilde{V}(s)[\tilde{V}(u)\tilde{\rho}_A(u)\otimes\rho_B]\right]+\mathcal{O}(\alpha^{-3}).
\]
By introducing the general form of the interaction Eq. (\ref{Vdesc}),
\begin{eqnarray}
\tilde{\rho}_A(t)=\rho_A(0)&+&\int_{0}^{t}ds\int_{0}^sduC_{k\ell}(s,u)\left[\tilde{A}_\ell(u)\tilde{\rho}_A(u),\tilde{A}_k(s)\right]\nonumber\\
&+&C_{k\ell}^\ast(s,u)\left[\tilde{A}_k(s),\tilde{\rho}_A(u)\tilde{A}_\ell(u)\right]+\mathcal{O}(\alpha^{-3}),\label{singularcoupling2}
\end{eqnarray}
here the correlation functions are
\[
C_{k\ell}(s,u)=\frac{1}{\alpha^2}\tr[\tilde{B}_k(s)\tilde{B}_\ell(u)\rho_B].
\]
If we assume that $\rho_B$ is again some eigenstate of the free Hamiltonian we get
\begin{equation}\label{CorrSingular}
C_{k\ell}(s-u)=\frac{1}{\alpha^2}\tr[\tilde{B}_k(s-u)B_\ell\rho_B],
\end{equation}
and by using the eigendecomposition (\ref{Beigendecomp}) of $B_k$ and taking into account the factor $\alpha^{-2}$ in the free Hamiltonian $H_B$, we find
\[
C_{k\ell}(s-u)=\frac{1}{\alpha^2}\int_{-a}^ad\omega e^{-\frac{i\omega(s-u)}{\alpha^2}}\tr[B_k(\omega)B_\ell\rho_B].
\]
Because the proposition \ref{propAlfredo}, in the limit $\alpha\rightarrow0$ the above integral tends to zero as a square integrable function in $x=\alpha^{-2}(s-u)$. It implies that it tends to zero at least as $\alpha^2$ and so we expect this convergence to be faster than the prefactor $\frac{1}{\alpha^2}$ in the correlation functions. However in the peculiar case of $s=u$ the correlation functions go to infinity, so indeed in this limit they approach a delta function $C_{k\ell}(s-u)\propto\delta(s-u)$. Next, by making again the change of variable $u\rightarrow s-u$, equation (\ref{singularcoupling2}) reads
\begin{eqnarray*}
\tilde{\rho}_A(t)=\rho_A(0)&+&\int_{0}^{t}ds\int_{0}^sduC_{k\ell}(u)\left[\tilde{A}_\ell(s-u)\tilde{\rho}_A(s-u),\tilde{A}_k(s)\right]\\
&+&C_{k\ell}^\ast(u)\left[\tilde{A}_k(s),\tilde{\rho}_A(s-u)\tilde{A}_\ell(s-u)\right]+\mathcal{O}(\alpha^{-3}).
\end{eqnarray*}
Therefore in the limit of $\alpha\rightarrow0$, we can substitute $u$ by $0$ in the $\tilde{A}_\ell$ operators and extend the integration to infinite without introducing an unbounded error
\begin{eqnarray*}
\tilde{\rho}_A(t)=\rho_A(0)&+&\int_{0}^{t}ds\Gamma_{k\ell}\left[\tilde{A}_\ell(s)\tilde{\rho}_A(s),\tilde{A}_k(s)\right]\\
&+&\Gamma_{\ell k}^\ast\left[\tilde{A}_\ell(s),\tilde{\rho}_A(s)\tilde{A}_k(s)\right]+\mathcal{O}(\alpha^{-3}),
\end{eqnarray*}
where
\[
\Gamma_{k\ell}=\int_0^\infty duC_{k\ell}(u).
\]
We may split this matrix in Hermitian $\gamma_{k\ell}=\int_0^\infty duC_{k\ell}(u)$ and anti-Hermitian part $iS_{k\ell}$ as in equation (\ref{Gammadescomposition}), and differentiate to obtain
\begin{eqnarray*}
\frac{d\tilde{\rho}_A(t)}{dt}=&-&i[\tilde{H}_{\mathrm{LS}}(t),\tilde{\rho}(t)]\\
&+&\gamma_{k\ell}\left[\tilde{A}_\ell(t)\tilde{\rho}_A(t)\tilde{A}_k(t)-\frac{1}{2}\left\{\tilde{A}_k(t)\tilde{A}_\ell(t),\tilde{\rho}_A(t)\right\}\right]+\mathcal{O}(\alpha^{-3}),
\end{eqnarray*}
where $\tilde{H}_{\mathrm{LS}}(t)=\sum_{k,\ell}S_{k\ell}\tilde{A}_k(t)\tilde{A}_\ell(t)$. Since $\gamma_{k\ell}$ has the same expression as in (\ref{gammapequenha}) with $\omega=0$ it is always a positive semidefinite matrix and up to second order the evolution equation has the form of a Markovian master equation. However, from the former it is not clear what happens with the higher order terms; a complete treatment of them can be found in \cite{GoKo76,FrGo76,FrNoVe76}, where it is proven that they vanish. The idea behind this fact is that by using the property that for a Gaussian state (which $\rho_B$ is assumed to be) the correlation functions of any order can be written as a sum of products of two time correlation functions (\ref{CorrSingular}). Particularly odd order correlation functions vanish because of the condition $\mathcal{P}\mathcal{V}(t)\mathcal{P}=0$, and we obtain a product of deltas for even order. Actually the expansion is left only with those products in which some time arguments appear in ``overlapping order'' such as
\[
\delta(t_1-t_3)\delta(t_2-t_4),
\]
for $t_4\geq t_3\geq t_2\geq t_1$, and because the time-ordered integration in the expansion of $\mathcal{G}(s,u)$ they do not give any contribution.

Finally by coming back to Schr\"odinger picture we obtain that in the singular coupling limit
\begin{eqnarray*}
\frac{d \rho_A(t)}{dt}=&-&i[H_A+H_{\mathrm{LS}},\rho(t)]\\
&+&\gamma_{k\ell}\left[A_\ell\rho_A(t)A_k-\frac{1}{2}\left\{A_kA_\ell,\rho_A(t)\right\}\right],
\end{eqnarray*}
note that $H_{\mathrm{LS}}=\sum_{k,\ell}S_{k\ell}A_kA_\ell$ does not commute in general with $H_A$.

Despite the singular coupling limit is not very realistic, it is possible to find effective equations for the evolution in which self-adjoint operators $A_k$ also appear. For example under interactions with classic stochastic external fields \cite{GoKo76} or in the study of the continuous measurement processes \cite{BrPe02,CavesMilburn}.

\subsection{Extensions of the weak coupling limit}
Here we will discuss briefly some extensions of the weak coupling limit for interacting systems, the reader can consult further studies as \cite{Rivas09} for example.

\subsubsection{Weak coupling for multipartite systems}\label{sectionweakcomposite}
Consider two equal (for simplicity) quantum systems and their environment, such that the total Hamiltonian is given by
\[
H=H_{A1}+H_{A2}+H_E+\alpha V.
\]
Here $H_{A1}$ ($H_{A2}$) is the free Hamiltonian of the subsystem 1 (2), $H_E$ is the free Hamiltonian of the environment and $\alpha V$ denotes the interaction between the subsystems and the environment. This interaction term is assumed to have the form $V=\sum_kA_k^{(1)}\otimes B_k+A_k^{(2)}\otimes B_k$, where $A_k^{(1)}$ ($A_k^{(2)}$) are operators acting on the subsystem 1 (2), and $B_k$ on the environment. So that the weak coupling limit for the whole state of both subsystems $\rho_{12}$ leads to a Markovian master equation like
\begin{eqnarray}\label{weakcouplingcomposite1}
\frac{d\rho_{12}(t)}{dt}&=&-i[H_{A1}+H_{A2}+\alpha^2H_{\mathrm{LS}},\rho_{12}(t)] \\
&+&\alpha^2\sum_{\omega}\sum_{k,\ell}\gamma_{k\ell}(\omega)\left[ A_\ell(\omega)\rho_{12}(t)A_k^\dagger(\omega)-\frac{1}{2}\{A_k^\dagger(\omega)A_\ell(\omega),\rho_{12}(t)\}\right]\nonumber,
\end{eqnarray}
where $A_k(\omega)$ are linear combinations of $A_k^{(1)}$ and $A_k^{(2)}$ decomposed in eigenoperators of $[H_{A1}+H_{A2},\cdot]$. Of course the form of these operators depend on the form of the coupling. For instance, suppose that we have two identical environments acting locally on each subsystem $V=\sum_kA_k^{(1)}\otimes B_k^{(1)}+A_k^{(2)}\otimes B_k^{(2)}$, then we will obtain
\begin{eqnarray*}
\frac{d\rho_{12}(t)}{dt}&=&-i\sum_{j=1}^2[H_{A1}+H_{A2}+\alpha^2H_{\mathrm{LS}}^{(j)},\rho_{12}(t)]\nonumber \\
&+&\alpha^2\sum_{\omega}\sum_{k,\ell}\gamma_{k\ell}^{(j)}(\omega)\left[ A_\ell^{(j)}(\omega)\rho_{12}(t)A_k^{(j)\dagger}(\omega)-\frac{1}{2}\{A_k^{(j)\dagger}(\omega)A_\ell^{(j)}(\omega),\rho_{12}(t)\}\right].
\end{eqnarray*}

Consider now the case in which we perturb the free Hamiltonian of the subsystems by an interaction term between them of the form $\beta V_{12}$ (where $\beta$ accounts for the strength). Instead of using the interaction picture with respect to the free evolution $H_{A1}+H_{A2}+H_{B}$, the general strategy to deal with this problem is to consider the interaction picture including the interaction between subsystems $H_{A1}+H_{A2}+\beta V_{12}+H_{B}$, and then proceeds with the weak coupling method.

However if $\beta$ is small, one can follow a different route. First, taking the interaction picture with respect to $H_{A1}+H_{A2}+H_{B}$, the evolution equation for the global system is
\[ \frac{d}{dt}\tilde{\rho}(t)=-i\beta[\tilde{V}_{12}(t),\tilde{\rho}(t)]-i\alpha[\tilde{V}(t),\tilde{\rho}(t)]\equiv\beta \mathcal{V}_{12}(t)\tilde{\rho}(t)+\alpha\mathcal{V}(t)\tilde{\rho}(t).
\]
Similarly to section \ref{sectionNaka-Zwan}, we take projection operators
\begin{eqnarray}
\frac{d}{dt}\mathcal{P}\tilde{\rho}(t)&=\beta\mathcal{P}\mathcal{V}_{12}(t)\tilde{\rho}(t)+\alpha\mathcal{P}\mathcal{V}_{SB}(t)\tilde{\rho}(t),\label{projectorPcomposite}\\
\frac{d}{dt}\mathcal{Q}\tilde{\rho}(t)&=\beta\mathcal{Q}\mathcal{V}_{12}(t)\tilde{\rho}(t)+\alpha\mathcal{Q}\mathcal{V}_{SB}(t)\tilde{\rho}(t),
\end{eqnarray}
and find the formal solution to the second equation
\begin{eqnarray}\label{projectorQcomposite} \mathcal{Q}\tilde{\rho}(t)=\mathcal{G}(t,0)\mathcal{Q}\tilde{\rho}(0)+\beta\int_{0}^tds\mathcal{G}(t,s)\mathcal{Q}\mathcal{V}_{12}(s)\mathcal{P}\tilde{\rho}(s)\nonumber\\
+\alpha\int_{0}^tds\mathcal{G}(t,s)\mathcal{Q}\mathcal{V}(s)\mathcal{P}\tilde{\rho}(s),
\end{eqnarray}
where
\[
\mathcal{G}(t,s)=\mathcal{T}e^{\int_s^tdt'\mathcal{Q}[\beta\mathcal{V}_{12}(t')+\alpha\mathcal{V}(t')]}.
\]
Now the procedure is as follows, we introduce the identity $\mathds{1}=\mathcal{P}+\mathcal{Q}$ just in the last term of equation (\ref{projectorPcomposite}),
\[
\frac{d}{dt}\mathcal{P}\tilde{\rho}(t)=\beta\mathcal{P}\mathcal{V}_{12}(t)\tilde{\rho}(t)+\alpha\mathcal{P}\mathcal{V}(t)\mathcal{P}\tilde{\rho}(t)+\alpha\mathcal{P}\mathcal{V}(t)\mathcal{Q}\tilde{\rho}(t),
\]
whose formal integration yields
\begin{equation}\label{projectorPcompositeInt1}
\mathcal{P}\tilde{\rho}(t)=\mathcal{P}\rho(0)+\beta\int_{0}^{t}ds\mathcal{P}\mathcal{V}_{12}(s)\tilde{\rho}(s)+\alpha\int_{0}^{t}ds\mathcal{P}\mathcal{V}(s)\mathcal{Q}\tilde{\rho}(s),
\end{equation}
here we have used the condition $\mathcal{P}\mathcal{V}(t)\mathcal{P}=0$. Next we insert the formal solution (\ref{projectorQcomposite}) into the last term. By assuming again an initial factorized state ``subsystems$\otimes$enviroment'' ($\mathcal{Q}\rho(t_0)=0$) we find
\begin{eqnarray*}
\mathcal{P}\tilde{\rho}(t)=\mathcal{P}\rho(0)&+&\beta\int_{0}^{t}ds\mathcal{P}\mathcal{V}_{12}(s)\tilde{\rho}(s)\\
&+&\int_{0}^{t}ds\int_{0}^sdu\mathcal{K}_1(s,u)\mathcal{P}\tilde{\rho}(u)+\int_{0}^{t}ds\int_{0}^sdu\mathcal{K}_2(s,u)\mathcal{P}\tilde{\rho}(u),
\end{eqnarray*}
where the kernels are
\begin{eqnarray*}
\mathcal{K}_1(s,u)&=&\alpha\beta\mathcal{P}\mathcal{V}(s)\mathcal{G}(s,u)\mathcal{Q}\mathcal{V}_{12}(u)\mathcal{P}=0,\\
\mathcal{K}_2(s,u)&=&\alpha^2\mathcal{P}\mathcal{V}(s)\mathcal{G}(s,u)\mathcal{Q}\mathcal{V}(u)\mathcal{P}.
\end{eqnarray*}
The first one vanishes because $\mathcal{V}_{12}(s)$ commutes with $\mathcal{P}$
and $\mathcal{Q}\mathcal{P}=0$, and the second kernel up to second order in $\alpha$ and $\beta$ becomes
\[
\mathcal{K}_2(s,u)=\alpha^2\mathcal{P}\mathcal{V}(s)\mathcal{Q}\mathcal{V}(u)\mathcal{P}+\mathcal{O}(\alpha^3,\alpha^2\beta)=\alpha^2\mathcal{P}\mathcal{V}(s)\mathcal{V}(u)\mathcal{P}+\mathcal{O}(\alpha^3,\alpha^2\beta).
\]
Therefore the integrated equation of motion reads
\begin{eqnarray*}
\mathcal{P}\tilde{\rho}(t)=\mathcal{P}\rho(0)&+&\beta\int_{0}^{t}ds\mathcal{P}\mathcal{V}_{12}(s)\tilde{\rho}(s)\\
&+&\alpha^2\int_{0}^{t}ds\int_{0}^sdu\mathcal{P}\mathcal{V}(s)\mathcal{V}(u)\mathcal{P}\tilde{\rho}(u)+\mathcal{O}(\alpha^3,\alpha^2\beta).
\end{eqnarray*}
If we consider small intercoupling $\beta$ such that $\alpha\gtrsim\beta$, we can neglect the higher orders and the weak coupling procedure of this expression (c.f. section \ref{sectionweakcoupling}) will lead to the usual weak coupling generator with the Hamiltonian part $\beta V_{12}$ added. This is in Schr\"odinger picture
\begin{eqnarray}\label{weakcouplingcomposite2}
\frac{d\rho_{12}(t)}{dt}&=&-i[H_{A1}+H_{A2}+\beta V_{12}+\alpha^2H_{\mathrm{LS}},\rho_{12}(t)] \\
&+&\alpha^2\sum_{\omega}\sum_{k,\ell}\gamma_{k\ell}(\omega)\left[ A_\ell(\omega)\rho_{12}(t)A_k^\dagger(\omega)-\frac{1}{2}\{A_k^\dagger(\omega)A_\ell(\omega),\rho_{12}(t)\}\right]\nonumber.
\end{eqnarray}
Compare equations (\ref{weakcouplingcomposite1}) and (\ref{weakcouplingcomposite2}).

Surprisingly, this method may provide good results also for large $\beta$ under some conditions, see \cite{Rivas09}.

\subsubsection{Weak coupling under external driving}\label{sectionweakdriving}
Let us consider now a system $A$ subject to an external time-dependent perturbation $\beta H_\mathrm{ext}(t)$ and weakly coupled to some environment $B$, in such a way that the total Hamiltonian is
\[
H=H_A+\beta H_\mathrm{ext}(t)+H_B+\alpha V.
\]
In order to derive a Markovian master equation for this system we must take into account
of a couple of details. First, since the Hamiltonian is time-dependent the generator of the master equation will also be time-dependent,
\[
\frac{d\rho_A(t)}{dt}=\mathcal{L}_t\rho_A(t),
\]
whose solution defines a family of propagators $\mathcal{E}_{(t_2,t_1)}$ such that
\begin{eqnarray*}
\rho_S(t_2)&=&\mathcal{E}_{(t_2,t_1)}\rho_S(t_1),\\
\mathcal{E}_{(t_3,t_1)}&=&\mathcal{E}_{(t_3,t_2)}\mathcal{E}_{(t_2,t_1)}.
\end{eqnarray*}
This family may be contractive (i.e. Markovian inhomogeneous) or just eventually contractive (see section \ref{sectionEvolutionFamilies}). Secondly, there is an absence of rigorous methods to derive at a Markovian master equation (i.e. contractive family) in the weak coupling limit when the system Hamiltonian is time-dependent, with the exception of adiabatic regimes of external perturbations \cite{DaviesSpohn,AlickiHt}. Fortunately if the $H_\mathrm{ext}(t)$ is periodic in $t$, it is possible to obtain Markovian master equations, even though the complexity of the problem increases.

On one hand, if $\beta$ is small, in the spirit of the previous section 
we can expect that the introduction of $\beta H_\mathrm{ext}(t)$ just in the Hamiltonian part of the evolution would provide a good approximation to the motion,
\begin{eqnarray}\label{weakcouplingexternal1}
\frac{d\rho_A(t)}{dt}&=&-i[H_A+\beta H_\mathrm{ext}(t)+\alpha^2H_{\mathrm{LS}},\rho_A(t)] \\
&+&\alpha^2\sum_{\omega}\sum_{k,\ell}\gamma_{k\ell}(\omega)\left[ A_\ell(\omega)\rho_A(t)A_k^\dagger(\omega)-\frac{1}{2}\{A_k^\dagger(\omega)A_\ell(\omega),\rho_A(t)\}\right]\nonumber,
\end{eqnarray}
which has the form of a Markovian master equation (\ref{diffMarkov}).

On the other hand, for larger $\beta$ the only loophole seems to work in the interaction picture generated by the unitary propagator
\[
U(t_1,t_0)=\mathcal{T}e^{-i\int_{t_0}^{t_1}\left[H_A+\beta H_\mathrm{ext}(t')\right]dt'}.
\]
Taking $t_0=0$ without loss of generality, the time evolution equation for $\tilde{\rho}(t)=U^\dagger(t,0)\rho(t)U(t,0)$ is
\begin{equation}\label{ecPictureRara***}
\frac{d\tilde{\rho}(t)}{dt}=-i\alpha[\tilde{V}(t),\tilde{\rho}(t)].
\end{equation}
Following an analogous procedure which was used in section \ref{sectionweakcoupling} for time-independent generators, one immediately deals with the problem that it is not clear whether there exists something similar to the eigenoperator decomposition of $\tilde{V}(t)=U^\dagger(t,0)VU(t,0)$ as in (\ref{Veigendesc}). Note however that since the dependency of the operator $H_\mathrm{ext}(t)$ with $t$ is periodical, by differentiation of $U(t,0)$ one obtains a differential problem for each $A_k$ in (\ref{Vdesc})
\begin{equation}
\frac{d\tilde{A}_k(t)}{dt}=-i[\tilde{A}_k(t),H_A+\beta H_\mathrm{ext}(t)].
\end{equation}
with periodic terms. This kind of equations can be studied with the well-established Floquet theory (see for example \cite{Chicone,Ince}). Particularly it is possible to predict if its solution is a periodic function. In such a case, the operator in the new picture will have a formal decomposition similar to (\ref{Veigendesc}), $\tilde{A}_k(t)=\sum_\omega A_k(\omega)e^{i\omega t}$, where now $A_k(\omega)$ are some operators which do not necessarily satisfy a concrete eigenvalue problem. However note that the importance of such a decomposition is that the operators $A_k(\omega)$ are time-independent. This allows us to follow a similar procedure to that used for time-independent Hamiltonians with the secular approximation, and we will obtain 
\begin{eqnarray*}
\frac{d\tilde{\rho}_A(t)}{dt}=&-&i[\alpha^2H_{\mathrm{LS}},\tilde{\rho}_A(t)]\\
&+&\alpha^2\sum_{\omega}\sum_{k,\ell}\gamma_{k\ell}(\omega)\left[ A_\ell(\omega)\tilde{\rho}_A(t)A_k^\dagger(\omega)-\frac{1}{2}\{A_k^\dagger(\omega)A_\ell(\omega),\tilde{\rho}_A(t)\}\right],
\end{eqnarray*}
where $H_{\mathrm{LS}}=\sum_\omega\sum_{k,\ell}S_{k\ell}(\omega)A_k^\dagger(\omega)A_\ell(\omega)$ of course. By coming back to Schr\"odinger picture,
\begin{eqnarray*}
\frac{d\tilde{\rho}_A(t)}{dt}=&-&i[H_A+\beta H_\mathrm{ext}(t)+\alpha^2H_{\mathrm{LS}}(t),\tilde{\rho}_A(t)]\\
&+&\alpha^2\sum_{\omega}\sum_{k,\ell}\gamma_{k\ell}(\omega)\left[ A_\ell(\omega,t)\tilde{\rho}_A(t)A_k^\dagger(\omega,t)-\frac{1}{2}\{A_k^\dagger(\omega,t)A_\ell(\omega,t),\tilde{\rho}_A(t)\}\right];
\end{eqnarray*}
here $A_k(\omega,t)=U(t,0)A_k(\omega)U^\dagger(t,0)$. For these manipulations it is very useful to decompose (when possible) the unitary propagator as product of non-commuting exponential operators $U(t,0)=e^{-iH_0t}e^{-i\tilde{H}t}$, where the transformation $e^{iH_0t}H(t)e^{-iH_0t}=\tilde{H}$ removes the explicit time-dependence of the Hamiltonian.

A more systematic approach to this method based on the Floquet theory can be found for instance in the work of H. -P. Breuer and F. Petruccione \cite{BrPe02,BrPe97}, and in \cite{Hanggi99}.

There is a very important thing to note here. As a difference to the usual interaction picture with respect to time-independent Hamiltonians, the change of picture given by a time-dependent Hamiltonian can affect the ``strength'' $\alpha$ of $V$ in a non-trivial way. If there exist some time, $t=t_c$ such that $\|H_\mathrm{ext}(t_c)\|=\infty$, the validity of the weak coupling limit may be jeopardized. For example one may obtain things like ``$1/0$'' inside of $\tilde{V}(t)$.

\section{Microscopic description: non-Markovian case} \label{sectionMicrosnonMarkov}
The microscopic description of non-Markovian dynamics is much more involved than the Markovian one, and developing efficient methods to deal with it is actually an active area of research nowadays. One of the reasons for this difficulty is that the algebraic properties like contraction semigroups and/or evolution families are lost and more complicated structures arise (e.g. evolution families which are only eventually contractive). We have already mentioned the advantages of being consistent with the UDM description if we start from a global product state. According to section \ref{sectionContractions} if the evolution family $\mathcal{E}_{(t,s)}$ describes the evolution from $s$ to $t$ and the initial global product state occurs at time $t_0$, then it must be eventually contractive from that point, i.e. $\|\mathcal{E}_{(t,t_0)}\|\leq1$ for $t\geq t_0$. However it is possible that $\|\mathcal{E}_{(t,s)}\|>1$ for arbitrary $t$ and $s$. Up to date, there is not a clear mathematical characterization of evolution families which are only eventually contractive and this makes it difficult to check whether some concrete model is consistent.

There are non-Markovian cases where the environment is made of a few degrees of freedom (see for example \cite{Dobrovitski,Paladino,Oxtoby}, and references therein). This allows one to make efficient numerical simulations of the whole closed system and after that, just tracing out the environmental degrees of freedom, we obtain the desired evolution of the open system.

However when the environment is large, a numerical simulation is completely inefficient, unless the number of parameters involved in the evolution can be reduced (for example in case of Gaussian states \cite{Rivas09} or using numerical renormalization group procedures \cite{Bulla,Javi}). For remaining situations, the choice of a particular technique really depends on the context. For concreteness, we sketch just a few of them in the following sections.

\subsection{Integro-differential models}
As we have seen in section \ref{sectionNaka-Zwan} the exact evolution of a reduced system can be formally written as an integro-differential equation (\ref{Naka-Zwan}),
\[
\frac{d}{dt}\mathcal{P}\tilde{\rho}(t)=\int_{0}^tdu\mathcal{K}(t,u)\mathcal{P}\tilde{\rho}(u),
\]
with a generally very complicated kernel
\[
\mathcal{K}(t,u)=\mathcal{P}\mathcal{V}(t)\mathcal{G}(t,u)\mathcal{Q}\mathcal{V}(u).
\]
Typically this kernel cannot be written in a closed form and, furthermore, the integro-differential equation is not easily solvable. On the other hand, it is possible to perform a perturbative expansion of the kernel on the strength of $\mathcal{V}(t)$ and proceed further than in the Markovian case, where the series is shortened at the first non-trivial order (for an example of this see \cite{Vacchini-Breuer}). Of course, by making this the complete positivity of the reduced dynamics is not usually conserved; and one expects a similar accuracy for the time-convolutionless method (at the same perturbative order) explained in the next section, which is simpler to solve.

As an alternative to the exact integro-differential equation, several phenomenological approaches which reduce to Markovian evolution in some limits has been proposed. For example, if $\mathcal{L}$ is the generator of a Markovian semigroup, an integro-differential equation can be formulated as
\begin{equation}
\frac{d\rho(t)}{dt}=\int_{0}^{t}k(t-t')\mathcal{L}[\rho(t')],
\end{equation}
where $k(t-t')$ is some function which accounts for ``memory'' effects, and simplifies the complicated Nakajima-Zwanzig kernel. Of course, in the limit $k(t-t')\rightarrow\delta(t-t')$ the Markovian evolution is recovered. The possible choices for this kernel which assure that the solution, $\mathcal{E}_{(t,0)}$, is a UDM have been studied in several works \cite{Barnertt01,Daffer04,Breuer-Vacchini}, being the exponential ansazt $k(t-t')\propto ge^{g(t-t')}$ the most popular.

Another interesting phenomenological integro-differential model is the so-called post-Markovian master equation, proposed by A. Shabani and D. A. Lidar \cite{ShabaniLidar},
\begin{equation}
\frac{d\rho(t)}{dt}=\mathcal{L}\int_{0}^{t}k(t-t')e^{\mathcal{L}(t-t')}[\rho(t')].
\end{equation}
It also approaches the Markovian master equation when $k(t-t')\rightarrow\delta(t-t')$. Further studies about the application of these models and the conditions to get UDMs can be found in \cite{MaPe06,Maniscalco07,Huang-Yi08}.

Apart from these references, the structure of kernels which preserve complete positivity has been analyzed in \cite{Koss-Reb}.

We should note however that, it has been recently questioned whether these phenomenological integro-differential equations reproduce some typical features of non-Markovian dynamics \cite{Mazzola}.

\subsection{Time-convolutionless forms} \label{sectionTCL}
The convolution in integro-differential models is usually an undesirable characteristic because, even when the equation can be formulated, the methods to find its solution are complicated. The idea of the time-convolutionless (TCL) forms consist of removing this convolution from the evolution equation to end up with an ordinary differential equation. Behind this technique lies the property already expressed in section \ref{sectionTemporalContinuity} that a UDM, $\mathcal{E}_{(t,t_0)}$, can be expressed as the solution of the differential equation (\ref{TCLForm})
\[
\frac{d\rho_A(t)}{dt}=\mathcal{L}_t[\rho_A(t)],
\]
with generator $\mathcal{L}_t=\frac{d\mathcal{E}_{(t,t_0)}}{dt}\mathcal{E}^{-1}_{(t,t_0)}$. In general this equation generates an evolution family that is not contractive, i.e. $\|\mathcal{E}_{(t_2,t_1)}\|>1$, for some $t_1$ and $t_2$ and so non-Markovian. However, it generates an eventually contractive family from $t_0$, $\|\mathcal{E}_{(t,t_0)}\|\leq1$, as the map starting from $t_0$, $\mathcal{E}_{(t,t_0)}$, is usually assumed to be a UDM by hypothesis.

To formulate this class of equations from a microscopic model is not simple, although it can been done exactly in some cases such as, for a damped harmonic oscillator \cite{HuPazZhang,Grabert97} or for spontaneous emission of a two level atom \cite{Garraway,BrPe99}. In general, the recipe is resorting to a perturbative approach again. Details of this treatment are carefully explained in the original work \cite{Shibata77} or in some textbooks (see \cite{Fain02,BrPe02} for instance). The key ingredient is to insert the backward unitary operator $\tilde{\mathcal{U}}(s,t)$ where $s\leq t$ defined in (\ref{backwardU}), $\tilde{\rho}(s)=\tilde{\mathcal{U}}(s,t)\tilde{\rho}(t)$, into the formal solution of the $\mathcal{Q}\tilde{\rho}(t)$, Eq. (\ref{Qformalsolution}),
\[
\mathcal{Q}\tilde{\rho}(t)=\mathcal{G}(t,0)\mathcal{Q}\tilde{\rho}(0)+\int_{0}^tds\mathcal{G}(t,s)\mathcal{Q}\mathcal{V}(s)\mathcal{P}\tilde{\mathcal{U}}(s,t)\tilde{\rho}(t).
\]
Now, by introducing the identity $\mathds{1}=\mathcal{P}+\mathcal{Q}$ between $\tilde{\mathcal{U}}(s,t)$ and $\tilde{\rho}(t)$, and after considering a factorized initial condition we obtain
\[
[\mathds{1}-\Upsilon(t)]\mathcal{Q}\tilde{\rho}(t)=\Upsilon(t)\mathcal{P}\tilde{\rho}(t),
\]
here
\[
\Upsilon(t)=\alpha\int_{0}^tds\mathcal{G}(t,s)\mathcal{Q}\mathcal{V}(s)\mathcal{P}\tilde{\mathcal{U}}(s,t),
\]
where we have rewritten $\mathcal{V}$ as $\alpha\mathcal{V}$.

Next, note that $\Upsilon(0)=0$ and $\Upsilon(t)|_{\alpha=0}=0$, so that the operator $[\mathds{1}-\Upsilon(t)]$ may be inverted at least for not too large couplings and/or too large $t$. In such a case we would obtain
\[
\mathcal{Q}\tilde{\rho}(t)=[\mathds{1}-\Upsilon(t)]^{-1}\Upsilon(t)\mathcal{P}\tilde{\rho}(t).
\]
Introducing this in the second member of equation (\ref{InterP2}), the first term vanishes as $\mathcal{P}\mathcal{V}(t)\mathcal{P}=0$, and we arrive at the differential equation
\[
\frac{d\mathcal{P}\rho(t)}{dt}=\hat{\mathcal{L}}_t[\mathcal{P}\rho(t)],
\]
where
\[
\hat{\mathcal{L}}_t=\alpha\mathcal{P}\mathcal{V}(t)[\mathds{1}-\Upsilon(t)]^{-1}\Upsilon(t)\mathcal{P}.
\]
We can formally write the inverse operator as a geometric series
\[
[\mathds{1}-\Upsilon(t)]^{-1}=\sum_{n=0}^\infty \Upsilon^n(t),
\]
and then the generator as
\[
\hat{\mathcal{L}}_t=\alpha\mathcal{P}\mathcal{V}(t)\sum_{n=1}^\infty\Upsilon^n(t)\mathcal{P}.
\]
On the other hand, $\Upsilon^n(t)$ can be expanded in powers of $\alpha$ by introducing the expansions of $\mathcal{G}(t,s)$ and $\tilde{\mathcal{U}}(s,t)$, thus we can construct successive approximations to $\hat{\mathcal{L}}_t$ in powers of $\alpha$,
\[
\hat{\mathcal{L}}_t=\sum_{n=2}\alpha^n\hat{\mathcal{L}}^{(n)}_t.
\]
Here the sum starts from $n=2$ because at lowest order $\Upsilon(t)\simeq\alpha\int_{0}^tds\mathcal{Q}\mathcal{V}(s)\mathcal{P}$, and
\[
\hat{\mathcal{L}}^{(2)}_t=\int_{0}^tds\mathcal{P}\mathcal{V}(t)\mathcal{Q}\mathcal{V}(s)\mathcal{P}=\int_{0}^tds\mathcal{P}\mathcal{V}(t)\mathcal{V}(s)\mathcal{P},
\]
which leads actually to the differentiated version of equation (\ref{weakcoupling2}), as expected at lowest order. After a little bit of algebra, the next term in the expansion turns out to be
\[
\hat{\mathcal{L}}^{(3)}_t=\int_{0}^tds\int_s^tdt_1\mathcal{P}\mathcal{V}(t)\mathcal{V}(t_1)\mathcal{V}(s)\mathcal{P},
\]
and further terms can be calculated in the same fashion.

In addition, D. Chru\'sci\'nski and A. Kossakowski \cite{Chru-Koss10} have recently shown the equivalence between integro-differential equations and TCL equations. They connect both generators by means of a Laplace transformation showing some kind of complementarity between both methods. That is, if the integral kernel is simple, the TCL generator is highly singular, and viceversa.

On the other hand, given a TCL equation and an initial condition at time $t_0$, the problem of knowing whether its solution $\mathcal{E}_{(t,t_0)}$ is a UDM, i.e. a completely positive map, is an open question for the non-Markovian regime. Note that the perturbative expansion at finite order of the TCL generator can generally break the complete positivity (in the same fashion as a finite perturbative expansion of a Hamiltonian dynamics breaks the unitary character). So the classification of TCL generators which are just eventually contractive from a given time $t_0$ is an important open problem in the field of dynamics of open quantum systems. Partial solutions have been given in the qubit case \cite{Hall} and in the case of commutative generators $\mathcal{L}_t$ \cite{Commutative,Chru-KossPre}. For this latter one $[\mathcal{L}_{t_1},\mathcal{L}_{t_2}]=0$, for all $t_1$ and $t_2$, so that the time-ordering operator has no effect and the solution is given by
\[
\mathcal{E}_{(t,t_0)}=e^{\int_{t_0}^t\mathcal{L}_{t'}dt'}.
\]
Therefore, certainly $\mathcal{E}_{(t,t_0)}$ is completely positive if and only if $\int_{t_0}^t\mathcal{L}_{t'}dt'$ has the standard form (\ref{generatorMarkov}) for every final $t$.

\subsection{Dynamical coarse graining method} \label{sectionDCG}
The dynamical coarse-graining method was recently proposed by G. Schaller and T. Brandes \cite{SchallerBrandes1,SchallerBrandes2}. Thiswas suggested as away to avoid the secular approximation under weak coupling while keeping the complete positivity in the dynamics. Later on, this method has been used in some other works \cite{BenattiDCG1,BenattiDCG2}. Here we briefly explain the main idea and results.

Consider the total Hamiltonian (\ref{totalH}) and the usual product initial state $\rho(0)=\rho_A(0)\otimes\rho_B$, where $\rho_B$ is a stationary state of the environment. In the interaction picture the reduced state at time $t$ is
\begin{equation}\label{coarsegrained1}
\tilde{\rho}_A(t)=\tr_B\left[U(t,0)\rho_A(0)\otimes\rho_B U^\dagger(t,0)\right],
\end{equation}
where
\[
U(t,0)=\mathcal{T}e^{-i\alpha\int_0^t\tilde{V}(t')dt'},
\]
is the unitary propagator, and $\alpha$ again accounts for the strength of the interaction. Next we introduce the expansion of the propagator up to the first non-trivial order in equation (\ref{coarsegrained1})
\[
\tilde{\rho}_A(t)=\rho_A(0)-\frac{\alpha^2}{2}\mathcal{T}\int_0^tdt_1\int_0^tdt_2\tr_B\left[\tilde{V}(t_1),\left[\tilde{V}(t_2),\rho_A(0)\otimes\rho_B\right]\right]+\mathcal{O}(\alpha^3),
\]
here we have assumed again that the first order vanishes $\tr[\tilde{V}(t)\rho_B]=0$ like in (\ref{ceroprimermomento}). Let us take a closer look to structure of the double time-ordered integral
\begin{eqnarray*}
&\mathcal{T}&\int_0^tdt_1\int_0^tdt_2\tr_B\left[\tilde{V}(t_1),\left[\tilde{V}(t_2),\rho_A(0)\otimes\rho_B\right]\right]\\
&=&\int_0^tdt_1\int_0^tdt_2\theta(t_1-t_2)\tr_B\left[\tilde{V}(t_1),\left[\tilde{V}(t_2),\rho_A(0)\otimes\rho_B\right]\right]\\
&+&\int_0^tdt_1\int_0^tdt_2\theta(t_2-t_1)\tr_B\left[\tilde{V}(t_2),\left[\tilde{V}(t_1),\rho_A(0)\otimes\rho_B\right]\right]\\
&\equiv&\mathcal{L}^{(2)}_1[\rho_A(0)]+\mathcal{L}^{(2)}_2[\rho_A(0)]+\mathcal{L}^{(2)}_3[\rho_A(0)],
\end{eqnarray*}
where by expanding the double commutators we have found three kind of terms, $\mathcal{L}^{(2)}_1$, $\mathcal{L}^{(2)}_2$ and $\mathcal{L}^{(2)}_3$. The first one appears twice, it is
\begin{eqnarray*}
\frac{\mathcal{L}^{(2)}_1}{2}[\rho_A(0)]&=&-\int_0^tdt_1\int_0^tdt_2\theta(t_1-t_2)\tr_B\left[\tilde{V}(t_1)\rho_A(0)\otimes\rho_B\tilde{V}(t_2)\right]\\
&\mbox{}&+\theta(t_2-t_1)\tr_B\left[\tilde{V}(t_1)\rho_A(0)\otimes\rho_B\tilde{V}(t_2)\right]\\
&=&-\int_0^tdt_1\int_0^tdt_2[\theta(t_1-t_2)+\theta(t_2-t_1)]\\
&\mbox{}&\tr_B\left[\tilde{V}(t_1)\rho_A(0)\otimes\rho_B\tilde{V}(t_2)\right]\\
&=&-\int_0^tdt_1\int_0^tdt_2\tr_B\left[\tilde{V}(t_1)\rho_A(0)\otimes\rho_B\tilde{V}(t_2)\right],
\end{eqnarray*}
because obviously the step functions cancel each other. Taking in account the other analogous term (that is the corresponding to interchange $t_1\leftrightarrow t_2$ in the double commutators) we can write the total contribution as
\[
\mathcal{L}^{(2)}_1[\rho_A(0)]=-2\tr_B\left[\Lambda\rho_A(0)\otimes\rho_B\Lambda\right],
\]
where $\Lambda=\int_0^t\tilde{V}(t')dt'$.

The second term goes like
\begin{eqnarray*}
\mathcal{L}^{(2)}_2[\rho_A(0)]&=&\int_0^tdt_1\int_0^tdt_2\theta(t_1-t_2)\tr_B\left[\tilde{V}(t_1)\tilde{V}(t_2)\rho_A(0)\otimes\rho_B\right]\\
&\mbox{}&+\theta(t_2-t_1)\tr_B\left[\tilde{V}(t_2)\tilde{V}(t_1)\rho_A(0)\otimes\rho_B\right]\\
&=&\int_0^tdt_1\int_0^tdt_2[\theta(t_1-t_2)+\theta(t_2-t_1)]\\
&\mbox{}&\tr_B\left[\tilde{V}(t_2)\tilde{V}(t_1)\rho_A(0)\otimes\rho_B\right]\\
&\mbox{}&+\theta(t_1-t_2)\tr_B\left\{[\tilde{V}(t_1),\tilde{V}(t_2)]\rho_A(0)\otimes\rho_B\right\}\\
&=&\tr_B\left[\Lambda^2\rho_A(0)\otimes\rho_B\right]\\
&\mbox{}&+\int_0^tdt_1\int_0^tdt_2\theta(t_1-t_2)\tr_B\left\{[\tilde{V}(t_1),\tilde{V}(t_2)]\rho_A(0)\otimes\rho_B\right\}.
\end{eqnarray*}
And similarly the remaining term can be expressed as
\begin{eqnarray*}
\mathcal{L}^{(2)}_3[\rho_A(0)]&=&\tr_B\left[\rho_A(0)\otimes\rho_B\Lambda^2\right]\\
&\mbox{}&-\int_0^tdt_1\int_0^tdt_2\theta(t_1-t_2)\tr_B\left\{\rho_A(0)\otimes\rho_B[\tilde{V}(t_1),\tilde{V}(t_2)]\right\}.
\end{eqnarray*}
Thus, everything together gives $\tilde{\rho}_A(t)\equiv\rho_A(0)+\mathcal{L}^t[\rho_A(0)]+\mathcal{O}(\alpha^3)$ with
\begin{eqnarray*}
\mathcal{L}^t[\rho_A(0)]&=&-\frac{\alpha^2}{2}\left(\mathcal{L}^{(2)}_1[\rho_A(0)]+\mathcal{L}^{(2)}_2[\rho_A(0)]+\mathcal{L}^{(2)}_3[\rho_A(0)]\right)\\
&\equiv&-i[H_{\mathrm{LS}}^t,\rho_A(0)]+\mathcal{D}^t[\rho_A(0)].
\end{eqnarray*}
Here the self-adjoint operator $H_{\mathrm{LS}}^t$ is
\[
H_{\mathrm{LS}}^t=\frac{\alpha^2}{2i}\int_0^tdt_1\int_0^tdt_2\theta(t_1-t_2)\tr_E\left\{[\tilde{V}(t_1),\tilde{V}(t_2)]\rho_B\right\},
\]
and the ``dissipative'' part is given by
\[
\mathcal{D}^t[\rho_A(0)]=\alpha^2\tr_B\left[\Lambda\rho_A(0)\otimes\rho_B\Lambda-\frac{1}{2}\left\{\Lambda^2,\rho_A(0)\otimes\rho_B\right\}\right].
\]
By using the spectral decomposition of $\rho_B$ it is immediate to check that for any fixed $t$, $\mathcal{L}^t$ has the Kossakowski-Lindblad form (\ref{generatorMarkov}).

Since at first order
\begin{equation} \label{coarsegrained2}
\tilde{\rho}_A(t)=(\mathds{1}+\mathcal{L}^t)\rho_A(0)+\mathcal{O}(\alpha^3)\simeq e^{\mathcal{L}^t}\rho_A(0),
\end{equation}
the ``dynamical coarse graining method'' consists of introducing a ``coarse graining time'' $\tau$ and defining a family of generators of semigroups by
\[
\bar{\mathcal{L}}^\tau\equiv\frac{\mathcal{L}^\tau}{\tau}.
\]
And the corresponding family of differential equations
\[
\tilde{\rho}_A^\tau(t)=\bar{\mathcal{L}}^\tau\tilde{\rho}_A^\tau(t),
\]
which solution
\begin{equation}\label{coarsegrained3}
\tilde{\rho}_A^\tau(t)=e^{t\bar{\mathcal{L}}^\tau}\rho_A(0),
\end{equation}
will obviously coincide with (\ref{coarsegrained3}) up to second order when taking $\tau=t$, this is $\rho_A^{\tau=t}(t)=\rho_A(t)$.

Therefore we have found a family of semigroups (\ref{coarsegrained2}) which have the Kossakowski-Lindblad form for every $\tau$ and give us the solution to the evolution for a particular value of $\tau$ ($\tau=t$). This implies that complete positivity is preserved as we desire. In addition, one may want to write this evolution as a TCL equation, i.e. to find the generator $\mathcal{L}_t$ such that the solution of
\[
\frac{d}{dt}\rho_A(t)=\mathcal{L}_t[\rho_A(t)]
\]
fulfills $\rho_A(t)=\rho_A^t(t)$. That is given as in (\ref{TCLForm}),
\[
\mathcal{L}_t=\left[\frac{d}{dt}e^{\bar{\mathcal{L}}^tt}\right]e^{-\bar{\mathcal{L}}^tt}=\left[\frac{d}{dt}e^{\mathcal{L}^t}\right]e^{-\mathcal{L}^t}.
\]
Particularly for large times $t\rightarrow\infty$, this generator $\mathcal{L}_t$ approach the usual weak coupling generator (\ref{weakcoupling4}) (in the interaction picture). This is expected because the upper limit of the integrals in the perturbative approach will approach to infinity and the non-secular terms $e^{i(\omega'-\omega)t}$ will disappear for large $t$ (because of proposition \ref{propAlfredo}). This is actually equivalent to what we obtain by taking the limit $\alpha\rightarrow0$ in the rescaled time. Details are found in \cite{SchallerBrandes1}.

\section{Conclusion}\label{conclusions}
We have reviewed in some detail the mathematical structure and the main physical features of the dynamics of open quantum systems, and presented the general properties of the dynamical maps underlying open system dynamics. In addition, we have analyzed some mathematical properties of quantum Markovian processes, such as their differential form or their steady states. After that we have focused our attention on the microscopic derivations of dynamical equations. First of all we have analyzed the Markovian case, recalling the weak and singular coupling procedures, as well as some of their properties; and secondly, we have discussed briefly some methods which go beyond Markovian evolutions.

Several interesting topics have not been tackled in this article. Examples are influence functional techniques \cite{BrPe02,Brandes,Weiss08,FeymannVernon,CaldLegg}, stochastic Schr\"odinger equations \cite{Carmichael99,GardinerZoller04,Gisin,Diosi,PlenioKnight,Ines1}, or important results such as the quantum regression theorem, which deals with the dynamics of multi-time time correlation functions \cite{Carmichael99,Brandes,GardinerZoller04,Lax,Ines2}. Moreover, recently several groups have addressed the important problem of introducing a measure to quantify the non-Markovian character of quantum evolutions \cite{Cubbit,BreuerFinlandeses,NoMarkovianidad,CPSun,UshaDevi,EAndersson,chinoemail,Darek-Andrzej-Angel}. However, despite the introductory nature of this work, we hope that it will be helpful to clarify certain concepts and
derivation procedures of dynamical equations in open quantum systems, providing a useful basis to study further results as well as helping to unify the concepts and methods employed by different communities.

\subsection*{Acknowledgments}
We thank Ram\'on Aguado, Fatih Bayazit, Tobias Brandes, Heinz-Peter Breuer, Daniel Burgarth, Alex Chin, Dariusz Chru\'sci\'nski, Animesh Datta, Alberto Galindo, Pinja Haikka, Andrzej Kossakowski, Alfredo Luis, Sabrina Maniscalco, Laura Mazzola, Jyrki Piilo, Miguel \'Angel Rodr\'iguez, David Salgado, Robert Silbey, Herbert Spohn, Francesco Ticozzi, Shashank Virmani and Michael Wolf for fruitful discussions. And we are specially grateful to Bruce Christianson, Adolfo del Campo, In\'es de Vega, Alexandra Liguori, and Martin Plenio for careful reading and further comments. This work was supported by the STREP project CORNER, the Integrated project Q-ESSENCE, the project QUITEMAD S2009-ESP-1594 of the Consejer\'{\i}a de Educaci\'{o}n de la Comunidad de Madrid and MICINN FIS2009-10061.

\end{document}